\newtheorem{theorem}{Theorem}[section]
\newtheorem*{theorem*}{Theorem}
\newtheorem{definition}[theorem]{Definition}
\newtheorem{lemma}[theorem]{Lemma}
\newtheorem*{lemma*}{Lemma}
\newtheorem{claim}[theorem]{Claim}
\newtheorem{corr}[theorem]{Corollary}
\newtheorem{conj}[theorem]{Conjecture}
\newtheorem{defn}{Definition}[section]
\newtheorem{remark}{Remark}[section]
\newcommand{\Qone}{\textcolor{blue}{\textsf{Category 1}}}
\newcommand{\Qtwo}{\textcolor{olive}{\textsf{Category 2}}}
\newcommand{\Qthree}{\textcolor{yellow}{\textsf{Category 3}}}
\newcommand{\Qfour}{\textcolor{red}{\textsf{Category 4}}}
\newcommand{\cP}{\mathcal{P}}
\newcommand{\w}{\bf{w}}
\newcommand{\pr}{\mathbb{P}}
\newcommand{\M}{\mathcal{M}}
\newcommand{\footremember}[2]{%
    \footnote{#2}
    \newcounter{#1}
    \setcounter{#1}{\value{footnote}}%
}
\title{Differentially Private Data Release on Graphs: Inefficiencies and Unfairness}
\author{Ferdinando Fioretto\footremember{alley}{University of Virginia. Email: fioretto@virginia.edu} 
\and Diptangshu Sen\footremember{trailer}{Georgia Institute of Technology. Email: dsen30@gatech.edu (Primary student author)}
\and Juba Ziani\footremember{somethingelse}{Georgia Institute of Technology. Email: jziani3@gatech.edu}
}
\date{\today}
\begin{document}

\maketitle
\begin{abstract}
Networks are crucial components of many sectors, including telecommunications, healthcare, finance, energy, and transportation.
The information carried in many such networks often contains sensitive user data, such as location data for commuters and packet data for online users. Therefore, when considering data release for networks, one must ensure that data release mechanisms do not leak excessive information about individual users, quantified in a precise mathematical sense. Differential Privacy (DP) is the widely accepted, formal, state-of-the-art technique, which has found use in a variety of real-life settings including the 2020 U.S. Census, Apple users' device data, or Google's location data, to name a few. 

Yet, the use of differential privacy comes with new challenges, as the noise added for privacy introduces inaccuracies or biases. Such biases are unavoidable for any ``reasonable'' privacy technique; the issue, however, is that DP techniques can also \emph{distribute these biases disproportionately across different populations, inducing fairness issues.} The goal of this paper is to characterize the impact of differential privacy on bias and unfairness in the context of releasing information about networks, taking a departure from previous work which has studied these effects in the context of private population counts release (such as in the U.S. Census). To this end, we consider a \emph{network release problem} where the network structure is known to all, but the \emph{weights} on edges must be released privately. We consider the impact of this private release on a simple downstream decision-making task run by a third-party, which is to find the \emph{shortest path} between any two pairs of nodes and recommend the best route to users. 
This setting is of highly practical relevance, mirroring scenarios in transportation networks, where preserving privacy while providing accurate routing information is crucial.
Our work provides theoretical foundations and empirical evidence into the bias and unfairness arising due to privacy in these networked decision problems.
\end{abstract}

\newpage
\section{Introduction}\label{sec:intro}
Networks underlie many crucial application domains, such as telecommunications, social networks, energy grids, infrastructure, and transportation. Understanding their properties is, therefore, crucial, and there is often a need for publishing network information to serve a multitude of purposes including but not limited to navigation and routing (transportation and computer networks), predictive network maintenance (computer and infrastructure networks), understanding (mis-)information propagation (social networks), for research and development purposes (e.g., energy grids), or to inform public policy. 

However the release of network data poses a key challenge since it often contains sensitive information and needs to be used and released carefully. For example, releasing data about energy and infrastructure can provide malicious entities insights into system vulnerabilities; data from social network and telecommunication can expose personal information about individuals' preferences, social interactions, and activities; transportation data can inadvertently reveal sensitive personal details like home addresses, healthcare-related visits, and other personal information allowing targeting of workers in high-security and confidential sectors~\cite{nyt1,nyt2}.  

Therefore, when releasing network data, protecting potentially sensitive information is crucial. To this end, \emph{Differential Privacy} \cite{dwork:06} has emerged as the leading paradigm for preserving individual-level privacy in aggregate-level data release. Notably, this privacy framework has been adopted in various deployments, including the 2020 U.S.~Census ~\cite{us_census}, Apple's device data collection and federated learning frameworks ~\cite{apple_dp}, and Google's location data and maps services ~\cite{google_dp}. 

In a nutshell, differential privacy relies on noise addition on the outputs of a computation to provide strong privacy guarantees. However, while this process ensures that the amount of sensitive information that can be ``leaked'' remain bounded, the added noise can introduce biases and inaccuracies, potentially impacting the reliability of the data. 
While these biases are a natural consequence of any privacy-preserving method, a concerning issue with DP is that it can distribute errors and biases \emph{unevenly} across different groups, leading to concerns about fairness. 

Our work investigates the implications of DP on bias and fairness in network data release, focusing on routing recommendations. This constitutes a departure from previous research that primarily centered on the release of population histograms (e.g., in the U.S.~Census) absent such network structure. 
Specifically, we examine the common scenario where the network structure is known but the edge weights need to be released privately. Our analysis shows how these perturbations influence tasks such as computing the shortest path and recommending optimal routes. Figure~\ref{fig:privacy_model} presents an overview of our privacy model and data release, which we introduce in more detail in Section \ref{sec:model}.

\begin{figure}[!t]
      \centering
      \includegraphics[width=0.9\textwidth]{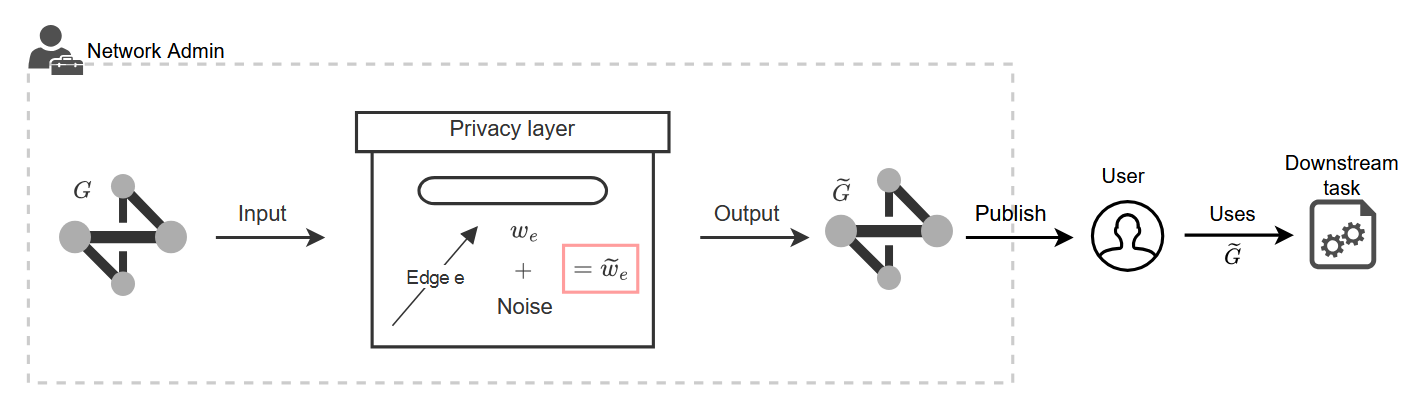}
      \caption{Schematic of privacy model: The network administrator privatizes graph $G$ by adding calibrated noise to each edge weight $w_e$ and publishes the privatized graph $\widetilde G$ with perturbed edge weights $\tilde w_e$. Users then use $\widetilde G$ to run downstream optimization tasks, such as shortest path computations.}
      \label{fig:privacy_model}
\end{figure}

Our work offers both theoretical insights and practical evidence on how differential privacy can introduce bias and unfairness in network-related decisions. Identifying these impacts is a first step towards developing more equitable and effective privacy-preserving techniques for network data. Beyond our characterizations of unfairness and biases due to DP, the paper also provides some understanding of how the network's structure affects unfairness and how different network structures are more or less robust to this unfairness, providing initial guidance toward network-design-oriented mitigation techniques.

\smallskip \noindent \textbf{Summary of contributions.} The main contributions of our work are as follows: 
\begin{enumerate}
    \item 
    We propose a model for differentially private network data release, assuming common knowledge of graph topology but requiring protection of sensitive edge weights through calibrated noise addition. This setup is detailed in
    Section~\ref{sec:model}. 
    \item  We investigate the bias and unfairness effects of using private (noisy) graph data to solve downstream optimization problems -- particularly, the problem of computing shortest paths on the graph and recommending best routes to users. To the best of our knowledge, we are the first who seek to understand the tradeoff between privacy and fairness in the context of private graph data release. 
    \item In Section~\ref{sec:theory}, we develop a theoretical framework that explains how DP-induced biases could disproportionately affect certain groups, particularly through the mechanics of noise accumulation over different path lengths and the availability of alternate routes. 
    \item Finally, in Section~\ref{sec:exp}, details extensive simulations conducted on diverse network topologies to demonstrate how privacy-related disruptions can vary by network type. This analysis also identifies network structures that are inherently more resilient to privacy-induced biases.
\end{enumerate}



\section{Related Work}
Observations that algorithms can mimic and amplify biases in data have resulted in a whole new research area that has focused on defining, analyzing, and mitigating unfairness (for surveys and summaries of the fairness literature, please refer to~\citep{barocas2023fairness,survey1,survey2}). 
The source of the observed unfairness has often been attributed to either i) data properties 
or ii) different aspects of the model's properties. 
For example, imbalance in groups' size is commonly argued to create disparities in the task's performance \cite{mehrabi2021survey}. It has also been shown that constraining the model's hypothesis space to satisfy privacy \cite{bagdasaryan2019differential,Fioretto:NeurIPS21a}, sparsity \cite{hooker2019selective,hooker2020characterising,TFKN:neurips22}, or robustness \cite{xu2021robust,nanda2021fairness,TZFvH:ijcai24} can result in disparate outcomes. 

Particularly relevant to our work is the study of the disparate impacts caused by privacy-preserving algorithms, which has recently seen several important developments \cite{FHZ:ijcai22}. Much of this line of research, similarly to our work, focuses on \emph{differential privacy}~\cite{dwork:06,dwork2014algorithmic} as the formal notion of privacy leading to unfairness.
In particular, in the context of learning tasks, \citet{ekstrand:18} raise questions about the trade-offs involved between privacy and fairness. Subsequently, \citet{cummings:19} study the trade-offs arising between differential privacy and equal opportunity, a fairness notion requiring a classifier to produce equal true positive rates across different groups. They show that there exists no classifier that simultaneously achieves $(\epsilon,0)$-DP, satisfies equal opportunity, and has accuracy better than a constant classifier. This development has risen the question of whether one can practically build fair models while retaining sensitive information private, which culminated in a variety of proposals, including \citep{jagielski:18,mozannar2020fair,tran2020differentially,Fioretto:ArXiv22d,Fioretto:NeurIPS21a,TF:ijcai23,TZFvH:ijcai23,TFvH:aaai21}. 

In the context of private data release (which involves revealing a full, privatized version of a dataset as opposed to simply releasing targeted statistics), \citet{pujol:20} were the first to show, empirically, that decision tasks made using DP datasets may disproportionately affect some groups of individuals over others. They noticed that the use of DP census data to allocate funds to school district produces unbalanced allocation errors, with some school districts systematically receiving more (or less) than what warranted. These observations were then attributed theoretically to two main factors: (1) the ``shape'' of the decision problem  \cite{TFHY:ijcai21} and (2) the presence of non-negativity constraints in post-processing steps \cite{ZHF:aaai21,ZFH:ijcai22}. 

To the best of the authors knowledge, no other work has observed nor studied the tension between privacy and fairness in downstream tasks performed on differentially-privately released \emph{network} data. Directly related to our work,~\citet{sealfon2016shortest} and ~\citet{chen2023differentially} do study the problem of differentially privately computing shortest paths, but i) they do not study the problem of releasing a private version of the entire network, just shortest path statistics, and ii) do not concern themselves with bias and fairness. Our paper thus builds on the body of work at the intersection of privacy and fairness and provides an analysis for the unfairness in a new context involving potentially complex network structures. 

\section{Preliminaries: Differential Privacy}\label{sec:prelim}
Differential privacy (DP)~\cite{dwork:06,dwork2014algorithmic} represents the forefront of techniques designed to safeguard individual data privacy. DP introduces a conceptual framework wherein two hypothetical scenarios are considered for each individual; these scenarios differ solely in the presence or absence of an individual’s data. The principle of differential privacy mandates that an adversary should not be able to reliably discern between these two scenarios based solely on the output distributions of a computation. In essence, the precise data value of an individual exerts a minimal influence on the computation’s result, thereby obscuring any single data point from being inferred with significant certainty. 

\noindent\textbf{Formal definition.} 
Formally, consider a mechanism $\mathcal{M}$ that operates on a dataset $x$ to derive a specific property $\mathcal{M}(x)$. For a dataset comprising $n$ individuals, we represent $x$ as a vector $(x_1, \ldots, x_n)$, where $x_i$ corresponds to the data associated with the $i$-th individual. We begin by defining the concept of neighboring datasets, which is fundamental in the context of DP:
\begin{definition}[Neighboring datasets]
Two datasets $x$ and $x’$ are said to be neighboring if they differ solely by the data of a single individual. That is, there exists an index $j \in [n]$ such that $x_j \neq x_j’$, while $x_i = x_i’$ for all $i \neq j$.
\end{definition}

Differential privacy, as informally described above, requires that the outputs of mechanism $\M$ exhibit minimal variability when applied to any two neighboring datasets, $x$ and $x’$. The formal criterion for this requirement is articulated as follows:
\begin{definition}[$(\varepsilon,\delta)$-differential privacy]
Let $\varepsilon,~\delta > 0$. A randomized algorithm $\M$ satisfies $(\varepsilon,\delta)$-differential privacy if, for any set of outcomes $O$ in the range of $\M$, the following inequality holds, for all neighboring databases $x, x'$:
\[
\Pr \left[\M(x) \in O \right] \leq \exp(\varepsilon) \Pr \left[\M(x') \in O \right] + \delta.
\]
\end{definition}
The parameter $\varepsilon$ plays a key role in controlling the level of privacy provided by the mechanism on each individual. As $\varepsilon$ decreases, the privacy constraint becomes increasingly stringent, enhancing individual privacy protection. Specifically, as $\varepsilon \to 0$, differential privacy requires that $\Pr \left[\M(x) = o \right]$ approaches $\Pr \left[\M(x') = o \right]$; i.e., the outcome of the mechanism becomes independent of the input data and thus perfectly preserves privacy (but, most likely, also provides no utility). Conversely, as $\varepsilon$ approaches $\infty$, the privacy constraint is trivially satisfied, effectively offering no privacy safeguard.

The underlying mechanism adopted by algorithms satisfying differential privacy involves the addition of noise to computations that interact with the original data. This noise injection is designed around the concept of the \emph{sensitivity of a function}, which is formally defined as follows:
\begin{align*}
\Delta f =  \max&~~\Vert f(x) - f(x') \Vert
\\\text{s.t.}&~~x,~x'~~\text{are neighbors}.
\end{align*}
Here, $f$ represents a query or computation applied to the data, and $\Delta f$ quantifies the maximum potential change in the function’s output across two neighboring databases. The sensitivity of $f$ is a key concept; a lower sensitivity indicates minor changes between outputs for neighboring databases, simplifying the task of masking these differences with noise. Consequently, lesser noise is required to achieve privacy. Notably, if the sensitivity is zero, $f$ effectively behaves as a constant function, and no noise is necessary to preserve privacy.

Numerical queries, which output a real number, can be made differentially private by adding calibrated noise to their true output values. For a given function $f$:
\begin{lemma*}[The Gaussian mechanism]
The Gaussian mechanism, defined as $\M(f,x,\varepsilon) = f(x) + Z$ where $Z \sim \mathcal{N}\left(0,\sqrt{2 \ln(1.25/\delta)} \cdot \Delta f / \varepsilon \right)$ is $(\varepsilon,\delta)$-differentially private. 
\end{lemma*}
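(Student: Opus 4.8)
The plan is to analyze the \emph{privacy loss random variable} and reduce the $(\varepsilon,\delta)$-DP guarantee to a one-sided Gaussian tail bound. Writing $p_x$ and $p_{x'}$ for the densities of $\M(f,x,\varepsilon)$ and $\M(f,x',\varepsilon)$, I would first establish the standard reduction: it suffices to show that, with the output $o$ drawn according to $p_x$, the privacy loss $L(o) := \ln\bigl(p_x(o)/p_{x'}(o)\bigr)$ satisfies $\Pr_{o \sim p_x}[\,L(o) > \varepsilon\,] \le \delta$. Indeed, splitting any outcome set $O$ into the region where $L \le \varepsilon$ (on which $p_x \le e^{\varepsilon} p_{x'}$ pointwise, so that probabilities under $p_x$ are dominated by $e^\varepsilon$ times those under $p_{x'}$) and its complement (which carries probability at most $\delta$ under $p_x$) immediately yields $\Pr[\M(x) \in O] \le e^{\varepsilon}\Pr[\M(x') \in O] + \delta$.

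The second step is to compute $L$ explicitly and observe that it is itself Gaussian. Let $v = f(x) - f(x')$, so that $\|v\| \le \Delta f$ by the definition of sensitivity, and write the output as $o = f(x) + Z$ with $Z \sim \mathcal{N}(0,\sigma^2 I)$ for $\sigma = \sqrt{2\ln(1.25/\delta)}\,\Delta f/\varepsilon$. A direct expansion of the ratio of Gaussian densities gives
\[
L = \frac{\|Z + v\|^2 - \|Z\|^2}{2\sigma^2} = \frac{2\langle Z, v\rangle + \|v\|^2}{2\sigma^2}.
\]
Since $\langle Z, v\rangle \sim \mathcal{N}(0,\sigma^2\|v\|^2)$, the privacy loss $L$ is a one-dimensional Gaussian with mean $\|v\|^2/(2\sigma^2)$ and variance $\|v\|^2/\sigma^2$. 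The event $\{L > \varepsilon\}$ is therefore equivalent to $\{\langle Z, v\rangle > \sigma^2\varepsilon - \|v\|^2/2\}$, a one-sided Gaussian tail event.

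The final step is to bound this tail by $\delta$. Normalizing $\langle Z,v\rangle/(\sigma\|v\|)$ to a standard normal reduces the event to $\{\,\mathcal{N}(0,1) > t\,\}$ with threshold $t = \sigma\varepsilon/\|v\| - \|v\|/(2\sigma)$; since this threshold is decreasing in $\|v\|$, the worst case is $\|v\| = \Delta f$, and one may substitute $\sigma = \sqrt{2\ln(1.25/\delta)}\,\Delta f/\varepsilon$ directly. I would then invoke the elementary estimate $\Pr[\mathcal{N}(0,1) > t] \le \tfrac{1}{t\sqrt{2\pi}}e^{-t^2/2}$ and verify that, for this choice of $\sigma$, the resulting bound is at most $\delta$. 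I expect this tail estimate to be the main obstacle: obtaining the clean constant $1.25$ requires the customary regularity assumption $\varepsilon \le 1$ together with a somewhat delicate handling of the quadratic exponent in $t$, where one must discard a lower-order term and check that the dominant factor $e^{-t^2/2}$ absorbs the remaining polynomial prefactor. The reduction of the first step and the density computation of the second are otherwise routine.
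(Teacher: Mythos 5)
The paper does not prove this lemma---it is quoted as a standard background result from the differential privacy literature (Dwork and Roth)---and your proposal is precisely the canonical argument from that source: reduce $(\varepsilon,\delta)$-DP to a tail bound on the privacy loss random variable, observe that the loss is itself Gaussian with mean $\|v\|^2/(2\sigma^2)$ and variance $\|v\|^2/\sigma^2$, and close with a one-sided Gaussian tail estimate. The outline is correct, including your observation that the constant $1.25$ only emerges under the hypothesis $\varepsilon \le 1$ (a hypothesis the paper's statement omits, as it also writes $\mathcal{N}(0,\cdot)$ with the standard deviation rather than the variance in the second slot, which you have read charitably and correctly).
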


In practice, the magnitude of noise introduced to preserve privacy is inversely related to the $\varepsilon$ parameter. Lower $\varepsilon$ values are associated to the addition of more noise, which in turn enhances the privacy guarantees. This inverse relationship underscores a fundamental trade-off in differential privacy: increasing privacy strength typically results in a reduction of the utility of the output due to the greater noise level. {\em This paper will focus on understanding how this reduction in utility may be disproportional distributed among different individuals}.

\noindent\textbf{Post-processing invariance.} Differential privacy satisfies several important properties \cite{dwork2014algorithmic}. In particular DP is resistant to post-processing manipulations. Informally, this property states that any data-independent post-processing step applied solely to the output of a differentially private mechanism does not compromise its privacy guarantees. More formally:

\begin{theorem*}[\citet{dwork2014algorithmic}]
Let $\M$ be a randomized algorithm that is $(\varepsilon,\delta)$-differentially private. Consider $f$, an arbitrary randomized function from the range of $\mathcal{M}$ to $\mathbb{R}$. The composite function $f \circ \mathcal{M}$ retains the $(\varepsilon, \delta)$-differential privacy properties of $\mathcal{M}$.
\end{theorem*}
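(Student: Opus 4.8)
The plan is to establish the statement in two stages: first for deterministic post-processing maps, and then to lift the result to an arbitrary randomized $f$ by writing the latter as a mixture of deterministic maps. The guiding observation throughout is that $(\varepsilon,\delta)$-differential privacy is a statement about the probabilities of output events, so it suffices to control $\Pr[f(\M(x)) \in O]$ for every measurable set $O$ in the range of $f$, simultaneously for $x$ and any neighbor $x'$.

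First I would treat a fixed deterministic map $g$ from the range of $\M$ to $\mathbb{R}$. Given any measurable $O \subseteq \mathbb{R}$, define its preimage $T = g^{-1}(O)$, which is a measurable subset of the range of $\M$ and hence a legitimate event for $\M$. The crucial identity is $\Pr[g(\M(x)) \in O] = \Pr[\M(x) \in T]$, valid for both $x$ and $x'$. Applying the differential privacy guarantee of $\M$ to the event $T$ then gives
\[
\Pr[g(\M(x)) \in O] = \Pr[\M(x) \in T] \le e^{\varepsilon}\,\Pr[\M(x') \in T] + \delta = e^{\varepsilon}\,\Pr[g(\M(x')) \in O] + \delta,
\]
which is exactly $(\varepsilon,\delta)$-differential privacy for $g \circ \M$.

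To handle a genuinely randomized $f$, I would model its internal randomness by an auxiliary seed $t$ drawn from some distribution $\mu$ independently of $\M$, so that conditioned on $t$ the map $f$ acts as a deterministic function $f_t$. By independence, the deterministic case applies pointwise: for each fixed $t$ we have $\Pr[f_t(\M(x)) \in O] \le e^{\varepsilon}\Pr[f_t(\M(x')) \in O] + \delta$. Integrating both sides against $\mu$ and using Fubini's theorem to exchange the integral over $t$ with the probability over $\M$'s randomness yields
\[
\Pr[f(\M(x)) \in O] \le e^{\varepsilon}\,\Pr[f(\M(x')) \in O] + \delta,
\]
completing the argument.

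The main obstacle is not algebraic but measure-theoretic: one must justify that $f$ admits a representation as a mixture of deterministic maps driven by a seed independent of $\M$, that all relevant preimages are measurable events, and that the integration step is licensed by Fubini. For discrete ranges these points are immediate, while in the continuous setting they require a careful specification of the underlying probability spaces. What makes the mixture step go through, once measurability is secured, is precisely that the defining DP inequality is affine in the output probability, so it is preserved under averaging over the post-processing randomness.
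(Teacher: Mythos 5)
Your proof is correct. Note that the paper itself offers no proof of this statement: it is quoted as a known preliminary and attributed to \citet{dwork2014algorithmic}, so the only meaningful comparison is with the standard reference argument, and yours is exactly that argument --- reduce to deterministic post-processing via preimages, then decompose a randomized $f$ into a mixture of deterministic maps and use that the $(\varepsilon,\delta)$ inequality is affine in the output probabilities, hence closed under averaging. Your added care about measurability of preimages and the Fubini step is a reasonable refinement of the textbook sketch rather than a different route.
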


\section{Model: Settings and Goals}\label{sec:model}
We consider the problem of \emph{differentially private graph data release}.  Formally, let $G = (V,E,\w)$ be a weighted graph with vertex set $V$, edge set $E$, and weights $\w: E \rightarrow \mathbb{R}_{\geq 0}$. 
For each edge $e \in E$, $w(e)$ is used to denote the its weight, here used to  represent the ``time'' or ``cost'' it takes to traverse it. 
Without loss of generality, we consider connected graphs $G$ in which any two nodes are reachable from each other. Importantly, in this work we consider weights $\w$ that are functions of sensitive user data and whose values must be protected. For instance, the weights might represent traffic congestion based on commuter locations or the strength of private social relationships in a network. We write $w(e) = f_e(x_1,\ldots,x_n)$ where $(x_1,\ldots,x_n)$ denotes sensitive information, such as geographic locations of users $1$ through $n$.

\paragraph{Differential privacy graph release model.} Consider a network administrator who wishes to release information about a weighted graph $G = (V, E, \w)$ to a third party. To preserve the privacy of underlying data, the administrator generates a graph $\widetilde{G} = (V, E, \tilde{\w})$ where the structure of nodes and edges remains unchanged, but the edge weights $\tilde{\w}$ are altered to ensure differential privacy. This modified graph, termed the privatized or publicized graph, retains the publicly available network topology of $G$ while safeguarding sensitive weight information through differential privacy techniques.

The administrator uses the \emph{Gaussian mechanism}, described in Section~\ref{sec:prelim}, to release weights $\tilde{\w}$; formally, for each $e \in E$, 
\begin{align}\label{eq:noisy_weight}
\tilde{w}(e) = \max \left(0, w(e) + Z(e) \right),
\end{align}
where $Z(e) \sim \mathcal{N}(0,\sigma^2)$ is a centered Gaussian random variable. The application of the $\max$ function ensures that all reported weights remain non-negative, adhering to the post-processing immunity of differential privacy, as outlined in the Preliminaries\footnote{This step retains differential privacy, per the post-processing guarantees discussed earlier}. 
When the sensitivity of function $f_e(\cdot)$ in users' data is bounded by $\Delta f$ for all $e \in E$, the released graph guarantees the $(\varepsilon,\delta)$-differential privacy of the edge weights for any $(\varepsilon,\delta)$ satisfying $\sigma = \sqrt{2 \ln(1.25/\delta)} \cdot \Delta f / \varepsilon$. The higher the value of $\sigma$, the \emph{better} the privacy guarantee. In this paper, we will focus on $\sigma$ as our main parameter controlling the level of noise and privacy, and refer the reader to this model section to relate the choice of $\sigma$ to a formal differential privacy guarantee.

\begin{remark}[Motivating Example]
The study of the shortest path problem provides a compelling context for our study. A notable real-world application is the private release of traffic data on road networks. Services like Google Maps leverage crowd-sourcing to gather live location data from thousands of users, enabling the system to assess traffic conditions, predict commute times, and suggest optimal routes in real-time~\citep{google_maps}. Numerous other organizations also collect and disseminate extensive user data to third parties, aiming to enhance understanding of traffic patterns and congestion levels. However, the use of such sensitive data raises significant privacy concerns~\citep{google_privacy}, necessitating robust privacy-preserving mechanisms. Differential privacy is, in this setting, a widely adopted tool for private graph data release.
\end{remark}

\paragraph{Impact of differential privacy on bias and fairness.} As the introduction of noise for privacy and the subsequent post-processing step in $\widetilde{G}$, which ensures non-negative edge weights, can introduce inaccuracies and biases in statistical and optimization tasks performed on the publicized, privatized graph. In this paper we aim to {\bf (1)} characterize such bias both theoretically and experimentally, and {\bf (2)} to understand \emph{unfairness} in how different segments of the network may  be \emph{disparately} affected by this bias. Our analysis focuses primarily on the disparities in how users, experiencing varying commute times through the network, are impacted by these modifications.

In most of the paper, we fix the task of interest to be a \emph{shortest-path computation} task. Let $\cP_{ij}$ be the set of paths between any two vertices $i,~j \in V$. The \emph{length} of a path $P \in \cP_{ij}$ is given by $w_G(P) = \sum_{e \in P} w(e)$. The \emph{shortest path} between nodes $x$ and $y$ is given by 
\[
P^*_{ij} = \arg\min_{P \in \cP_{ij}} w_G(P) = \arg\min_{P \in \cP_{ij}} \sum_{e \in P} w(e).
\]
Our goal is to evaluate the extent to which differential privacy mechanisms, when applied to graph $G$ to produce graph $\widetilde{G}$, impact this computation. In the privatized graph $\widetilde{G}$, the \emph{perceived} shortest path is computed as:
\[
\tilde{P}_{ij} = \arg\min_{P \in \cP_{ij}} w_{\widetilde{G}}(P) = \arg\min_{P \in \cP_{ij}} \sum_{e \in P} \tilde{w}(e).
\]
We note that $\widetilde{G}$ serves as a basis for the shortest path computations and route recommendation, the actual cost incurred by a user that decides to take path $\tilde{P}_{ij}$ corresponds to the weights from the \emph{original} graph $G$. Therefore, our evaluation metric is based on $w_G(\tilde{P}_{ij}) = \sum_{e \in \tilde{P}_{ij}} w(e)$, as highlighted in Figure~\ref{fig:eval_model}, and the \emph{realized bias}\footnote{We use the term ``realized bias'' here to highlight that there is, indeed, bias; Section~\ref{sec:results} shows that the error we make is always non-negative and not centered around 0.} or \emph{error} of the shortest path computation is given by 
\[ 
B_{ij}(\tilde P_{ij}) = \sum_{e \in \tilde{P}_{ij}} w(e) - \sum_{e \in P_{ij}^*} w(e).
\]
Given the stochastic nature of $\tilde{\w}$, the \emph{perceived} shortest path $\widetilde{P}_{ij}$ is subject to variability. Therefore, it is useful to also define the \emph{(expected) bias} of the shortest path computation as follows: 
\begin{align}\label{eq:exp_bias}
   \mathbb{E}[B_{ij}] = \mathbb{E}_{\tilde{\w}} \left[ \sum_{e \in \tilde{P}_{ij}} w(e) - \sum_{e \in P_{ij}^*} w(e)   \right].
\end{align}
\begin{figure}[!ht]
      \centering
      \includegraphics[width=0.8\textwidth]{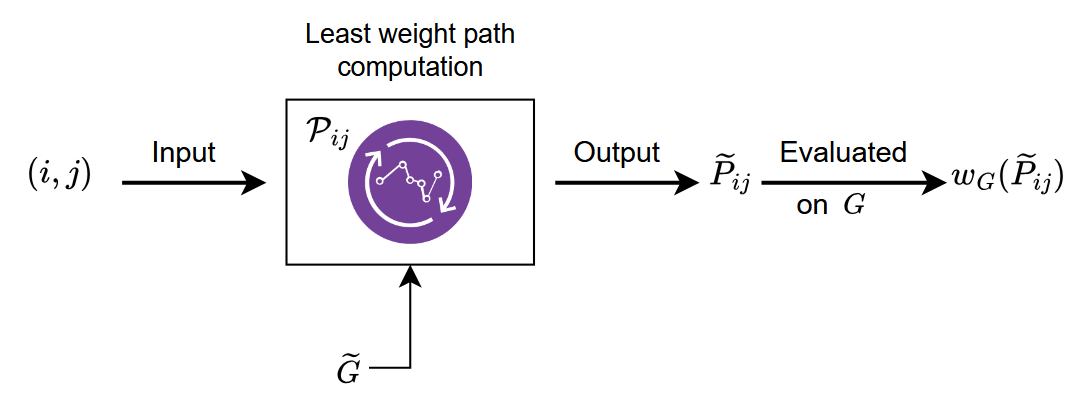}
      \caption{Evaluation Framework: Given any node pair ($i,j$) and privatized graph $\widetilde G$, a user computes the shortest path between ($i,j$) on the set $\cP_{ij}$. The computation returns path $\tilde P_{ij}$ as the perceived shortest path on $\widetilde G$ which the user commits to. Her decision is then evaluated on the original graph $G$ incurring a cost of $w_G(\tilde P_{ij})$ and realizing bias $B_{ij} = w_G(\tilde P_{ij})-w_G(P_{ij}^*)$.}
      \label{fig:eval_model}
\end{figure}

In the numerical section, we will often work with relative errors or bias, defined as 
\begin{align}\label{eq:rel_bias}
R_{ij} = \frac{\mathbb{E}[B_{ij}]}{ \sum_{e \in P_{ij}^*} w(e)},
\end{align}
and representing the \emph{percentage} change in the length of the recommended path (in expectation) compared to the true shortest path. Figure~\ref{fig:eval_model} provides a summary of the evaluation framework.

\paragraph{Summary of model and interactions.} 
We conclude this section with an overview of the interactions in our model referring back to Figure~\ref{fig:privacy_model}. A network administrator with access to the true graph $G$ computes a differentially private version $\widetilde{G}$ of said graph though addition of noise to the edge weights. The network administrator then shares the privatized graph $\widetilde{G}$ with a downstream user, that runs an optimization task on $\widetilde{G}$ which, in this case, is a shortest path computation.

\section{Bias: A Theoretical Perspective}\label{sec:theory}

This section presents the main theoretical insights of our work. Our primary contribution is characterizing the bias of the shortest path computation due to privacy noise and understanding how it drives unfair outcomes across different types of source-destination pairs on graphs. We introduce our first result in Claim \ref{clm:bias_positive} which provides insights about the sign or direction of the bias. 

\begin{claim}\label{clm:bias_positive}
The realized bias of the shortest path computation due to privacy noise is always greater than or equal to zero.
\end{claim}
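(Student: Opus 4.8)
The plan is to observe that this statement follows immediately from the fact that $P^*_{ij}$ is, by definition, the minimizer of the \emph{true} path length $w_G(\cdot) = \sum_{e \in \cdot} w(e)$ over the set of all $i$--$j$ paths $\cP_{ij}$. The key conceptual point is that the realized bias $B_{ij}(\tilde P_{ij})$ compares two paths that are \emph{both} evaluated under the same true weights $w$; the privacy noise enters only through the \emph{choice} of which path is selected, never through how that path is subsequently scored. So the direction of the bias is forced by an optimality argument and has nothing to do with the distribution of the noise.

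First I would note that the perceived shortest path $\tilde P_{ij}$, although it is computed on the privatized graph $\widetilde G$, is still a genuine element of $\cP_{ij}$: since the topology of $\widetilde G$ is identical to that of $G$, $\tilde P_{ij}$ is a valid path between $i$ and $j$ in the original graph. This feasibility is the only structural fact the argument requires. Second, I would invoke the definition $P^*_{ij} = \arg\min_{P \in \cP_{ij}} \sum_{e \in P} w(e)$, which yields for every $P \in \cP_{ij}$ the inequality $\sum_{e \in P} w(e) \geq \sum_{e \in P^*_{ij}} w(e)$. Applying this with $P = \tilde P_{ij}$ gives $\sum_{e \in \tilde P_{ij}} w(e) \geq \sum_{e \in P^*_{ij}} w(e)$, i.e. $B_{ij}(\tilde P_{ij}) \geq 0$.

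I do not expect any genuine obstacle: the inequality holds pointwise, for \emph{every} realization of the noise $\{Z(e)\}_{e \in E}$, and hence also in expectation, so the same one-line argument establishes $\mathbb{E}[B_{ij}] \geq 0$ as well. The non-negativity clipping $\max(0, \cdot)$ is likewise irrelevant to the sign — whatever path the user is led to choose, it is still scored against its true length, which cannot undercut the true optimum. The only mild care needed is with ties: if several paths attain the minimal perceived length on $\widetilde G$, then $\tilde P_{ij}$ may be any one of them, but the displayed inequality holds for each such path, so the conclusion is unaffected. The real content of this section lies not in the sign of the bias but in quantifying its \emph{magnitude} and characterizing its disparate distribution across source--destination pairs; this claim merely serves as the baseline observation that the error is systematically one-sided rather than mean-zero.
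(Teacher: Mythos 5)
Your argument is correct and is essentially identical to the paper's proof: both reduce the claim to the observation that $\tilde P_{ij} \in \cP_{ij}$ and that $P^*_{ij}$ minimizes the true length $w_G(\cdot)$ over $\cP_{ij}$, so $w_G(\tilde P_{ij}) \geq w_G(P^*_{ij})$ pointwise in the noise. Your added remarks on ties and on the irrelevance of the clipping are sound but not needed.
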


\begin{proof}
Suppose, some path $P \in \cP_{ij}$ is the new \emph{perceived} shortest path on privatized graph $\widetilde{G}$ instead of the true shortest path $P_{ij}^*$ on $G$. In this case,
the realized bias $B_{ij}(P)$ is given by:
\[
      B_{ij}(P) = \sum_{e \in P} w(e) - \sum_{e \in P_{ij}^*} w(e) = w_G(P) - w_{G}(P_{ij}^*).
\]
Now, since $P_{ij}^*$ is the \emph{true} shortest path on $G$, by definition, it must be that: 
\[
        w_G(P) \geq w_G(P_{ij}^*) \quad \forall~ P \in \cP_{ij},
\]
which directly implies that $B_{ij}(P) \geq 0$. Since the above holds for any general path $P \in \cP_{ij}$, this concludes the proof of the claim. 
\end{proof}
A direct consequence of the above claim is that the \emph{expected bias} and \emph{expected relative bias} are non-negative. Note that all our numerical results in Section \ref{sec:exp} plot empirical probabilities for incurring different levels of expected relative bias. 


When it comes to fairness impacts of privacy, there are two main competing effects that drive which groups of node pairs will \emph{unfairly} face more disruptions (on average) due to privacy: 
\begin{enumerate}
    \item The first of those is the \emph{effective relative noise effect} which is explored in Section \ref{sec:2_pathcase}: when the number of path alternatives is fixed, we show that node pairs which are farther apart have a lower likelihood of being affected by privacy noise. 
    \item On the other hand, we also demonstrate the \emph{path cardinality effect} in Section \ref{sec:general}, i.e., the higher the number of different paths available to travel between the source and destination, the higher is the likelihood of shifting to a \emph{worse} path due to privacy noise and incurring a large bias. This effect favours node pairs which are closer because they usually have a smaller number of alternate path options. 
\end{enumerate}
The trade-off between these two effects explains most of our observations in the numerical experiments section. We also provide a dual interpretation of our main theorem in Section \ref{sec:general} which helps us to derive high probability bounds on the realized bias of any shortest path computation.\\


Before we present our main results, we need to introduce some additional notation for ease of exposition. From now on, we drop the subscript $``ij"$ whenever it is clear from context to simplify notations.
\begin{defn}\label{defn:set}
For any two paths $P_1$ and $P_2$ in $\cP_{ij}$, we define $S_{P_1, P_2} \subset E$ as follows: 
\[
    S_{P_1, P_2} := \{ e \in E:~ e \in (P_1 \setminus P_2) \cup (P_2 \setminus P_1)\}, 
\]
i.e., $S_{P_1, P_2}$ is the set of those edges which belong in exactly one of the two paths $P_1$ and $P_2$. 
\end{defn}
\noindent
Note that when paths $P_1$ and $P_2$ have no overlapping edges, $|S_{P_1, P_2}| = n_{P_1} + n_{P_2}$ where $n_{P_1}$ and $n_{P_2}$ denote the number of edges in paths $P_1$ and $P_2$ respectively. In general, $|S_{P_1, P_2}| \leq n_{P_1} + n_{P_2}$.

\subsection{Effective Relative Noise Effect}\label{sec:2_pathcase} 

In this segment, we are interested in understanding the disparate impacts that privacy noise has on node pairs which are close by versus node pairs which are far apart, when the number of alternate path options is kept fixed for each pair. We measure the impact of noise by estimating the probability that for any two given paths, the \emph{worse} one is perceived to be \emph{better} when computations are done using privatized graph $\widetilde G$. Higher the value of this probability, higher is the impact of noise. We make the following conjecture:

\begin{conj}\label{conj1}
Node pairs which are closer incur, on average, larger levels of relative noise and hence are more impacted by privacy as opposed to node pairs which are farther apart. 
\end{conj}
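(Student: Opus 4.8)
The plan is to reduce the conjecture to a monotonicity statement about a single swap probability in the two-path case, and then to track how this probability scales with the distance between the source--destination pair. Fix a node pair and two paths $P_1, P_2 \in \cP_{ij}$, where $P_1 = P^*_{ij}$ is the true shortest path and $P_2$ is a strictly longer alternative, so that the gap $D := w_G(P_2) - w_G(P_1) > 0$. On the privatized graph the perceived difference $w_{\widetilde G}(P_2) - w_{\widetilde G}(P_1)$ has its shared-edge contributions cancel, leaving only the edges of $S_{P_1, P_2}$. Ignoring for the moment the $\max(0,\cdot)$ truncation, this difference equals $D + N$ where $N := \sum_{e \in P_2 \setminus P_1} Z(e) - \sum_{e \in P_1 \setminus P_2} Z(e)$. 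Since the $Z(e)$ are independent $\mathcal N(0,\sigma^2)$ variables and the two index sets are disjoint with total size $m := |S_{P_1,P_2}|$, we have $N \sim \mathcal N(0, m\sigma^2)$, so the probability that the worse path is perceived as shorter is
\[
  p_{\mathrm{swap}} = \pr[D + N < 0] = \Phi\!\left(-\frac{D}{\sigma \sqrt{m}}\right),
\]
where $\Phi$ is the standard Gaussian CDF. This isolates the governing quantity: the signal-to-noise ratio $D/(\sigma\sqrt m)$, in which $p_{\mathrm{swap}}$ is strictly decreasing.

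Next I would formalize ``relative noise'' and ``distance'' and connect them to this ratio. Writing $L := w_G(P_1)$ for the shortest-path distance and $\alpha := D/L$ for the relative quality gap of the alternative, I would hold $\alpha$ fixed across node pairs (comparing alternatives of comparable relative quality) and define the \emph{relative noise} as $\rho := \sigma\sqrt m / L$, i.e.\ the standard deviation of the accumulated perturbation measured against path length. Then $D/(\sigma\sqrt m) = \alpha/\rho$, so $p_{\mathrm{swap}} = \Phi(-\alpha/\rho)$ is strictly increasing in $\rho$: more relative noise means a larger chance of a harmful swap. It then remains to show closer pairs have larger $\rho$. Under the natural structural assumption that the number of distinct edges grows at most linearly in distance, $m = \Theta(L)$, we obtain $\rho = \Theta(\sigma/\sqrt L)$, decreasing in $L$; equivalently $D/(\sigma\sqrt m) = \Theta(\alpha\sqrt L)$ grows with $L$, so $p_{\mathrm{swap}}$ falls with $L$. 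Closer pairs (small $L$) thus face the larger relative noise and the larger swap probability, which is exactly the claim.

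I expect two steps to be the main obstacles. The first is the $\max(0,\cdot)$ truncation in~\eqref{eq:noisy_weight}: once it is active, $N$ is no longer centered, acquires a positive mean, and the clean $\Phi(-D/(\sigma\sqrt m))$ expression degrades to an approximation. I would handle this either by restricting to the regime $w(e) \gg \sigma$, where truncation fires with negligible probability, or by bounding the truncation-induced bias and checking it does not reverse the monotonicity in $L$. The second, more delicate, obstacle is the scaling link $m = \Theta(L)$: in adversarial graphs the edge count and the length can be decoupled (many short edges, or few long ones), so the monotonicity is a \emph{structural} regularity rather than a universal theorem. This is precisely why the statement is posed as a conjecture --- the two-path computation is exact, but passing from it to a statement over all node pairs relies on a typical-graph scaling assumption, which I would either impose explicitly or corroborate through the simulations of Section~\ref{sec:exp}.
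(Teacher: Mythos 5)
Your proposal matches the paper's own treatment almost exactly: the two-path swap probability $\Phi^c\bigl(\alpha/(\sigma\sqrt{|S|})\bigr)$ is precisely Lemma~\ref{lem:prob_2path}, and your scaling argument (holding the relative gap fixed and taking the distinct-edge count to grow linearly with distance, so the signal-to-noise ratio grows like $\sqrt{L}$) is the same heuristic the paper uses, phrased there as scaling the edge count by $M$ under i.i.d.\ edge weights so the effective relative noise picks up a $1/\sqrt{M}$ factor. You also correctly identify both reasons the statement is left as a conjecture rather than a theorem --- the $\max(0,\cdot)$ truncation and the non-universal $m=\Theta(L)$ scaling --- so no gaps to report.
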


In order to gain intuition about why the above conjecture may be true, we will start by presenting the following technical result. Let $P^*$ be the true shortest path between nodes $i$ and $j$ and $P' \neq P^*$ be any other alternate path. Define the \textbf{gap} $\alpha_{P',P^*}$ as $\alpha_{P',P^*} = w_G(P') - w_G(P^*)$. We assume that $\alpha_{P',P^*} > 0$ which means that $P^*$ is strictly better than $P'$. Then, 

\begin{lemma}\label{lem:prob_2path}
The probability that path $P'$ is perceived to be \emph{shorter} than the true best path $P^*$ on a privatized graph $\widetilde G$, i.e., $\pr \left[ w_{\widetilde G}(P') < w_{\widetilde G }(P^*) \right]$, is given by:
\[
      q = \Phi^{c}\left( \frac{\alpha_{P',P^*}}{\sigma \sqrt{|S_{P', P^*}|}} \right),
\]
where $\Phi^c(\cdot)$ is the complementary CDF of a standard normal random variable. We call ``$q$" the \textbf{path deviation probability}.
\end{lemma}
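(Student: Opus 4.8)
The plan is to analyze the single random variable $w_{\widetilde{G}}(P') - w_{\widetilde{G}}(P^*)$ directly, show it is Gaussian, and then read off $q$ as a normal tail probability. First I would write the difference explicitly as $\sum_{e \in P'} \tilde{w}(e) - \sum_{e \in P^*}\tilde{w}(e)$ and observe that every edge lying in both $P'$ and $P^*$ contributes the identical (random) term $\tilde{w}(e)$ to each sum, so these cancel. What survives is supported entirely on the symmetric difference of Definition~\ref{defn:set}: the difference equals $\sum_{e \in P'\setminus P^*}\tilde{w}(e) - \sum_{e \in P^*\setminus P'}\tilde{w}(e)$, a quantity depending only on the $|S_{P',P^*}|$ edges that distinguish the two paths.

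The second step is to separate the deterministic and stochastic parts. Writing $\tilde{w}(e) = w(e) + Z(e)$ with the $Z(e)\sim\mathcal{N}(0,\sigma^2)$ drawn independently across edges, the deterministic part collapses, again by cancellation on common edges, to $w_G(P') - w_G(P^*) = \alpha_{P',P^*}$. The stochastic part is the linear combination $\sum_{e\in P'\setminus P^*} Z(e) - \sum_{e\in P^*\setminus P'} Z(e)$, i.e.\ a sum/difference of $|S_{P',P^*}|$ independent centered Gaussians each of variance $\sigma^2$; since all coefficients are $\pm 1$, this is itself centered Gaussian with variance $\sigma^2\,|S_{P',P^*}|$. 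Hence $w_{\widetilde{G}}(P') - w_{\widetilde{G}}(P^*) \sim \mathcal{N}\!\left(\alpha_{P',P^*},\,\sigma^2 |S_{P',P^*}|\right)$.

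Finally I would compute $q = \pr[\,w_{\widetilde{G}}(P') < w_{\widetilde{G}}(P^*)\,] = \pr[\,w_{\widetilde{G}}(P') - w_{\widetilde{G}}(P^*) < 0\,]$ by standardizing: subtracting the mean and dividing by the standard deviation $\sigma\sqrt{|S_{P',P^*}|}$ gives $\Phi\!\left(-\alpha_{P',P^*}/(\sigma\sqrt{|S_{P',P^*}|})\right)$, and the identity $\Phi(-x) = \Phi^c(x)$ yields exactly the stated formula.

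The one point that needs care — and what I expect to be the main obstacle — is the non-negativity clamping $\tilde{w}(e)=\max(0, w(e)+Z(e))$ in the model, which makes the per-edge weights non-Gaussian and would break both the exact cancellation of common edges and the exactness of the normal tail. The clean derivation above is exact only for the unclamped weights $w(e)+Z(e)$. I would therefore either prove the lemma under the small-noise regime where $w(e)\gg\sigma$, so that $\pr[w(e)+Z(e)<0]$ is negligible and the $\max$ almost surely never fires on the relevant edges, or state the result for the unclamped mechanism and treat it as the operative approximation; once the clamping is set aside, the remaining steps (cancellation on common edges, Gaussianity of a $\pm 1$ combination of independent normals, and standardization) are routine.
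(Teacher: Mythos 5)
Your proof is correct and follows essentially the same route as the paper's: cancel the shared edges, observe that the residual $\pm 1$ combination of independent $\mathcal{N}(0,\sigma^2)$ noise terms is $\mathcal{N}(0,\sigma^2 |S_{P',P^*}|)$, and standardize to obtain $\Phi^c\left(\alpha_{P',P^*}/(\sigma\sqrt{|S_{P',P^*}|})\right)$; the paper's substitution $Y(e)=-Z(e)$ is just a cosmetic variant of your direct variance computation. Your caveat about the clamping $\max(0,\cdot)$ is well taken: the paper's own proof silently works with the unclamped weights $w(e)+Z(e)$, so the stated formula is exact only for the unclamped mechanism, exactly as you suspected.
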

\begin{proof}[Proof Sketch]
Recall that $Z(e)$ is the amount of noise added to edge $e \in E$. We know that $Z(e)$'s are i.i.d. normal mean-zero random variables with variance $\sigma^2$. The proof idea is to express the event of choosing the wrong shortest path equivalently as an event when a certain linear inequality condition on $Z(e)$'s is satisfied. Then we can exploit the normality and independence properties of $Z(e)$'s to reason about the probability. The full proof can be found in Appendix \ref{sec:proofs}. 
\end{proof}

\paragraph{Intuition about Conjecture \ref{conj1}:} We can obtain valuable insights about our earlier conjecture from Lemma \ref{lem:prob_2path}. Suppose for a given pair of nodes, there are exactly $2$ paths which have $|S|$ distinct edges between them and they differ in weight by amount $\alpha$. This implies that the gap $\alpha$ is contributed by exactly $|S|$ edges on which the effective privacy noise has standard deviation $\sigma \sqrt{|S|}$. Therefore, the ratio $\frac{ \sigma\sqrt{|S|} }{\alpha}$ represents the \emph{effective relative noise} (effective noise relative to the weight gap between paths). Now, suppose we scale the number of edges by a factor of $M > 1$ to represent a node pair which are farther apart than the first pair. Assuming that all edge weights are i.i.d. samples from some distribution $\mathcal{D}$ and this new pair of nodes also have exactly $2$ paths, the path gap between them should also scale by $M$ in expectation. In this case, the \emph{effective relative noise} is $\frac{1}{\sqrt{M}}\cdot \frac{\sigma \sqrt{|S|}}{\alpha}$. Because of the additional $\frac{1}{\sqrt{M}}$ factor, the effective relative noise is \emph{smaller} on average for the pair of nodes farther apart. Therefore by Lemma \ref{lem:prob_2path}, node pairs which are farther apart have on average, a lower likelihood of picking the worse path and hence are less affected by privacy noise.

\paragraph{Other observations from Lemma \ref{lem:prob_2path}:}
Recall that the standard deviation of the privacy noise $\sigma$ depends on the privacy parameter $\varepsilon$ and the sensitivity of the weight function $\Delta f$. The dependence is of the following form: $\sigma \propto \frac{\Delta f}{\varepsilon}$. This implies that at higher levels of privacy (smaller $\varepsilon$), the probability $q$ would be larger. This is intuitive: stronger privacy requires more perturbation to the edge weights and therefore there is a higher chance that the order is flipped, i.e., a previously longer path is perceived to be shorter. We can argue similarly for the case where the sensitivity of $f(\cdot)$ is high. Higher sensitivity of $f(\cdot)$ implies we need more noise to achieve the same level of privacy. This leads to higher $q$. We plot these dependencies in Figure \ref{fig:2_path}. 


We have already explored at depth how $q$ depends in average on the effective relative noise (Conjecture \ref{conj1}). $q$ also depends on the local network topology of paths $P'$ and $P^*$ as we illustrate with the following example. Let there be two users traveling between two different node pairs, each of them has two path choices, one which is the true best and another which is strictly worse. For ease of comparison, we assume that for both node pairs, the worse path is off the respective true best by the same amount $\alpha$. Now, suppose that user $1$ faces a scenario where both of her paths have a large degree of overlap, leading to a smaller $|S|$, while for user $2$, the paths are largely distinct. In this case, user $2$ has a higher chance of deviating to the \textit{worse} path, simply because noise on shared edges affects both paths equally. This example demonstrates that despite 
the path gap being identical, unfairness can also arise due to network topology wherein privacy has a much more adverse effect on some users compared to others.  

\begin{figure}[!ht]
  \centering
  \raisebox{20pt}{\parbox[b]{.11\textwidth}{}}%
  \subfloat[][Variation with gap $\alpha$]{\includegraphics[width=.5\textwidth]{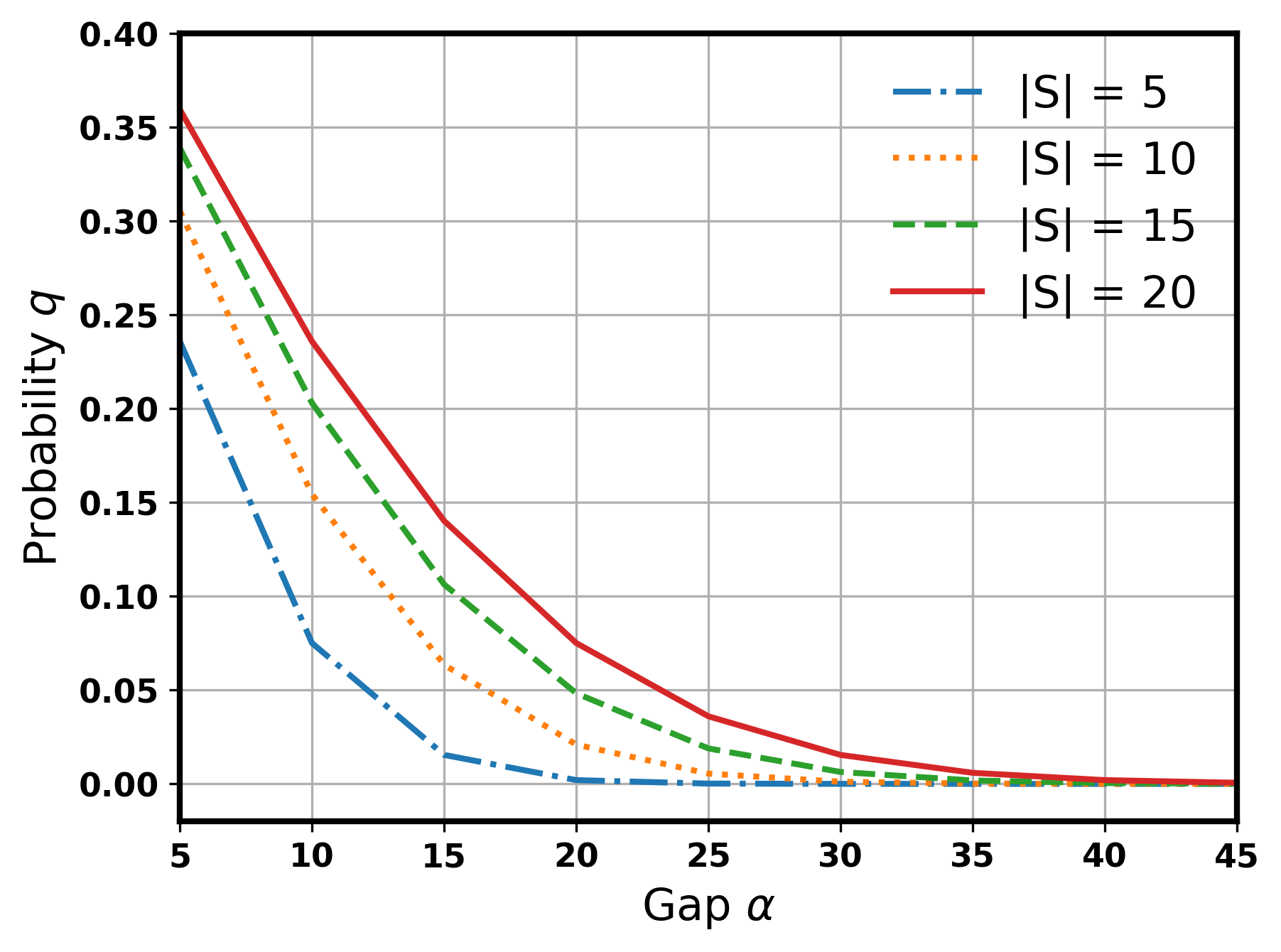}}\hfill
  \subfloat[][Variation with sensitivity $\Delta f$]{\includegraphics[width=.5\textwidth]{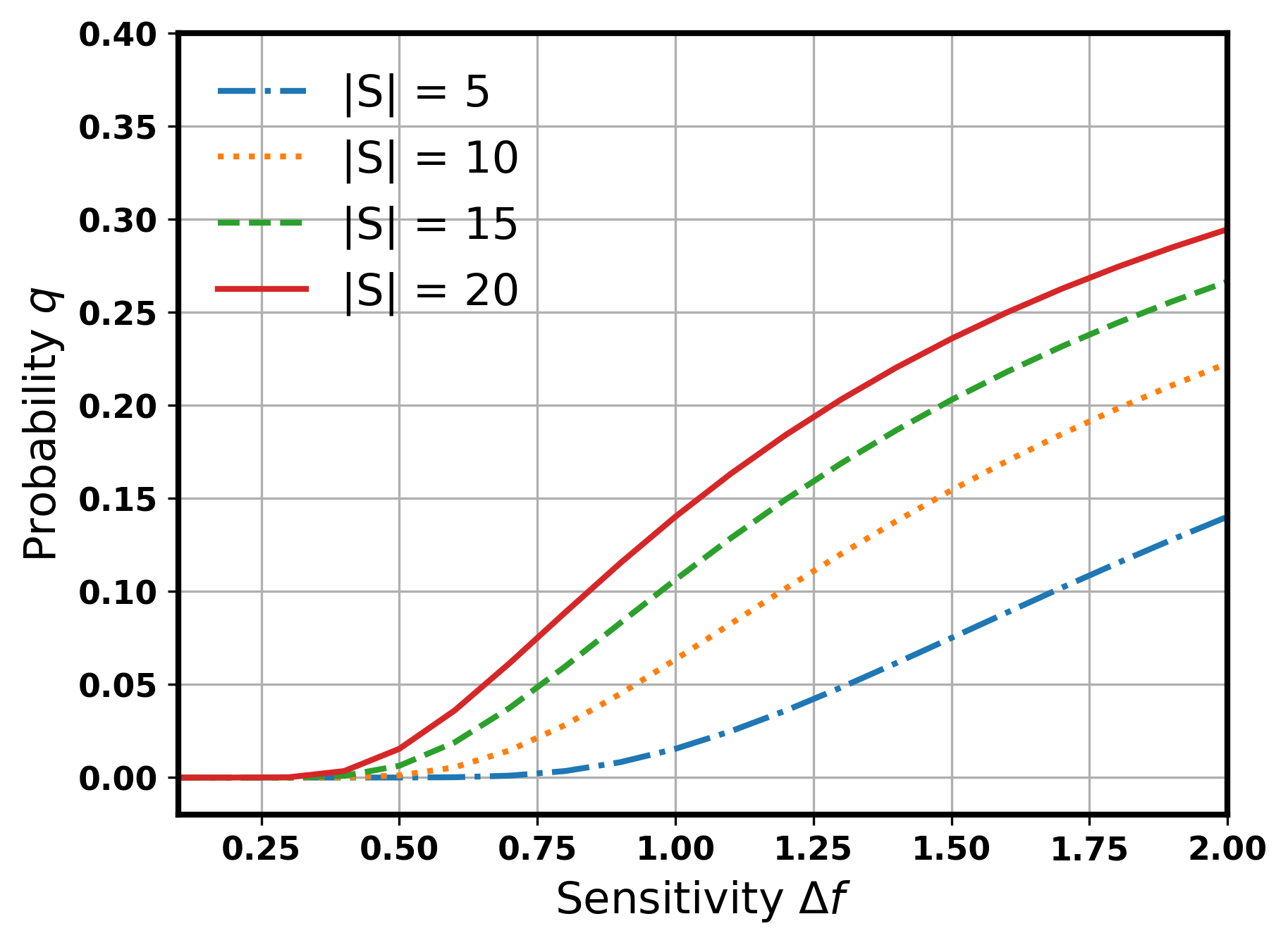}}\par
  \caption{Variation of probability $q$ as a function of gap $\alpha_{P',P^*}$ in (a) and sensitivity $\Delta f$ in (b) for different values of $|S_{P',P^*}|$. We set $(\varepsilon, \delta) = (1, 0.01)$. Additionally, for (a), we fix $\Delta f = 1$ and for (b), we fix $\alpha_{P',P^*} = 15$.}
  \label{fig:2_path}
\end{figure}

\subsection{Path Cardinality Effect}\label{sec:general}
In this segment, we 
are interested in understanding the disparate impacts that privacy noise has on node pairs which have many alternate path choices as opposed to node pairs which have fewer paths. We call this effect the \emph{path cardinality} effect. In this case, we measure the impact of noise by estimating the probability of realizing bias at least as large as $\beta$, given some $\beta > 0$. Again, a higher probability indicates a higher impact of noise. We now make the following conjecture:

\begin{conj}\label{conj2}
Node pairs which have a large path cardinality are, on average, more impacted by privacy noise as opposed to node pairs which have fewer alternate path options. 
\end{conj}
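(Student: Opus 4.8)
The plan is to first turn the qualitative Conjecture \ref{conj2} into a precise monotonicity statement about the impact measure the section proposes, namely $\pr[B_{ij} \geq \beta]$ for a fixed threshold $\beta > 0$. Call a path $P \in \cP_{ij}$ \emph{bad} (at level $\beta$) if $w_G(P) \geq w_G(P^*_{ij}) + \beta$. The first and most useful step is the observation that the realized-bias event is nothing but a statement about which path wins the noisy competition: directly from the definition of $B_{ij}$ in Section \ref{sec:model} we get the equivalence
\[
\{B_{ij} \geq \beta\} = \{\tilde P_{ij} \text{ is bad}\} = \Big\{ \arg\min_{P \in \cP_{ij}} w_{\widetilde G}(P) \text{ has true length} \geq w_G(P^*_{ij}) + \beta \Big\}.
\]
This reframes everything in terms of the perceived shortest path landing in the ``bad'' set, which is precisely the door through which path cardinality enters.

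The rigorous core I would prove is a coupling-based monotonicity lemma: if $\cP \subseteq \cP'$ are two path collections sharing the same true shortest path $P^*$, and every path in $\cP' \setminus \cP$ is bad at level $\beta$, then $\pr[B \geq \beta]$ computed over $\cP'$ is at least $\pr[B \geq \beta]$ computed over $\cP$. To show this I would couple the two scenarios through a single realization of the edge noise $\{Z(e)\}$ and establish the event inclusion pointwise. Fix a realization for which the $\cP$-winner $\tilde P$ is bad. The $\cP'$-winner minimizes $w_{\widetilde G}$ over $\cP \cup (\cP' \setminus \cP)$, so it is either $\tilde P$ itself (bad by hypothesis) or one of the newly added paths (bad by construction); in either case the $\cP'$-winner is bad. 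Hence the $\cP$-event is contained in the $\cP'$-event for every realization, and taking probabilities gives the claim. This is exactly the formal content of ``more bad alternate paths can only raise the chance of being routed onto a bad one.''

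To connect this monotonicity to genuine path \emph{cardinality}, I would pair it with a union-bound characterization built on Lemma \ref{lem:prob_2path}. Since a bad perceived shortest path must in particular beat $P^*$ on $\widetilde G$, we have $\{B_{ij} \geq \beta\} \subseteq \bigcup_{P \text{ bad}} \{w_{\widetilde G}(P) \leq w_{\widetilde G}(P^*_{ij})\}$, whence
\[
\pr[B_{ij} \geq \beta] \leq \sum_{P \,:\, w_G(P) \geq w_G(P^*_{ij}) + \beta} \Phi^{c}\!\left( \frac{w_G(P) - w_G(P^*_{ij})}{\sigma \sqrt{|S_{P, P^*_{ij}}|}} \right),
\]
which is the ``dual interpretation'' of the section and yields the advertised high-probability bounds on the realized bias. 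Under the section's modeling assumption that edge weights are i.i.d.\ draws from a distribution $\mathcal{D}$, a node pair with larger path cardinality has, in expectation, strictly more paths exceeding any fixed length threshold, i.e.\ more bad paths; feeding those extra bad paths into the monotonicity lemma makes its $\pr[B \geq \beta]$ dominate that of a pair with fewer paths, which is exactly Conjecture \ref{conj2}.

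The main obstacle I anticipate is that the statement is outright false without a structural restriction, so pinning ``cardinality'' as the driver requires care: adding a \emph{good} alternate path (true length close to $w_G(P^*_{ij})$) can only decrease the bias, so the coupling argument genuinely needs the added paths to be bad, and the passage from ``more paths'' to ``more bad paths'' is an averaged, distributional claim rather than a deterministic one. Two further technical points must be handled. First, the deviation events $\{w_{\widetilde G}(P) \leq w_{\widetilde G}(P^*_{ij})\}$ are \emph{not} independent across bad paths — they share the noise on the edges of $P^*_{ij}$ and on any pairwise overlap — so the union bound is loose unless the alternates are near edge-disjoint; sharpening it would mean conditioning on the noise restricted to $P^*_{ij}$ and exploiting independence of the residual per-path contributions. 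Second, the $\max(0,\cdot)$ truncation in \eqref{eq:noisy_weight} breaks the exact Gaussianity underlying Lemma \ref{lem:prob_2path}, so its formula holds exactly only when the truncation does not bind (e.g.\ when $w(e) \gg \sigma$) and must otherwise be carried as a correction term.
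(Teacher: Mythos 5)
The target statement is a conjecture, and the paper does not prove it: its formal support is Theorem \ref{thm:q_beta}, obtained by decomposing $q_{\beta}$ over the disjoint events that a specific $\beta$-worse path wins on $\widetilde G$, bounding each by the pairwise comparison with $P^*$ via Lemma \ref{lem:prob_2path}, and reading the factor $\left|\cP_{ij}^{\geq\beta}\right|$ in the resulting bound as the cardinality dependence. Your union bound reproduces exactly the first inequality of that theorem (the paper reaches the same sum via disjointness rather than subadditivity), so that half of your argument coincides with the paper's. What is genuinely different is your coupling lemma: by fixing one realization of the edge noise and showing the event inclusion pointwise when only $\beta$-worse paths are added, you obtain an actual monotonicity of $\pr[B\geq\beta]$ in the number of bad paths, which the paper never establishes --- an upper bound that grows with $\left|\cP_{ij}^{\geq\beta}\right|$ does not by itself imply that $q_{\beta}$ grows. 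That is a real strengthening, and your caveats are apt: the conjecture fails if the added paths can be good, the deviation events are dependent so the union bound is loose, and the $\max(0,\cdot)$ truncation means Lemma \ref{lem:prob_2path} is only exact when truncation does not bind (a point the paper's own proof of that lemma glosses over). What your route still does not deliver --- and neither does the paper's --- is the cross-pair comparison the conjecture literally asserts, since two distinct node pairs share neither a $P^*$ nor a path system to couple; your reduction of that step to a distributional claim about the expected number of bad paths under i.i.d.\ weights is the honest reason the statement remains a conjecture rather than a theorem.
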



In order to gain insight into the above conjecture, we will present our main technical result in Theorem \ref{thm:q_beta}. Before stating the theorem, we need to introduce the following definition and set notations:  

\begin{defn}\label{betaworse_paths}($\beta$-worse paths)
Any path $P \in \cP_{ij}$ is said to be $\beta$-worse, if: 
\[
     w_G(P) \geq w_G(P^*) + \beta,
\]
where $P^*$ is the least weight path between nodes $i$ and $j$ on graph $G$.
\end{defn}
\noindent
Therefore, given $\beta > 0$, we can partition set $\cP_{ij}$ into two sets $\cP_{ij}^{\geq \beta}$ and $\cP_{ij}^{< \beta}$: 
\[
    \cP_{ij}^{\geq \beta} := \{ P \in \cP_{ij}:~ w_G(P) \geq w_G(P^*) + \beta\}
\]
\[
    \cP_{ij}^{< \beta} := \{ P \in \cP_{ij}:~ w_G(P) < w_G(P^*) + \beta\}
\]
We are now ready to present our theorem: 
\begin{theorem}\label{thm:q_beta}
Let $q_{\beta}$ be the probability that the realized bias of shortest path computation using a privatized graph $\widetilde G$ is at least $\beta$. Then $q_{\beta}$ is upper bounded as follows:
\[
      q_{\beta} \leq \sum_{P \in \cP_{ij}^{\geq \beta}}  \Phi^c \left(\frac{\alpha_{P,P^*}}{\sigma \sqrt{|S_{P,P^*}|}}  \right) \leq \left| \cP_{ij}^{\geq \beta} \right| \cdot \Phi^c\left( \frac{\beta}{\sigma \sqrt{S_{max}}} \right),
\]
where $S_{max} = \max_{P \in \cP_{ij}^{\geq \beta}} \left|S_{P,P^*} \right|$. 
\end{theorem}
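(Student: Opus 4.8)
The plan is to bound the probability of incurring bias at least $\beta$ by relating it to the event that the perceived shortest path $\tilde P_{ij}$ lands in the set of $\beta$-worse paths $\cP_{ij}^{\geq \beta}$. First I would observe that, by Claim~\ref{clm:bias_positive}, the realized bias equals $w_G(\tilde P_{ij}) - w_G(P^*)$, which is at least $\beta$ precisely when the chosen path $\tilde P_{ij}$ belongs to $\cP_{ij}^{\geq \beta}$ (by Definition~\ref{betaworse_paths}). Hence $q_\beta = \pr[\tilde P_{ij} \in \cP_{ij}^{\geq \beta}]$. The key reduction is that if $\tilde P_{ij}$ is some $\beta$-worse path, then in particular that path was perceived to be no longer than the true shortest path $P^*$ on the privatized graph, i.e. $w_{\widetilde G}(\tilde P_{ij}) \leq w_{\widetilde G}(P^*)$, since $\tilde P_{ij}$ is the minimizer over all paths including $P^*$. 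Therefore the event $\{\tilde P_{ij} \in \cP_{ij}^{\geq \beta}\}$ is contained in the union, over all $P \in \cP_{ij}^{\geq \beta}$, of the events $\{w_{\widetilde G}(P) \leq w_{\widetilde G}(P^*)\}$.

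The second step is a union bound over this collection of events. Applying the union bound gives
\[
q_\beta \leq \sum_{P \in \cP_{ij}^{\geq \beta}} \pr\left[ w_{\widetilde G}(P) \leq w_{\widetilde G}(P^*) \right].
\]
Each summand is exactly the path deviation probability computed in Lemma~\ref{lem:prob_2path} (up to the boundary case of equality, which has probability zero since the noise is continuous), so I would substitute
\[
\pr\left[ w_{\widetilde G}(P) \leq w_{\widetilde G}(P^*) \right] = \Phi^c\left( \frac{\alpha_{P,P^*}}{\sigma \sqrt{|S_{P,P^*}|}} \right),
\]
which yields the first inequality in the theorem statement directly.

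For the second inequality, I would bound each term in the sum uniformly. Since $\Phi^c$ is monotonically decreasing, the term is largest when its argument is smallest. For every $P \in \cP_{ij}^{\geq \beta}$ we have $\alpha_{P,P^*} = w_G(P) - w_G(P^*) \geq \beta$ by the definition of $\beta$-worse, and $|S_{P,P^*}| \leq S_{max}$ by definition of $S_{max}$; together these give
\[
\frac{\alpha_{P,P^*}}{\sigma \sqrt{|S_{P,P^*}|}} \geq \frac{\beta}{\sigma \sqrt{S_{max}}},
\]
so each summand is at most $\Phi^c\!\left( \beta / (\sigma \sqrt{S_{max}}) \right)$. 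Replacing all $|\cP_{ij}^{\geq \beta}|$ terms by this common upper bound completes the second inequality.

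The main obstacle, and the step requiring the most care, is the reduction in the first step: it is not true in general that $\tilde P_{ij} \in \cP_{ij}^{\geq\beta}$ implies a \emph{single} fixed alternate path beats $P^*$, so I cannot directly invoke Lemma~\ref{lem:prob_2path} for one pair of paths. The correct move is to note that whichever $\beta$-worse path is selected must individually beat $P^*$ in perceived length, and then to pass to the union over \emph{all} $\beta$-worse paths; this is exactly what makes the union bound (rather than an exact identity) both valid and necessary. One should also be slightly careful that the events $\{w_{\widetilde G}(P) \leq w_{\widetilde G}(P^*)\}$ may overlap and need not be disjoint, which is precisely why only an upper bound is obtained; and that $P^*$ itself is excluded from $\cP_{ij}^{\geq \beta}$ whenever $\beta > 0$, so the gap $\alpha_{P,P^*}$ is strictly positive on the summation index set, keeping Lemma~\ref{lem:prob_2path} applicable.
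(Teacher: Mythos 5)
Your proof is correct and takes essentially the same route as the paper's: reduce the event of incurring bias at least $\beta$ to the event that some path in $\cP_{ij}^{\geq \beta}$ beats $P^*$ in perceived weight on $\widetilde G$, bound the resulting sum term-by-term via Lemma~\ref{lem:prob_2path}, and then uniformly bound each summand using $\alpha_{P,P^*} \geq \beta$ and $|S_{P,P^*}| \leq S_{max}$. The only cosmetic difference is that the paper first writes $q_\beta$ exactly as a sum over the \emph{disjoint} events that a specific $P \in \cP_{ij}^{\geq \beta}$ is the new minimizer and then relaxes each intersection to the single comparison with $P^*$, whereas you pass directly to a union bound over the (possibly overlapping) events $\{w_{\widetilde G}(P) \leq w_{\widetilde G}(P^*)\}$; both yield the identical intermediate bound.
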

\begin{proof}[Proof Sketch]
The proof idea is as follows: we can express $q_{\beta}$ as the probability of the event that there exists a path in $\cP_{ij}^{\geq \beta}$ which has the lowest weight on privatized graph $\widetilde G$. Since only one path can be the shortest path on any realization of $\widetilde G$, the above event decomposes into a union of disjoint sub-events (a specific path in $\cP_{ij}^{\geq \beta}$ is the new shortest path on $\widetilde G$). The technical parts of the proof deal with upper bounding the probability of each of these sub-events for which we use Lemma \ref{lem:prob_2path}. The detailed proof can be found in Appendix \ref{sec:proofs}.
\end{proof}

\paragraph{Observations from Theorem \ref{thm:q_beta}:} We can derive useful insights from the expression of the upper bound. It is immediate that it depends on the cardinality of the set $\cP_{ij}^{\geq \beta}$. I.e., the higher is the number of $\beta$-worse candidate paths, higher the probability that the shortest path changes to one such path which is exactly the intuition for Conjecture \ref{conj2}. The dependence on $\beta$ is actually two-fold: firstly, as $\beta$ increases, the term $\Phi^c \left( \frac{\beta}{\sigma \sqrt{S_{max}}} \right)$ decreases. Additionally, a higher $\beta$ decreases the cardinality of $\cP_{ij}^{\geq \beta}$. Essentially, this means that if $\beta$ is large, the probability that we end up shifting to a $\beta$-worse path decreases very quickly (refer to Figure \ref{fig:qbeta_vs_beta}). This idea will be explored in greater depth in Corollary \ref{corr:bound_beta}.   
\begin{figure}[!ht]
      \centering
      \includegraphics[width=0.6\textwidth]{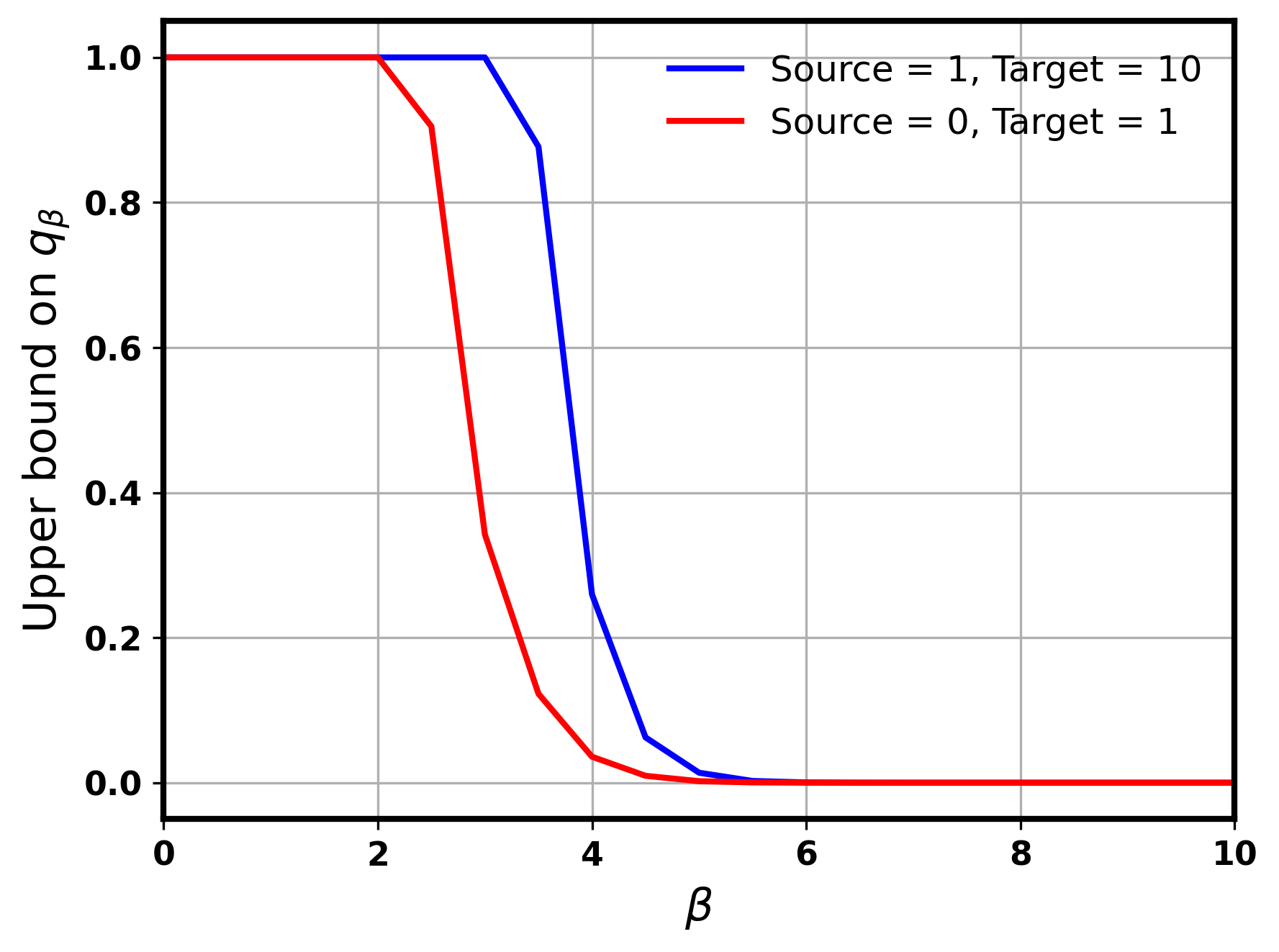}
      \caption{Evolution of the upper bound on $q_{\beta}$ as a function of $\beta$ for a wheel graph with $N = 21$. All ground truth edge weights drawn independently from $U[0, 1]$. We plot results for two types of source-destination pairs: the blue legend is for a pair of nodes which lie on diametrically opposite sides of the wheel graph, the red legend is for a pair of nodes consisting of the central node and a circumference node. The noise is sampled from a mean zero Gaussian distribution with standard deviation $\sigma = 0.3$. For very small values of $\beta$, the bound is vacuous. However, once the bound becomes non-trivial, it decreases rapidly and can be expected to approximate $q_{\beta}$ very accurately.}
      \label{fig:qbeta_vs_beta}
\end{figure}

\begin{remark}
Note that the upper bound is tight when $|\cP_{ij}^{< \beta}| = 1$ and $|\cP_{ij}^{\geq \beta}| = 1$. In this case, we recover the exact expression we derived in Lemma \ref{lem:prob_2path}, implying that our results are consistent.
\end{remark}
\noindent 

For general networks, it is not possible to improve on this bound without having additional information about the network topology. However, there are instances where $q_{\beta}$ may be computed exactly, for example, when all the paths in $\cP_{ij}$ are disjoint and have no overlapping edges. We direct the interested reader to Section \ref{sec:nonoverlap} of the Appendix where we derive the expression of $q_{\beta}$ exactly and demonstrate through numerical experiments how our bounds in Lemma \ref{thm:q_beta} compare with the exact expression (Refer to Figure \ref{fig:comp_nonoverlap}).  

\paragraph{Dual of Theorem \ref{thm:q_beta}:} 
We can also write Theorem \ref{thm:q_beta} in terms of high-probability bounds on the realized bias. Before stating the result, we formally define the notion of \textit{z-scores}:
\begin{defn}\label{defn:zscores}
For any $\eta \in [0, 1]$, we can define the \textit{z-score} corresponding to $\eta$ as follows: 
\[
           z_{\eta} = \Phi^{-1}\left( \eta\right),
\]
where $\Phi^{-1}(\cdot)$ represents the inverse of the standard normal CDF. Alternatively, $z_{\eta}$ is the value at which the standard normal CDF evaluates to $\eta$. 
\end{defn}

Our main result is then as follows:
\begin{corr}\label{corr:bound_beta}
Suppose, $B_{ij}$ is the realized bias while computing the shortest path between nodes $i$ and $j$ using a privatized graph $\widetilde G$. Then,
\[
       \pr\left[B_{ij} < \sqrt{2} \left(\sigma z^* \sqrt{S}\right) \right] \geq 1-\gamma, 
\]
where $z^* = z_{1-\frac{\gamma}{|\cP_{ij}|}}$ and $S$ denotes the maximum number of edges in any path in $\cP_{ij}$.
\end{corr}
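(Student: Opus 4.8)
The plan is to treat Corollary \ref{corr:bound_beta} as exactly the dual of Theorem \ref{thm:q_beta}: instead of fixing $\beta$ and bounding the tail probability $q_\beta = \pr[B_{ij}\ge\beta]$, I would fix the failure probability $\gamma$ and solve for the threshold $\beta$ that forces the upper bound on $q_\beta$ to drop to $\gamma$. Since $q_\beta = \pr[B_{ij}\ge\beta]$ by definition, any $\beta$ achieving $q_\beta\le\gamma$ immediately yields $\pr[B_{ij}<\beta] = 1-q_\beta \ge 1-\gamma$, which is precisely the claimed form. So the entire corollary reduces to an algebraic inversion of the bound already established.

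First I would start from the second (looser) inequality of Theorem \ref{thm:q_beta},
\[
q_\beta \;\le\; \left|\cP_{ij}^{\geq\beta}\right|\cdot \Phi^c\!\left(\frac{\beta}{\sigma\sqrt{S_{max}}}\right),
\]
and coarsen both graph-dependent quantities into the parameters that appear in the statement. For the cardinality I use $\left|\cP_{ij}^{\geq\beta}\right| \le |\cP_{ij}|$, which is immediate from $\cP_{ij}^{\geq\beta}\subseteq\cP_{ij}$. For $S_{max}$ I invoke the symmetric-difference bound recorded just after Definition \ref{defn:set}: for any path $P$ we have $|S_{P,P^*}| \le n_P + n_{P^*} \le 2S$, where $S$ is the maximum number of edges on any path in $\cP_{ij}$, and hence $S_{max}\le 2S$. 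Because $\Phi^c$ is decreasing, enlarging the denominator by replacing $S_{max}$ with $2S$ only weakens (increases) the bound, giving
\[
q_\beta \;\le\; |\cP_{ij}|\cdot \Phi^c\!\left(\frac{\beta}{\sigma\sqrt{2S}}\right).
\]

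Next I would set this right-hand side equal to $\gamma$ and invert. Requiring $|\cP_{ij}|\,\Phi^c(\beta/(\sigma\sqrt{2S}))\le\gamma$ is equivalent to $\Phi^c(\beta/(\sigma\sqrt{2S}))\le \gamma/|\cP_{ij}|$, i.e.\ $\Phi(\beta/(\sigma\sqrt{2S}))\ge 1-\gamma/|\cP_{ij}|$. Applying $\Phi^{-1}$ and recalling the z-score notation of Definition \ref{defn:zscores}, this holds exactly when $\beta/(\sigma\sqrt{2S})\ge z^*$ with $z^*=z_{1-\gamma/|\cP_{ij}|}$. Choosing $\beta=\sqrt{2}\,\sigma z^*\sqrt{S}$ makes the argument equal to $z^*$, so $\Phi^c(z^*)=\gamma/|\cP_{ij}|$ and therefore $q_\beta\le\gamma$ at this threshold. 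Translating back through $\pr[B_{ij}<\beta]=1-q_\beta$ then delivers the stated bound.

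I do not anticipate a genuine obstacle, since everything follows from the already-proved Theorem \ref{thm:q_beta} by monotone inversion. The two steps that require care are (i) the passage $S_{max}\le 2S$, which is the sole source of the $\sqrt{2}$ factor in the statement and which rests on the symmetric-difference inequality $|S_{P,P^*}|\le n_P+n_{P^*}$ rather than on $S$ directly, and (ii) tracking the direction of the inequalities through the decreasing map $\Phi^c$ and its inversion by $\Phi^{-1}$. Both are routine, but they are exactly the places where a stray factor or a flipped inequality could creep in, so I would verify the threshold value $\beta=\sqrt{2}\,\sigma z^*\sqrt{S}$ by substituting it back and checking that the argument of $\Phi^c$ equals $z^*$.
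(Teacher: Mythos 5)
Your proposal is correct and follows essentially the same route as the paper's own proof: reduce the claim to $q_\beta \le \gamma$ at the threshold $\beta = \sqrt{2}\,\sigma z^*\sqrt{S}$, coarsen the bound of Theorem~\ref{thm:q_beta} via $|\cP_{ij}^{\geq\beta}| \le |\cP_{ij}|$ and $S_{max} \le 2S$, and verify by direct substitution that the resulting bound equals $\gamma$. The only cosmetic difference is that you derive the threshold by inverting the inequality while the paper simply plugs it in and checks, which is the same computation read in the opposite direction.
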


\begin{proof}
The proof can be found in Appendix \ref{sec:proofs} and follows directly from Theorem~\ref{thm:q_beta}. 
\end{proof}
Theorem \ref{thm:q_beta} showed that as $\beta$ increases, the probability of incurring a bias at least as large as $\beta$ decreases sharply. This implies that the probability of incurring a large bias is very ``small". This is exactly what Corollary \ref{corr:bound_beta} claims. Thus, Theorem \ref{thm:q_beta} and Corollary \ref{corr:bound_beta} are duals of each other. 


\begin{figure}[!ht]
  \centering
  \raisebox{20pt}{\parbox[b]{.11\textwidth}{}}%
  \subfloat[][]{\includegraphics[width=.5\textwidth]{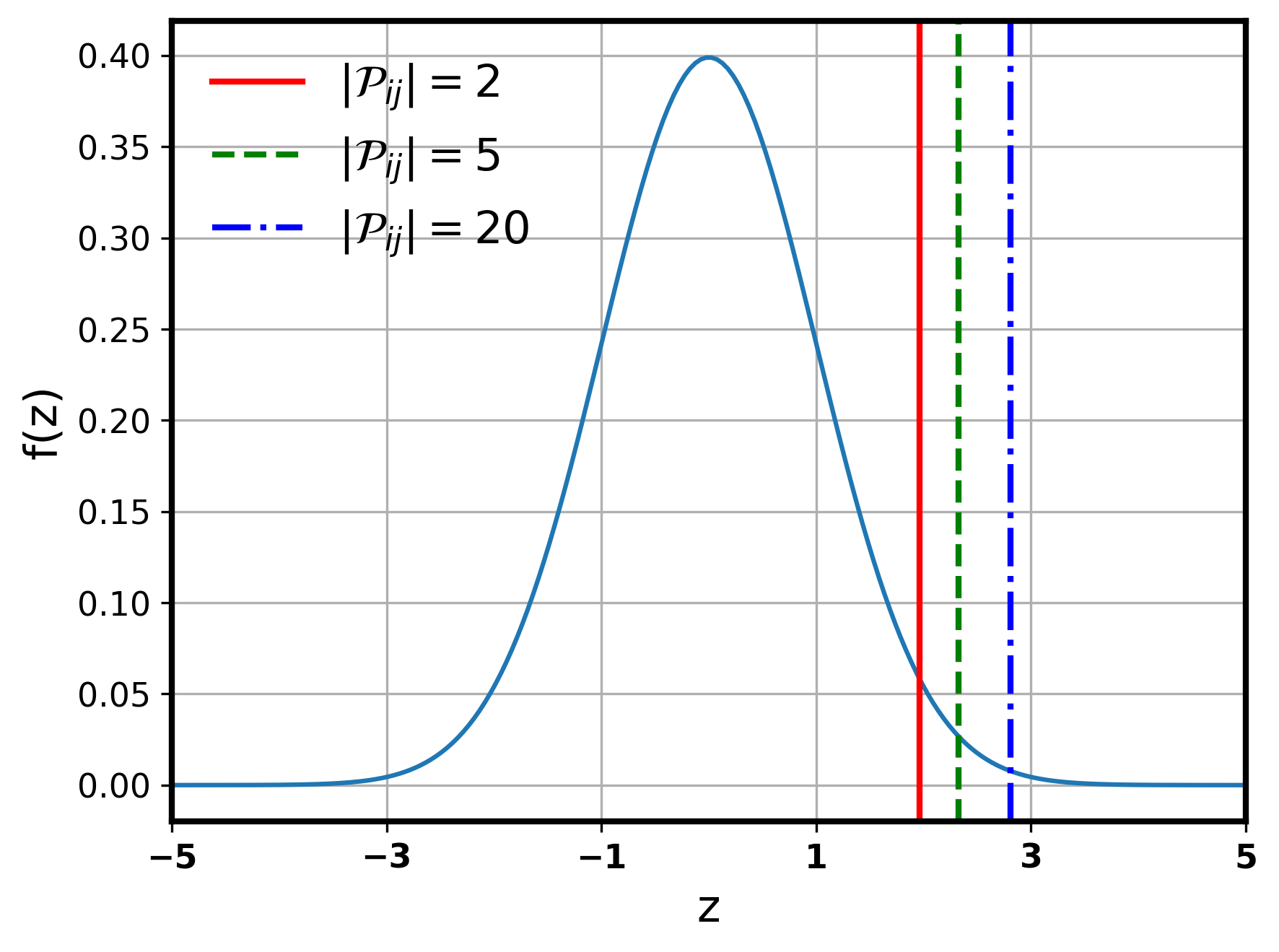}}\hfill
  \subfloat[][]{\includegraphics[width=.5\textwidth]{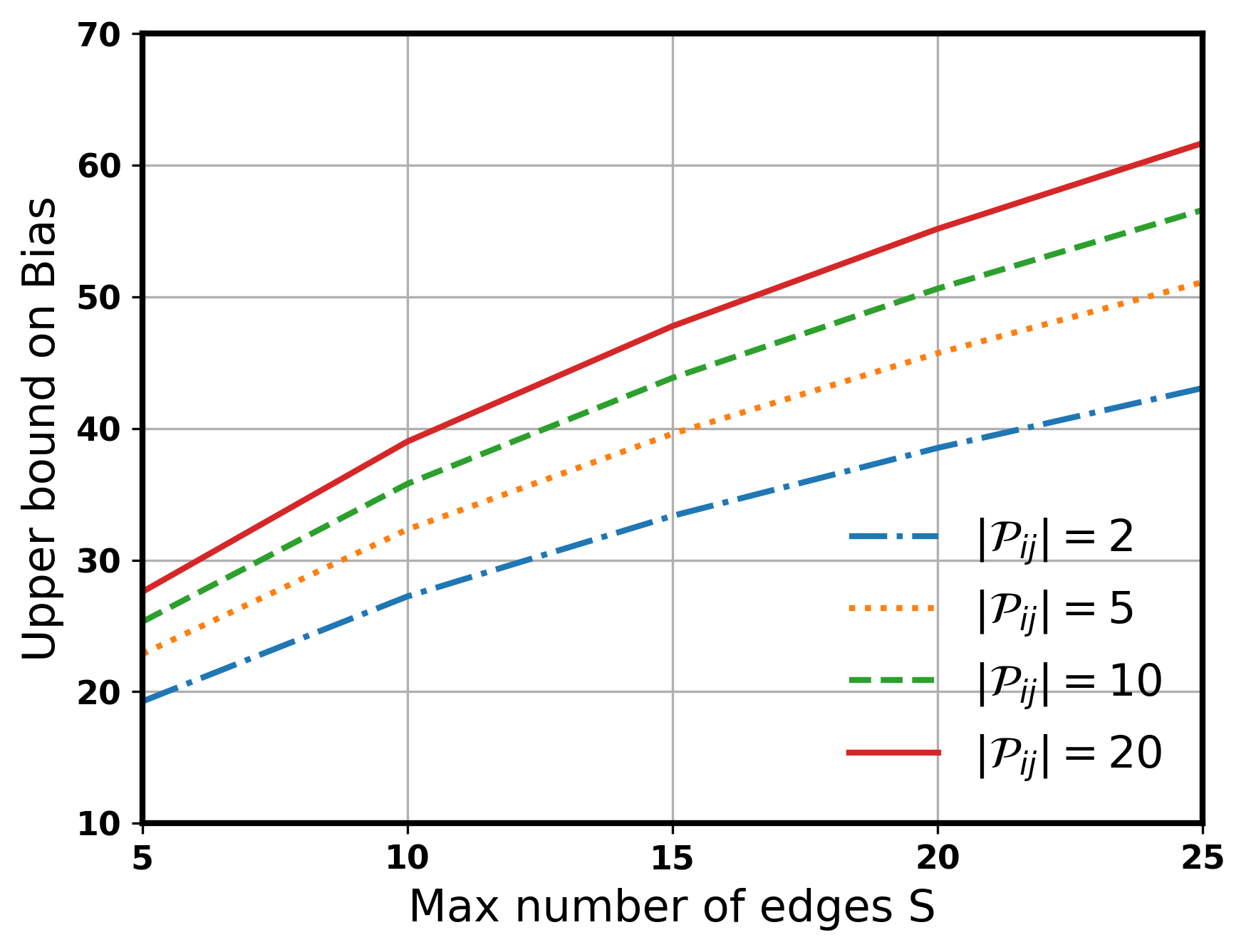}}\par
  \caption{In (a), we show how the z-scores change with the cardinality of $\cP_{ij}$. Higher values of $|\cP_{ij}|$ leads to higher z-scores. For all cases, we use $\gamma = 0.05$, i.e., we desire $95\%$ coverage. In (b), we illustrate how the bounds on bias $B_{ij}$ calculated in Corollary \ref{corr:bound_beta} vary with $S$ and $|\cP_{ij}|$. 
}
  \label{fig:delta_bound}
\end{figure}

\section{Experimental Characterization of Bias and Unfairness}\label{sec:exp}

In this section, we provide experimental results that extend and empirically validate the theoretical findings discussed above. The goal is to simulate the behavior of a DP release task on graphs that closely mimic networks in the real world focusing on the impacts of privacy on bias and fairness. 
To do so, we perform an extensive analysis of synthetic graphs, which enables us to ablate various graph parameters, including sparsity, structure, and form. 
Interestingly, in this process, we also discover that some graph classes may be more robust than others to disruptions under privacy. We present the experimental setup adopted next, in Section \ref{sec:experimental_setup}, and detail the analysis of results on 3 different classes of graphs in Section \ref{sec:results}.

\subsection{Experimental setup}
\label{sec:experimental_setup}

\paragraph{Synthetic graph generation:} 
The experiments investigate \textbf{three} different classes of graphs: 
{\bf i)} 2-dimensional grid graphs, 
{\bf ii)} wheel graphs, and 
{\bf iii)} scale-free graphs. 
While 2-D grids and wheel graphs closely emulate transportation networks in the real world (for example, Chicago and New York City have road networks that are laid out in a pattern of orthogonal grids, while road networks in cities like Paris and Rome are laid out in the shape of a wheel), scale-free graphs are often used to model other widely prevalent networks like social networks, the world wide web, friendship, etc. Thus, these graph classes cover a large variety of real-world networks. 

\paragraph{Parametrizations of each graph class:} We use the following sets of parameters to generate synthetic networks for each graph class: 
\begin{itemize}[leftmargin=*, parsep=0pt, itemsep=0pt, topsep=0pt]
    \item 2-D grid graphs: 
    A grid graph of size $N$ has $N^2$ nodes and $2N^2 + N$ edges. An illustration is provided in  Figure \ref{fig:schematic} (top left).
    \item Wheel graphs: 
    These graphs are described by the number of nodes $N$ and the ratio  $r$ of the spoke edge weights to the circumference edge weights ($r \geq 1$). The central node is by default, indexed $0$. A higher  $r$ indicates that the spoke edges have much higher weights compared to circumference edges. For example, in road networks, these edges may experience higher traffic and thus have higher ground truth weights.
    An illustration is provided in Figure \ref{fig:schematic} (top right). 
    \item Scale-free graphs: These graphs have a degree distribution following a power law and are parametrized by their size (number of nodes $ N $) and the exponent of the power law ($ \gamma $). A higher $ \gamma $ indicates very few high-degree nodes, characteristic of many real-world networks like social networks. Unlike 2-D grid and wheel graphs, scale-free graphs are random, meaning that even with the same parameters, different graph topologies may be generated in different instances. Figure \ref{fig:schematic} illustrates four scale-free graphs with varying power $\gamma $.    
\end{itemize}

  \begin{figure}[!tb]
    \centering
    \raisebox{35pt}{\parbox[b]{.05\textwidth}{}}%
    \subfloat[][2-D grid with $N = 5$]{\includegraphics[width=.35\textwidth]{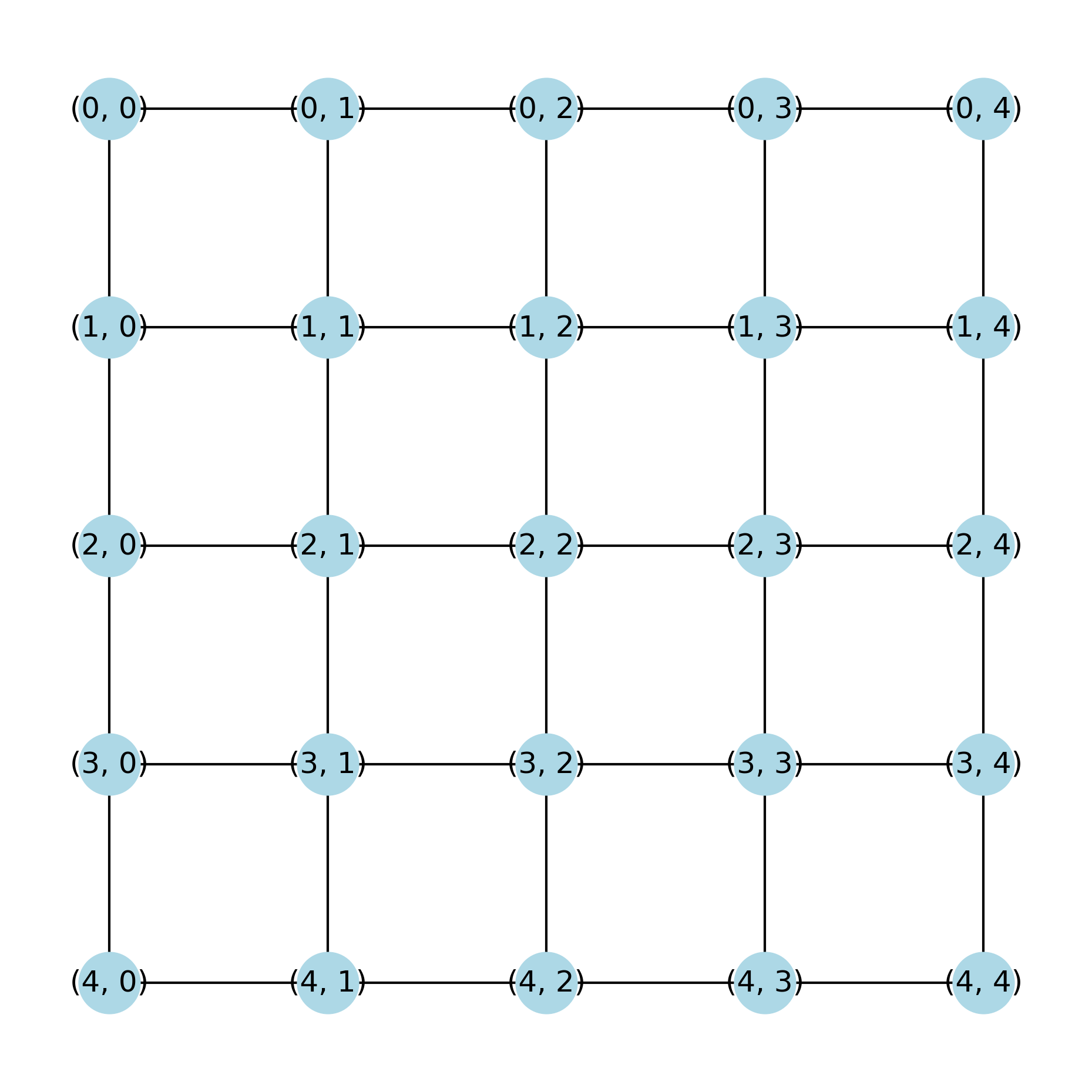}}\hfill
    \subfloat[][Wheel with $N = 10$]{\includegraphics[width=.35\textwidth]{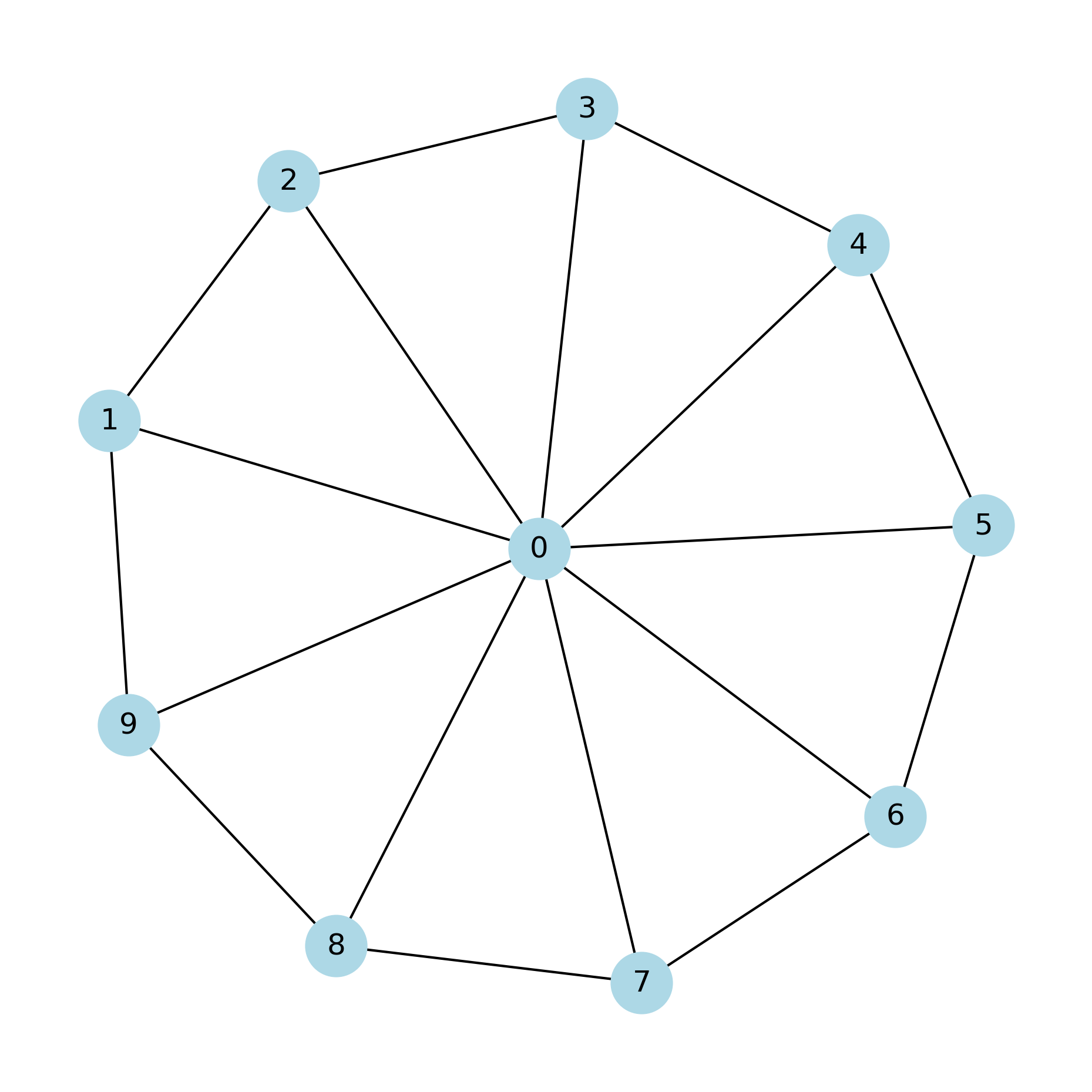}}\par
    \raisebox{35pt}{\parbox[b]{.05\textwidth}{}}%
    \subfloat[][Scale free, $\gamma = 1.1$]{\includegraphics[width=.23\textwidth]{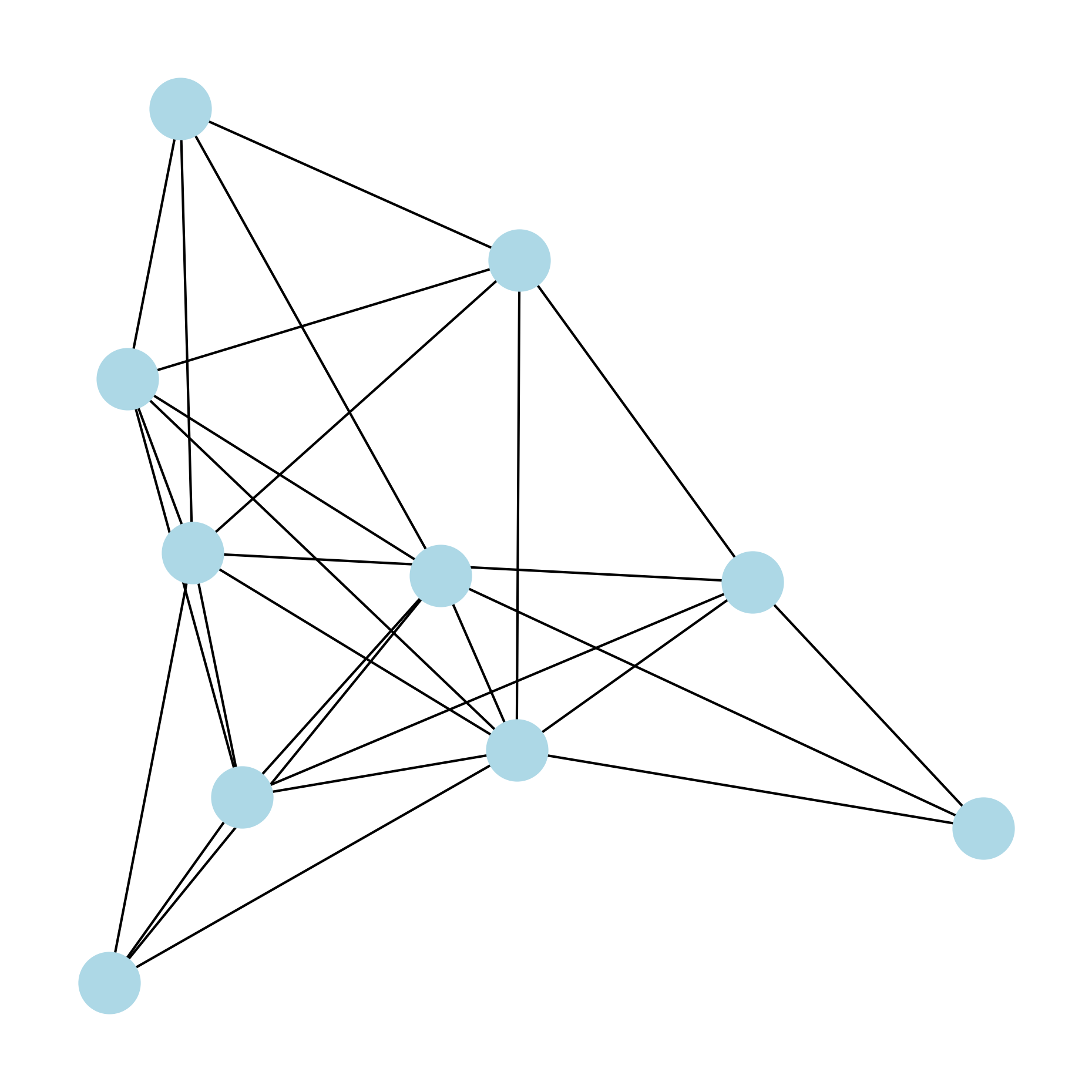}}\hfill
    \subfloat[][Scale free, $\gamma = 1.5$]{\includegraphics[width=.23\textwidth]{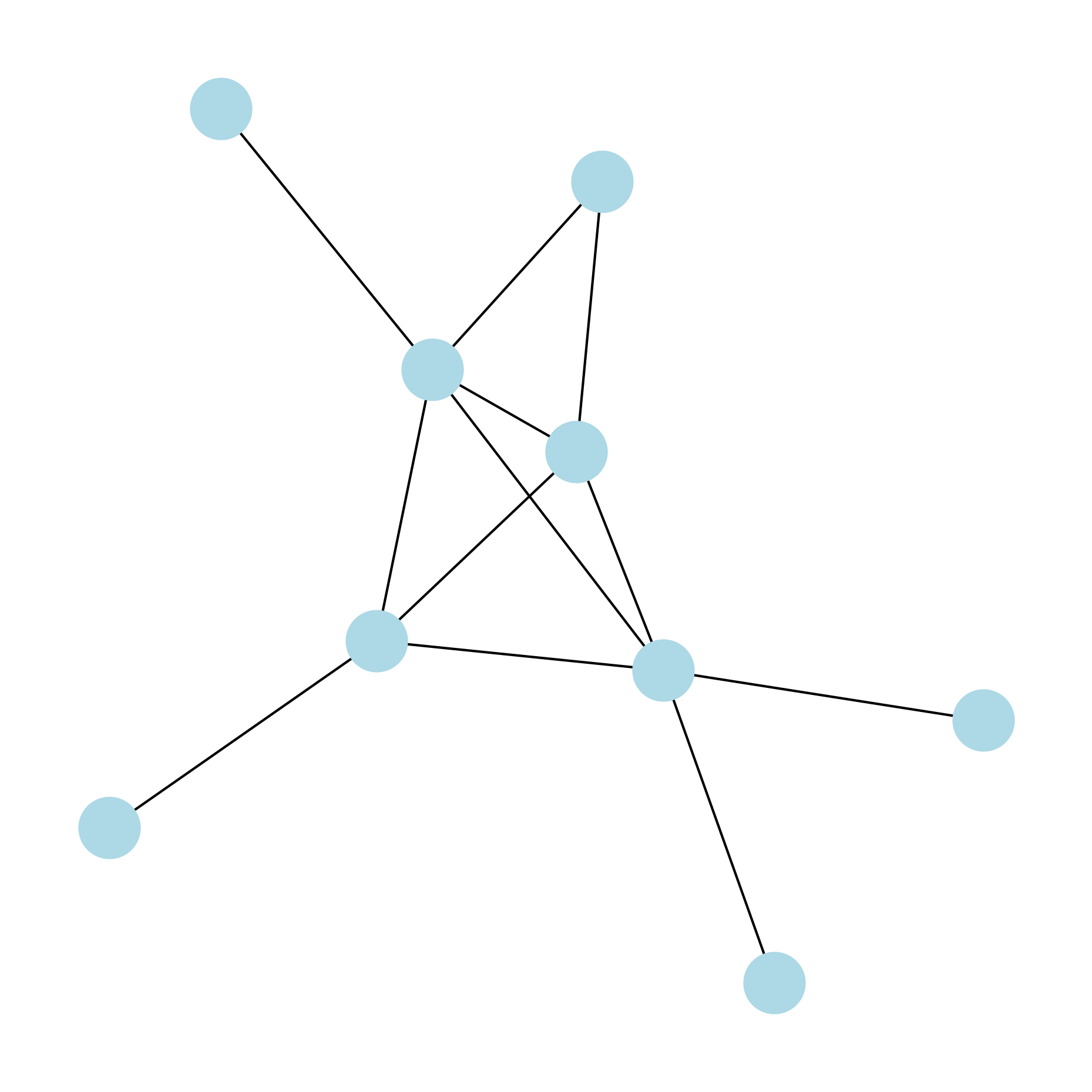}}\hfill
    \subfloat[][Scale free, $\gamma = 2$]{\includegraphics[width=.23\textwidth]{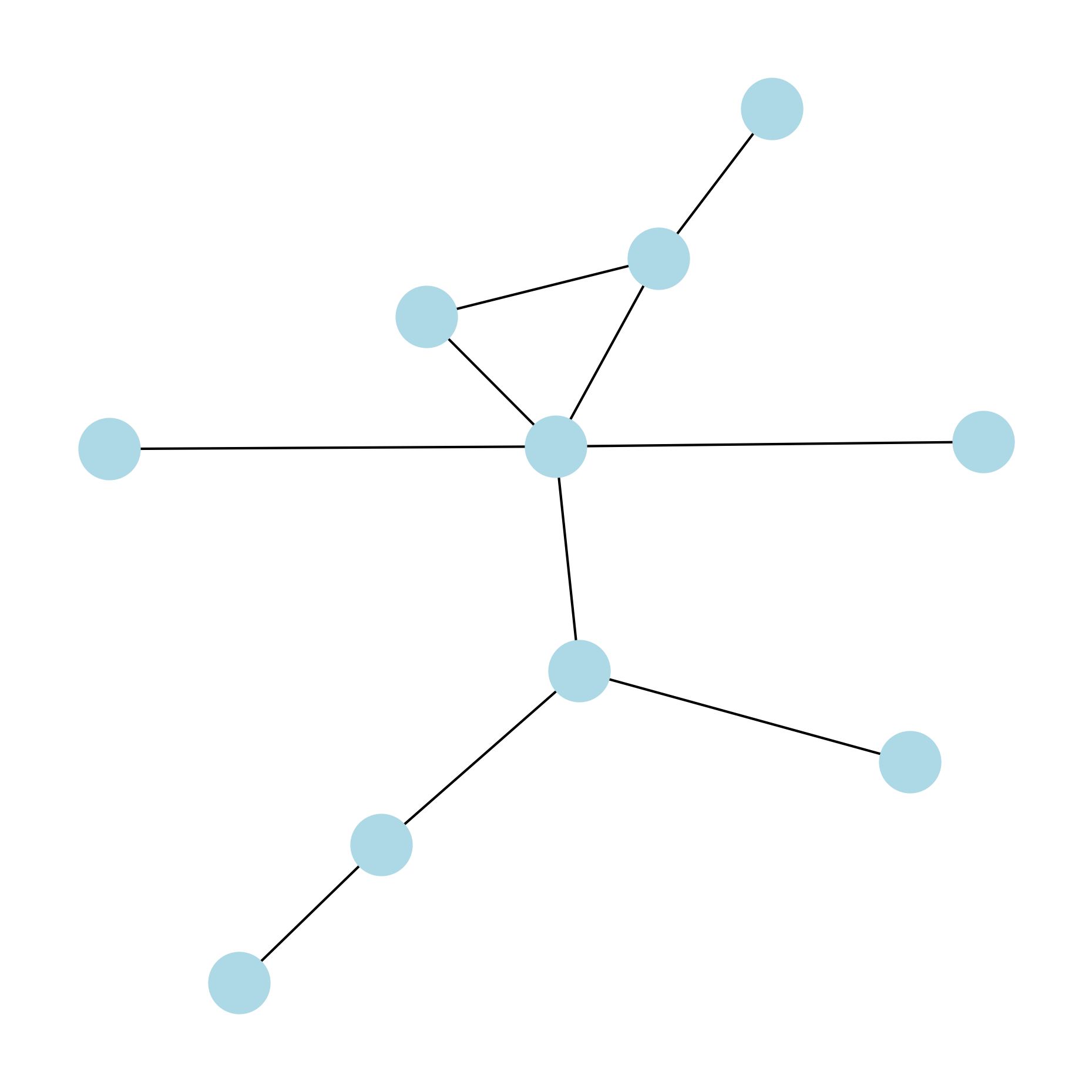}}\hfill
    \subfloat[][Scale free, $\gamma = 3$]{\includegraphics[width=.23\textwidth]{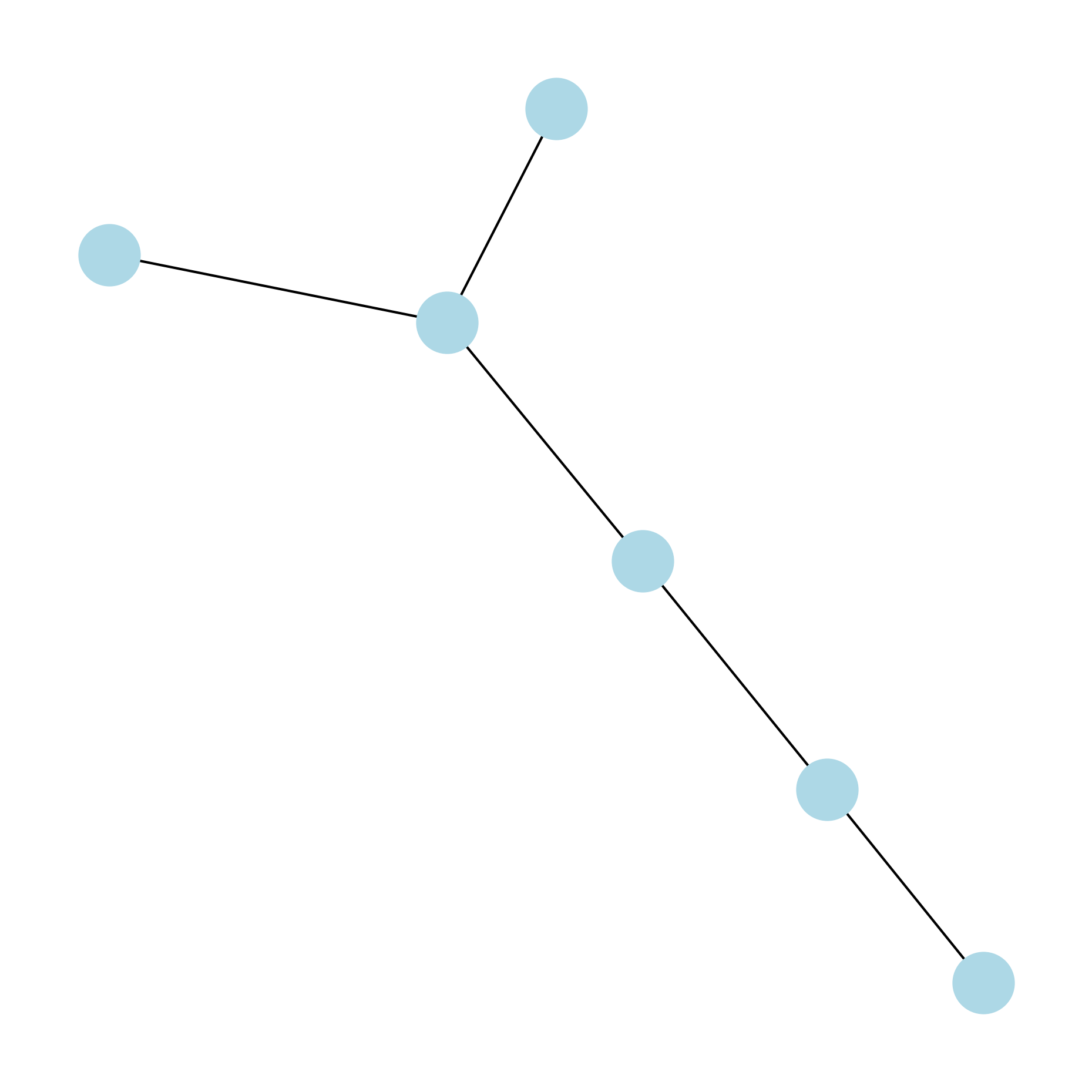}}\par
    \caption{Schematics of the $3$ graph classes introduced earlier. $2$-D grid and wheel graphs are self-explanatory. All scale free graphs are generated with $N = 10$ nodes. Since randomly generated scale free graphs can generate disconnected sub-graphs, the experiments select the largest connected component in each case (resulting in a mismatch in the number of nodes, despite the same starting $N$). Observe that when $\gamma$ is close to $1$, many nodes have very high degrees and the graph becomes more dense. However as $\gamma$ increases, the number of nodes with high degree decreases and the graph becomes more and more tree-like.}
    \label{fig:schematic}
    \end{figure}

\paragraph{Implementation of privacy model:} Given a ground truth graph, we generate $100$ private counterparts by perturbing each edge with additive noise from the standard Gaussian distribution with the desired variance. This, in turn, is a function of privacy parameters $\epsilon$ and $\delta$. As outlined earlier in Section \ref{sec:model}, a post-processing step $\max \left(0, w(e) + Z(e) \right)$ is applied to produce noisy edges $\tilde{w}(e)$ to ensure non-negativity. 
The reported results are averaged over all private realizations of a graph. 


\subsection{Results \& Insights}
\label{sec:results}

\paragraph{Metrics:} Given a graph $G$, we aim to empirically estimate the probability that a randomly chosen node-pair $i-j$ experiences a certain level of relative bias in its shortest path computation under privacy noise. We consider the following levels of relative bias: i) $0\%$ (indicating the shortest path remains unchanged), ii) $0-10\%$, iii) $10-20\%$, iv) $20-40\%$, v) $40-60\%$, vi) $60-100\%$, and vii) $>100\%$. We classify node-pairs by first computing the shortest path weight between all pairs of distinct nodes on $G$ and constructing the weight distribution of these paths. Each node-pair is then categorized based on the quartile of the weight distribution in which its true shortest path weight lies. We will refer to these categories as \Qone, \Qtwo, \Qthree, and \Qfour. \Qone, includes node-pairs whose shortest path weight lies in the first quartile (nodes are very close), while \Qfour\, includes those in the last quartile (nodes are far apart). This categorization allows us to investigate whether privacy noise impacts node pairs differently based on their distance. When presenting our observations, we often compare $\Qone{}$ and $\Qfour{}$ pairs because they represent the two extremes of the spectrum and are expected to have the maximum amount of disparity; however, we note that all trends are gradual as we go from \Qone{} to \Qfour{}. 

In the remainder of this section, we present extensive numerical results for the three classes of graphs described earlier across a wide range of parameter combinations. We rigorously analyze these results, highlighting scenarios where privacy introduces disparate impacts across different groups of node pairs and providing intuition for these effects. We also identify scenarios where certain groups are more robust to privacy noise, formally characterizing \emph{robustness} as the \emph{empirical likelihood of not being affected by privacy noise}.
   
\subsubsection{2D grid graphs.} 
The first class of graphs we explore is the $2-$D grid graph. For each ground truth graph instance, the edge weights are drawn independently from a \emph{Uniform}$[0, 1]$ distribution. The two main parameters of interest here are i) size of grid $N$ and ii) the variance (or standard deviation) of the privacy noise added. We generate results for $3$ different grid sizes $N = 10$ (with $100$ nodes), $N = 20$ (with $400$ nodes) and $N = 40$ (with $1600$ nodes). Similarly, we simulate for $4$ different levels of noise (we do not record standard deviation in absolute terms, rather we express it relative to the mean edge weight): $20~\%$, $50~\%$, $100~\%$ and $200~\%$. Refer to Figure \ref{fig:2Dgrid} for the results, in each row, the grid size remains fixed and the level of noise increases from $20~\%$ to $200~\%$ from left to right, while in each column, the grid sizes increase from $10$ to $40$ at a fixed level of noise. We make the following observations:

As the level of noise increases from left to right, node-pairs across all categories are more likely to incur a strictly positive relative bias. This follows directly from Lemma \ref{lem:prob_2path}: for any node pair $(i,j)$ and any path $P$, a higher noise level leads to a higher probability that $w_{\widetilde G}(P) < w_{\widetilde G}(P^*)$. Aggregating over all paths in $\cP_{ij}$, the overall probability of a strictly positive relative bias increases.

However, there is a clear disparity between the source-destination pairs in \Qone{} and those in \Qfour{}. At any noise level, \Qone{} pairs are much more likely to remain unaffected compared to \Qfour{} pairs. \Qfour{} pairs usually represent nodes that are very far apart. On 2-D grid graphs, pairs of nodes that are farther apart have a larger set of alternative paths (higher $\left| \cP_{ij} \right|$) and a higher number of edges on these paths (higher $S_{max}$), thus facing a higher risk of being affected by privacy noise. Here, the \emph{path cardinality effect} explained in Section \ref{sec:2_pathcase} overtakes the \emph{effective relative noise} effect, in favor of shorter paths. 

The above disparity in empirical probability estimates is particularly amplified at low noise levels. When the noise level is low, \Qfour{} pairs still have a higher chance of being affected due to more edges (higher $S_{max}$). However, at high levels of noise, the path weights are so distorted that the ordering of paths on privatized graph $\widetilde G$ does not reveal any information about the true ordering. This greatly increases the likelihood of picking the wrong path (almost) across all categories of node pairs, reducing the disparity as we move rightwards in the figure. This follows from Lemma \ref{lem:prob_2path} which shows that a higher $\sigma$ increases the probability of picking wrong paths.

These trends are consistent across graph sizes $N$. However, as the grid size increases, the bar plots become increasingly right-heavy. This indicates that for the same noise level, a larger graph is more likely to induce higher magnitudes of relative bias across all categories of node pairs. This is again a consequence of the \emph{path cardinality effect} which is amplified on large graphs.

\begin{figure}[!ht]
  \centering
  \raisebox{35pt}{\parbox[b]{.03\textwidth}{}}%
  \subfloat[][$N = 10$, \textsf{Std }$20\%$]{\includegraphics[width=.24\textwidth]{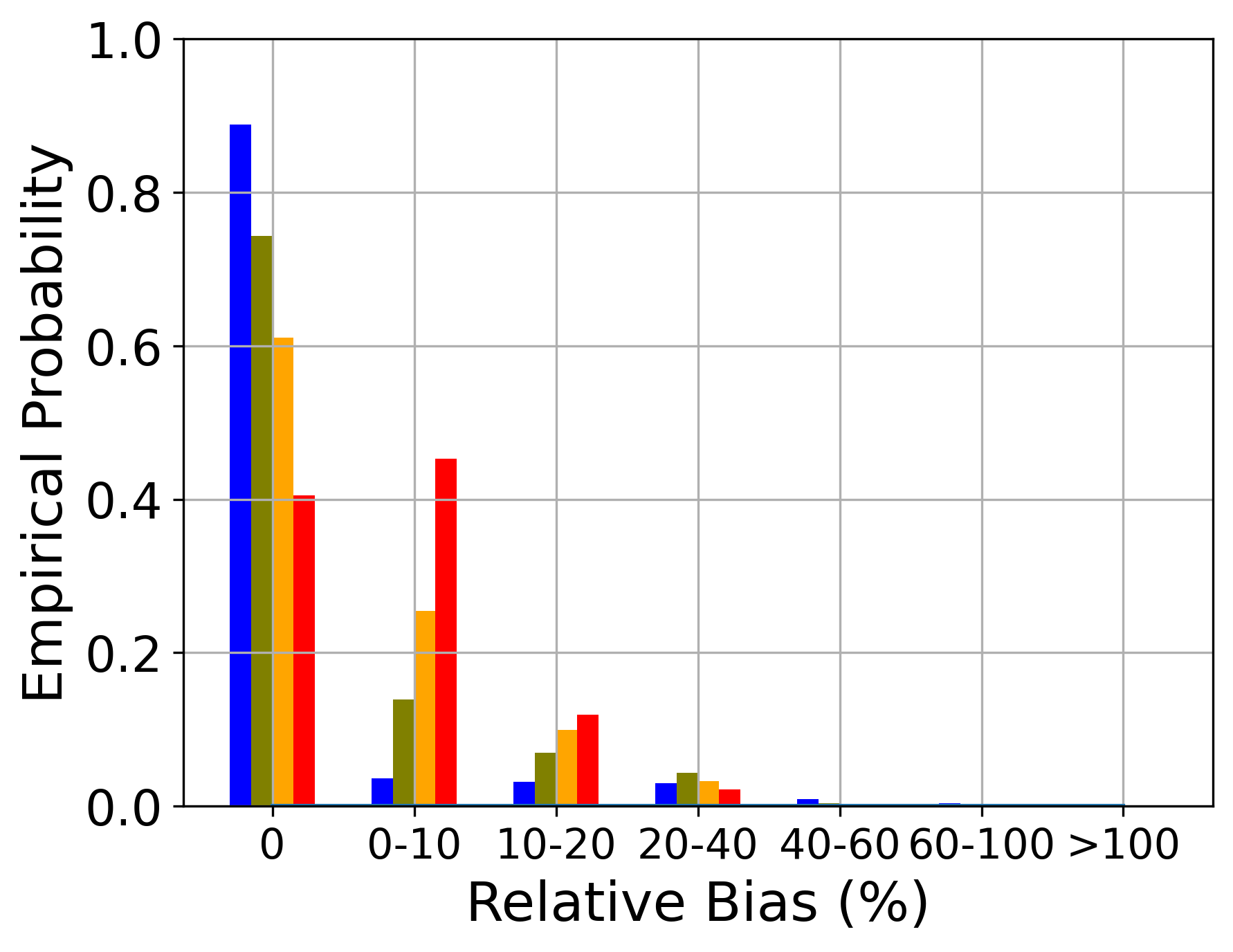}}\hfill
  \subfloat[][$N = 10$, \textsf{Std }$50\%$]{\includegraphics[width=.24\textwidth]{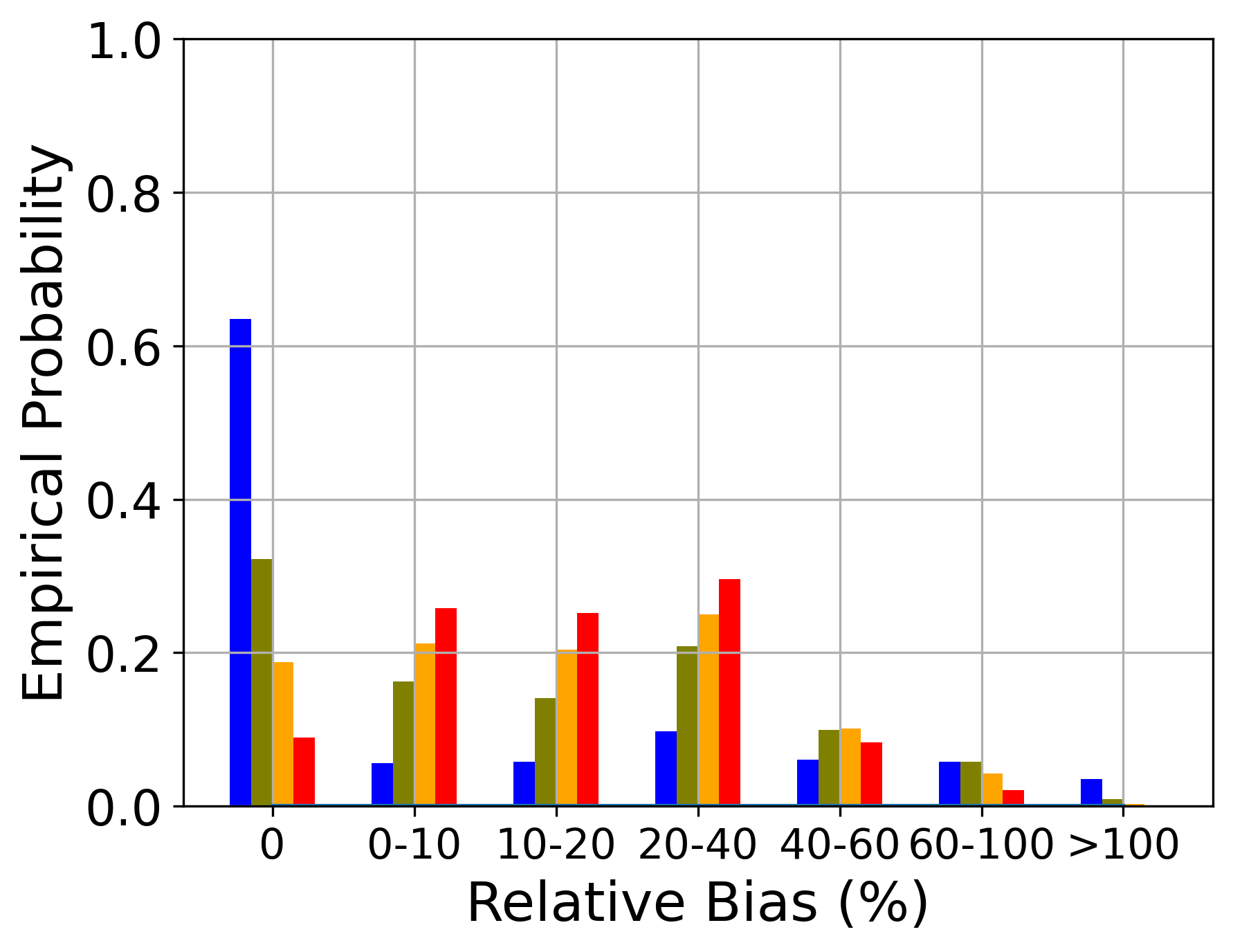}}\hfill
  \subfloat[][$N = 10$, \textsf{Std }$100\%$]{\includegraphics[width=.24\textwidth]{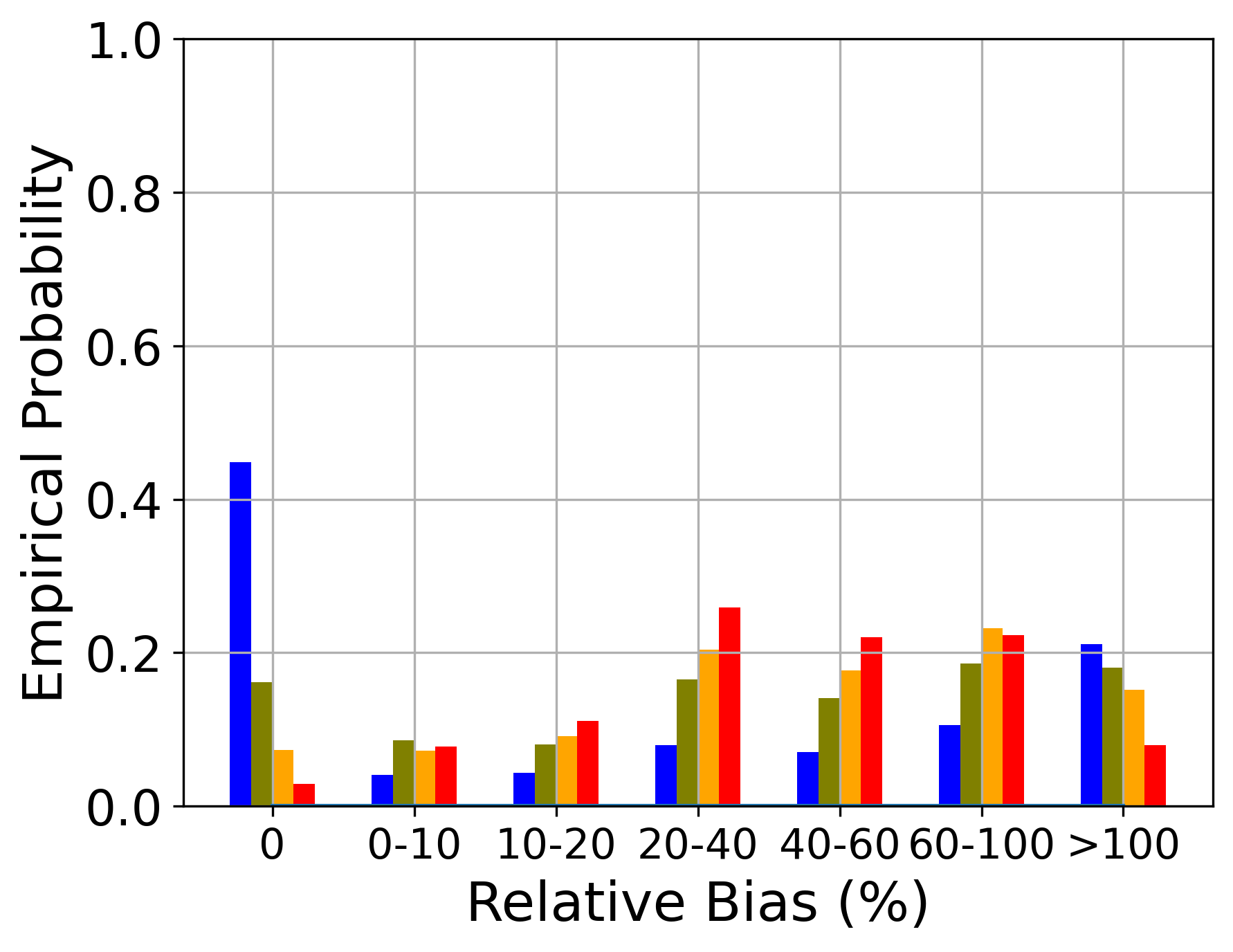}}\hfill
  \subfloat[][$N = 10$, \textsf{Std }$200\%$]{\includegraphics[width=.24\textwidth]{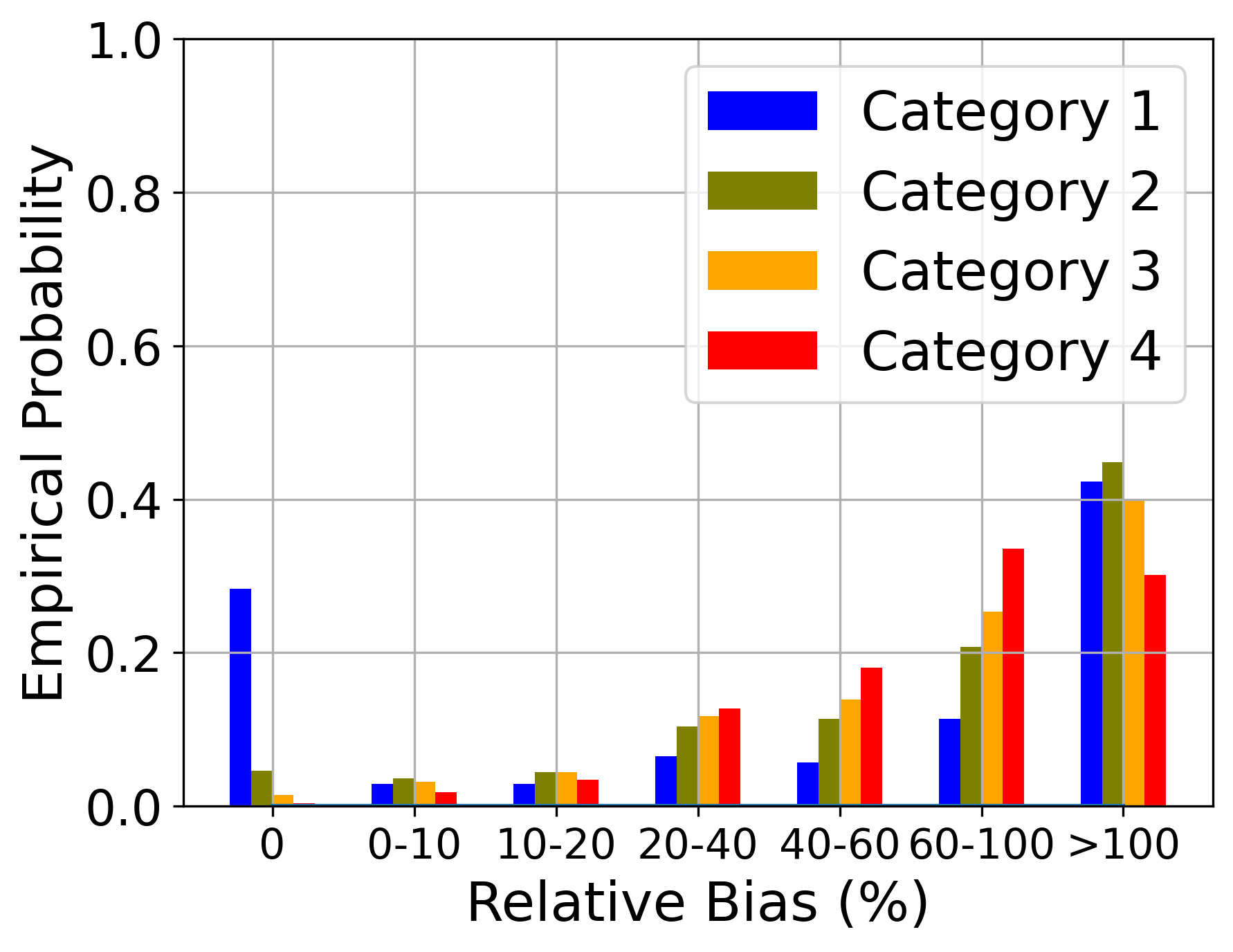}}\par
  \raisebox{35pt}{\parbox[b]{.03\textwidth}{}}
  \subfloat[][$N = 20$, \textsf{Std }$20\%$]{\includegraphics[width=.24\textwidth]{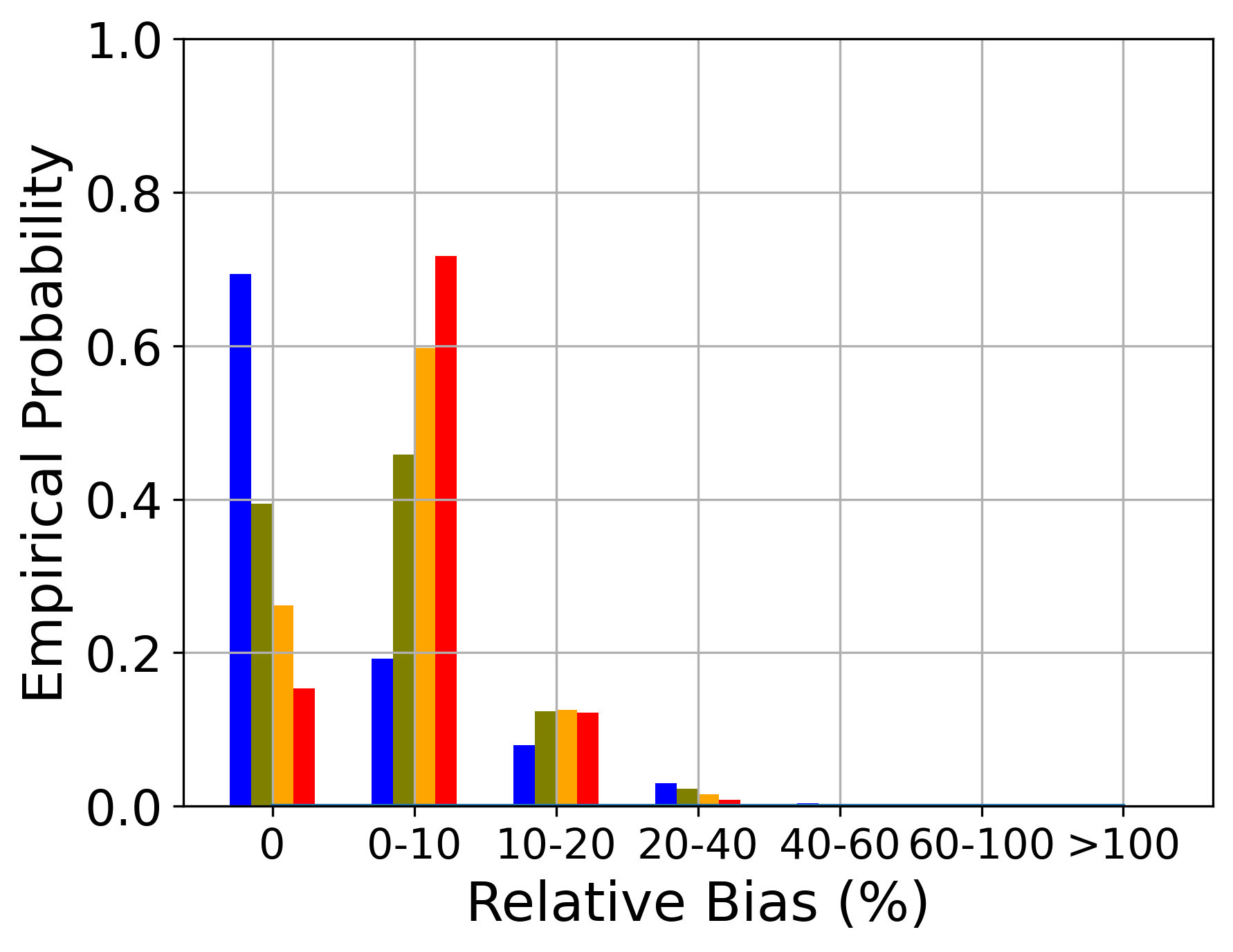}}\hfill
  \subfloat[][$N = 20$, \textsf{Std }$50\%$]{\includegraphics[width=.24\textwidth]{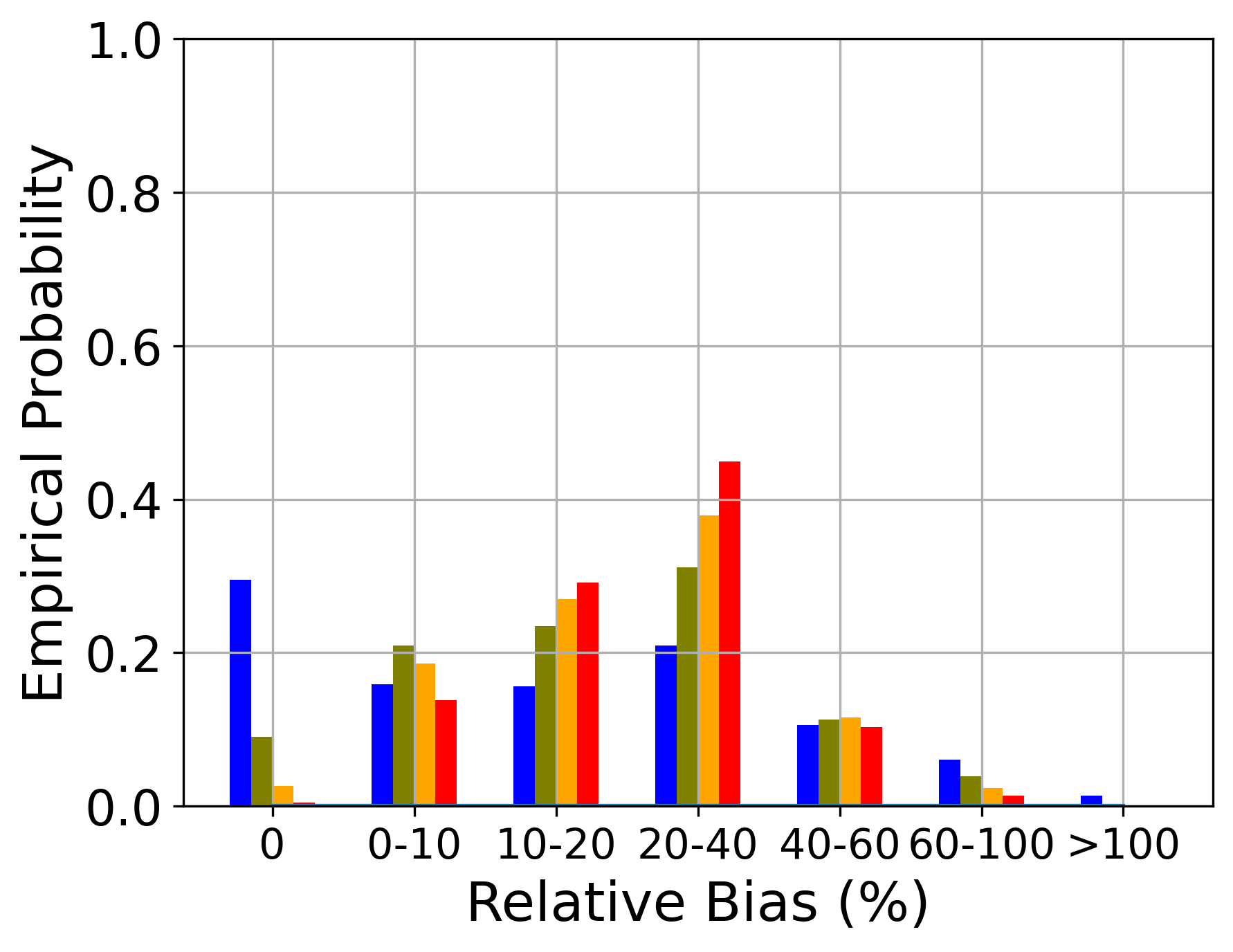}}\hfill
  \subfloat[][$N = 20$, \textsf{Std }$100\%$]{\includegraphics[width=.24\textwidth]{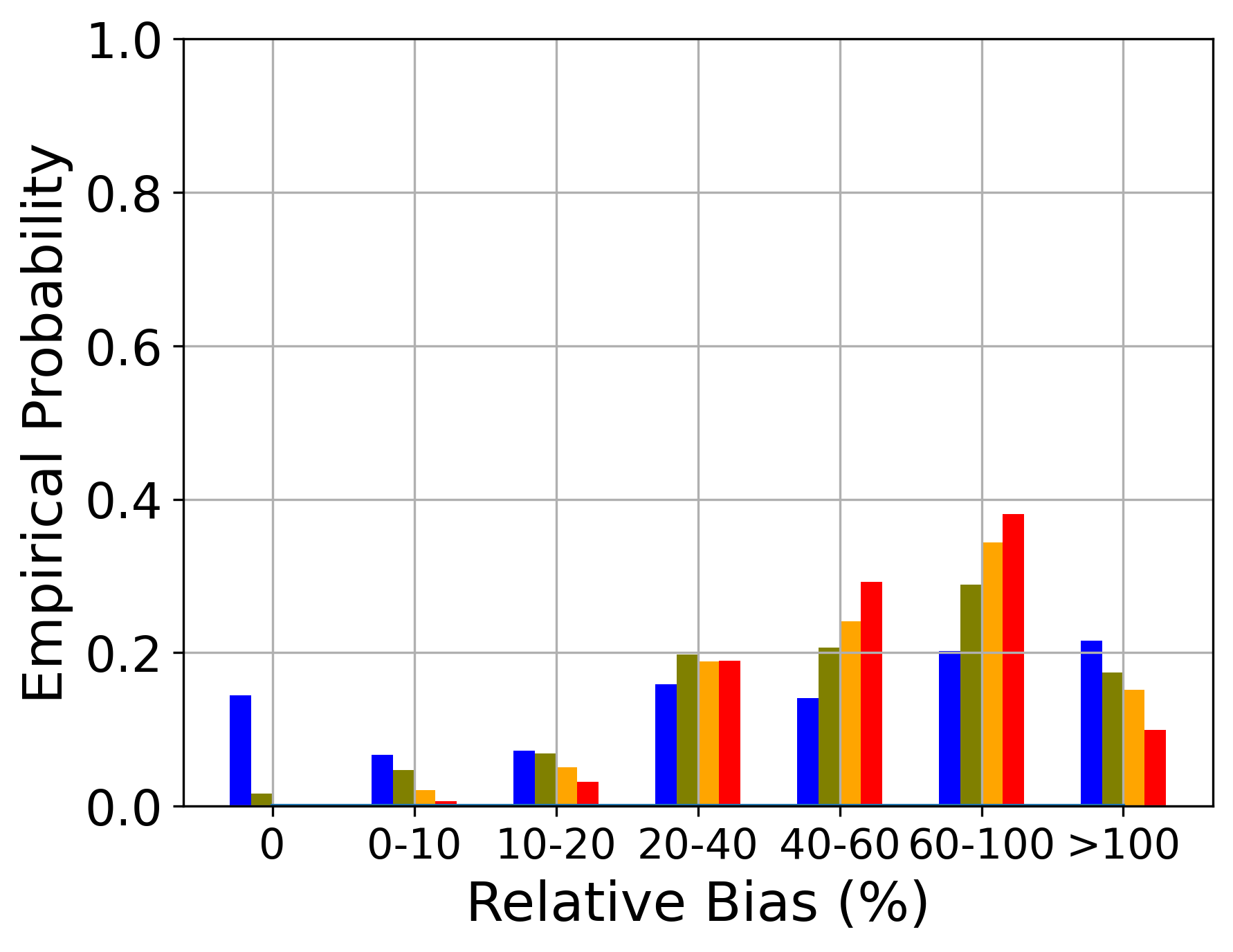}}\hfill
  \subfloat[][$N = 20$, \textsf{Std }$200\%$]{\includegraphics[width=.24\textwidth]{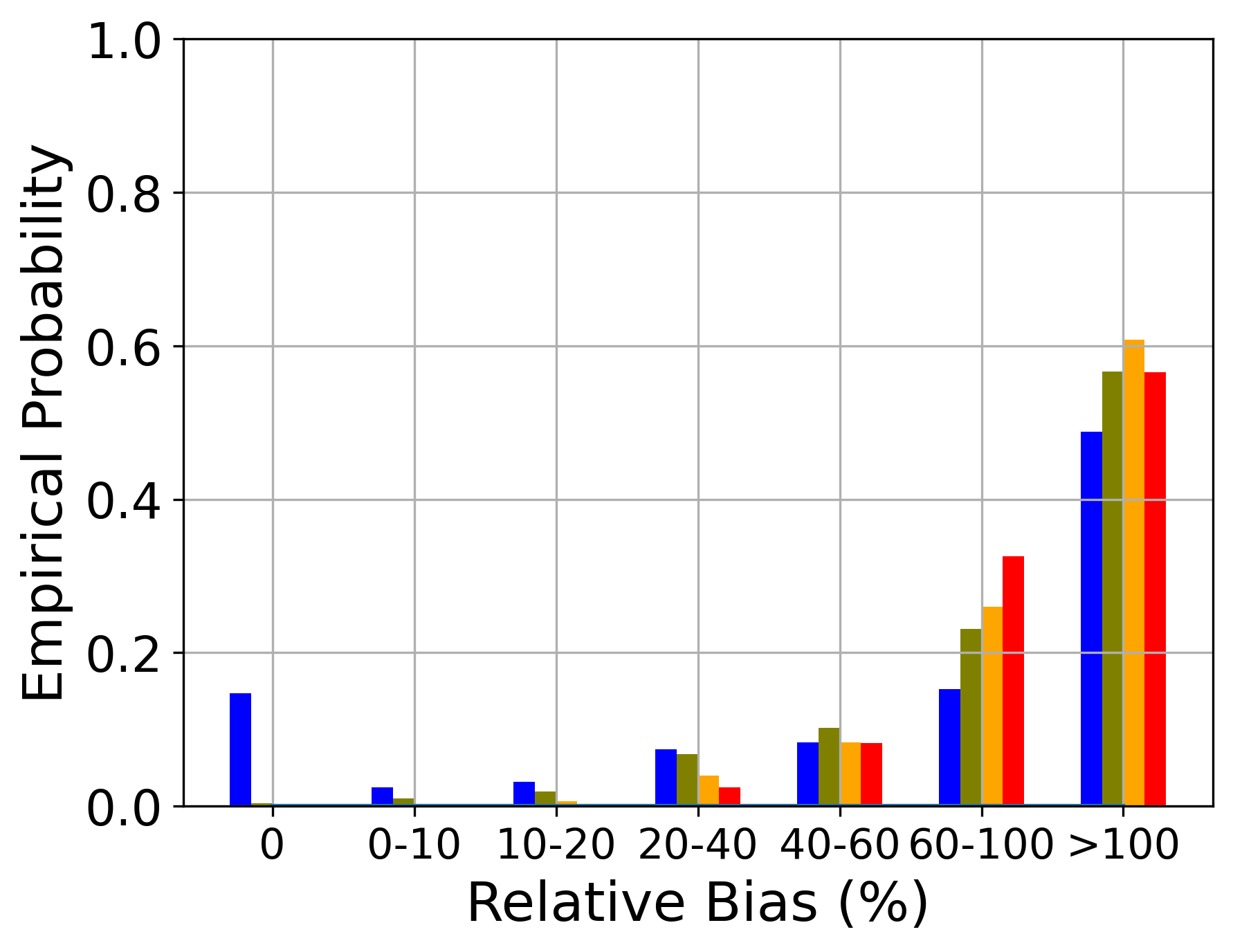}}\par
  \raisebox{35pt}{\parbox[b]{.03\textwidth}{}}
  \subfloat[][$N = 40$, \textsf{Std }$20\%$]{\includegraphics[width=.24\textwidth]{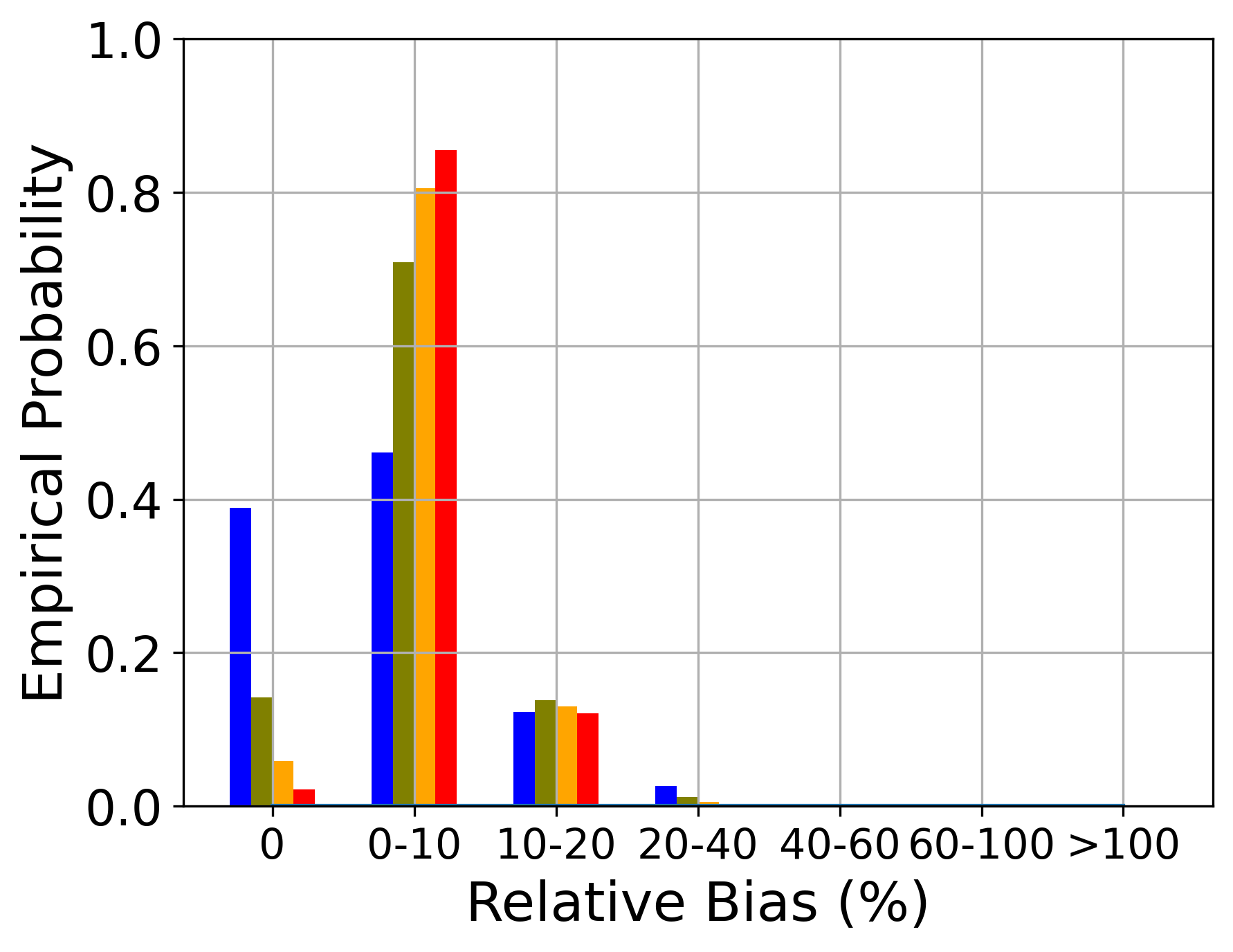}}\hfill
  \subfloat[][$N = 40$, \textsf{Std }$50\%$]{\includegraphics[width=.24\textwidth]{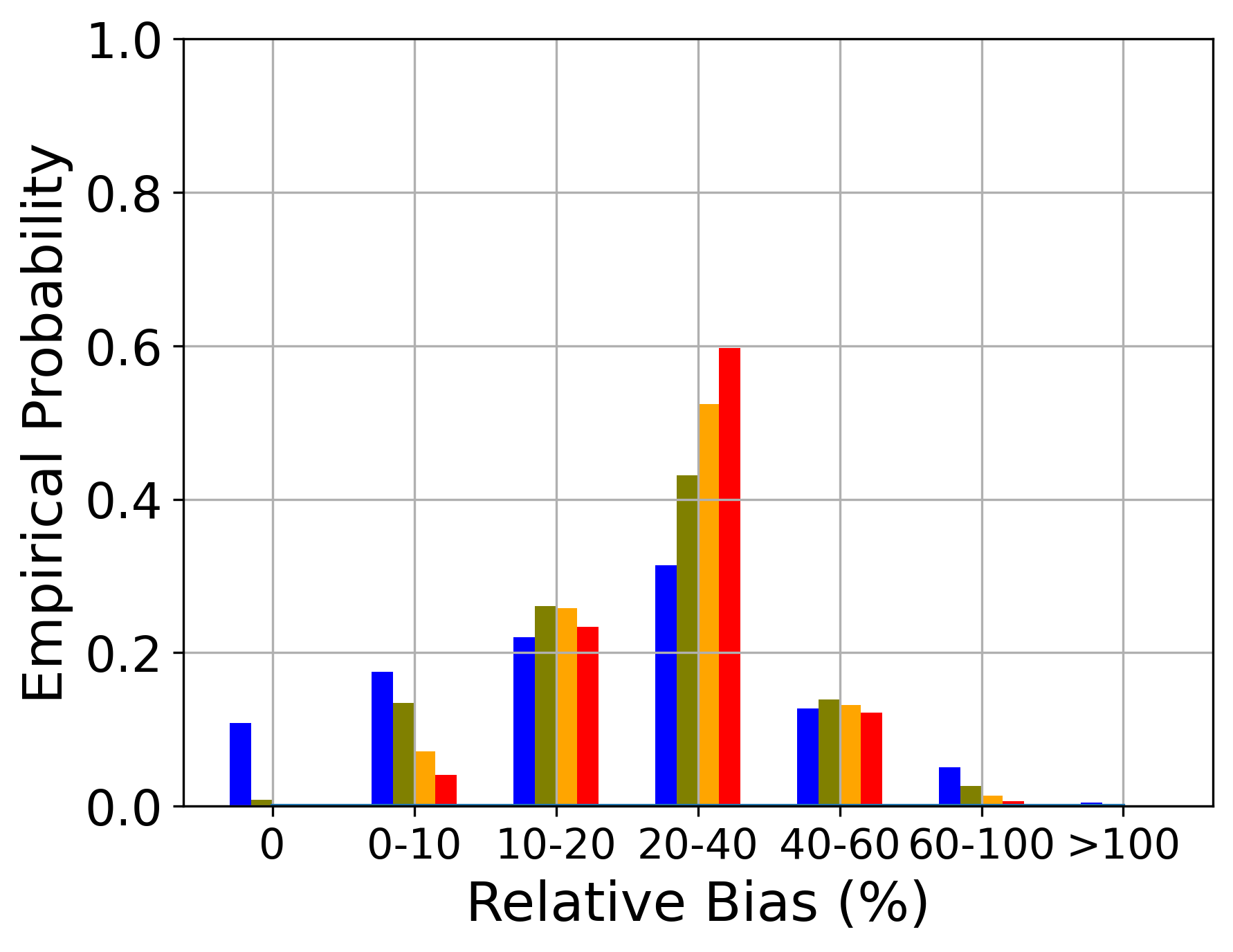}}\hfill
  \subfloat[][$N = 40$, \textsf{Std }$100\%$]{\includegraphics[width=.24\textwidth]{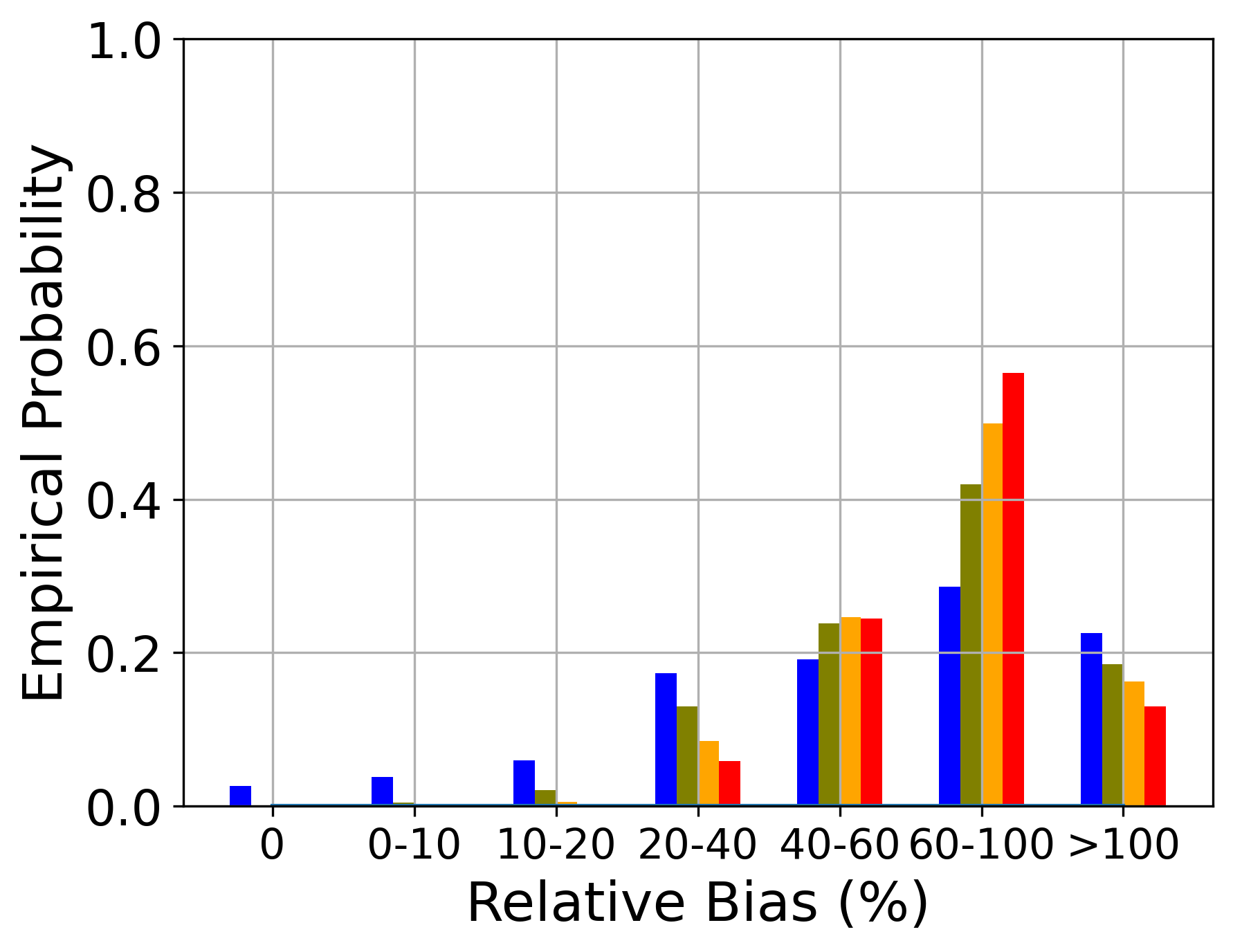}}\hfill
  \subfloat[][$N = 40$, \textsf{Std }$200\%$]{\includegraphics[width=.24\textwidth]{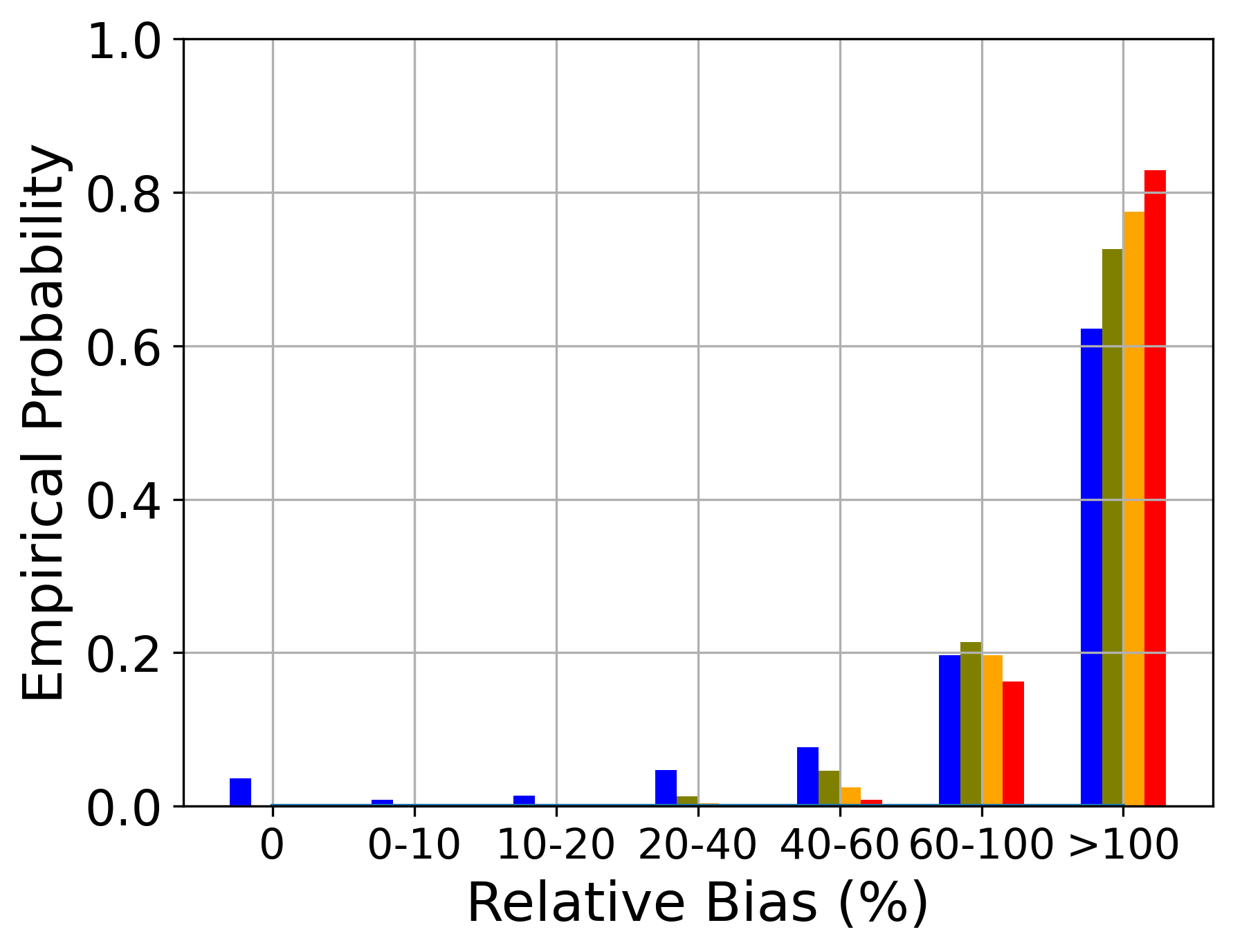}}
  \caption{2D grid graphs: Empirical probability estimates of incurring different levels of relative bias on shortest path computation across different categories of node pairs. We plot results for different graph sizes ($10 \times 10$, $20 \times 20$ and $40 \times 40$) and different levels of noise (standard deviation of noise is $20\%$, $50\%$, $100\%$, $200\%$ of mean edge weight). }
  \label{fig:2Dgrid}
\end{figure}

\smallskip\noindent{\bf Sparsity analysis.}
To further shed light on the disparity effects discussed above, 
we explore another variant of $2-$D grid graphs parametrized by a \emph{sparsity factor} (\textsf{Sp}), which is the percentage of edges with a ground-truth weight of zero\footnote{Note that even at a sparsity of $0$, there may be a significant amount of edges with a low ground truth weight, albeit not zero. For example, about $5$ percent of edges are expected to have weight $<0.05$ under a uniform distribution.}. Real-life transportation networks often have a significant proportion of edges with zero or near-zero traffic. We refer to these edges as \emph{sparse}. 
Observe that \emph{sparse} edges (those that have a $0$ ground true weight) are unique because they only contribute positive bias to the path weight, as compared to non-sparse edges. Our goal is to investigate if the sparsity factor of a graph determines how privacy affects shortest paths on it. We present results on a graph of size $N = 20$ for $4$ different sparsity factors $0~\%$, $25~\%$, $50~\%$ and $75~\%$ and at two different noise levels $20~\%$ and $50~\%$ (Figure \ref{fig:2Dgrid_sparsity}). Here, sparsity introduces two interesting effects that are in tension with each other: 
\begin{itemize}
\item \emph{1) Impact on the number of bad paths:} As the sparsity factor increases, most paths have low total weight. In turn, there are fewer bad paths whose weight is significantly larger than that of the best path, and it becomes \emph{less likely across all categories of node pairs to switch to a worse-off path}; for example, in the extreme case where the sparsity factor is $1$, all paths have weight $0$ and are equivalent. Further, \emph{longer paths are disproportionately affected and more likely to switch to a worse path than shorter paths}: this is because node pairs which are farther apart are more likely to have a short alternative due to sparsity.

\item \emph{2) Impact on path weight estimation bias:}  Equation~\eqref{eq:noisy_weight} highlights that the noisy weights, if negative, are rounded up to $0$. In particular, this introduces positive bias on edge weights. 

However, this bias affects edges disproportionately. In particular, edges whose weights are closer to $0$ experience more positive bias (as these edges have a high probability of needing to be rounded up after noise addition), while edges with higher weights experience bias closer to $0$ on average (after adding noise, these weights are almost never negative and do not need rounding up). This means that paths with fewer edges are disproportionately more likely to be overestimated compared to paths with more edges. 

This effect makes it i) \emph{more likely across all categories of node pairs to switch to a worse-off path} and ii) implies that \emph{shorter pairs are more likely to be affected} since they tend to be overestimated. Further, these bias effects increase as more noise is added, as it becomes more and more likely that enough noise is added that rounding to $0$ becomes necessary and the bias becomes positive.
\end{itemize}

Figure \ref{fig:2Dgrid_sparsity} shows the tension between these two effects. With a \textsf{Sp} factor of $0.75$, the first effect seems to take over and reduce the likelihood of bias; this is expected, as at this point, most paths are very short across all categories of node pairs and there are few opportunities to change to a significantly worse path. Subfigures (d) and (e) particularly highlight the second effect, with \Qone{} node pairs having a more extreme distribution of relative bias. The second effect is also visible in how outcomes for all categories of node pairs are worse with a noise level of $50~\%$ as opposed to $20~\%$ for all levels of sparsity. 

The interaction between the two effects can be complex. We note that for a noise level of $20~\%$, the first effect seems to dominate, leading to less overall relative bias, and this bias seems to affect \Qfour{} node pairs more than \Qone{} pairs. However, as the noise level increases to $50~\%$, the second effect starts becoming important, leading to potentially complex trends. At very high levels of sparsity (Sp $0.75$), the first effect seems to take over with \Qone{} node pairs becoming extremely robust to privacy noise and \Qfour{} pairs being more affected.

\begin{figure}[!ht]
  \centering
  \raisebox{35pt}{\parbox[b]{.05\textwidth}{}}%
  \subfloat[][\textsf{Std }$20\%$, \textsf{Sp }$0.25$]{\includegraphics[width=.32\textwidth]{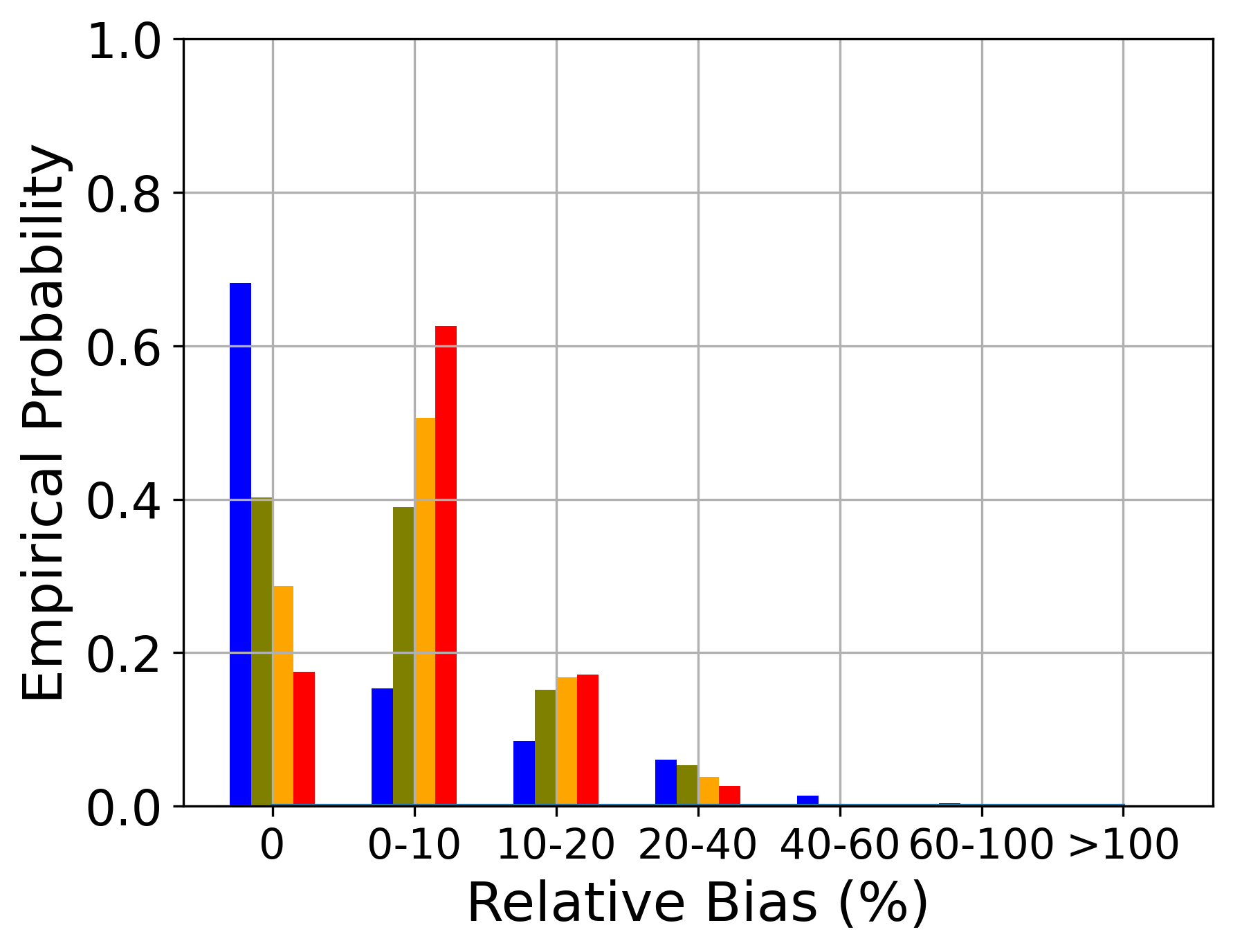}}\hfill
  \subfloat[][\textsf{Std }$20\%$, \textsf{Sp }$0.50$]{\includegraphics[width=.32\textwidth]{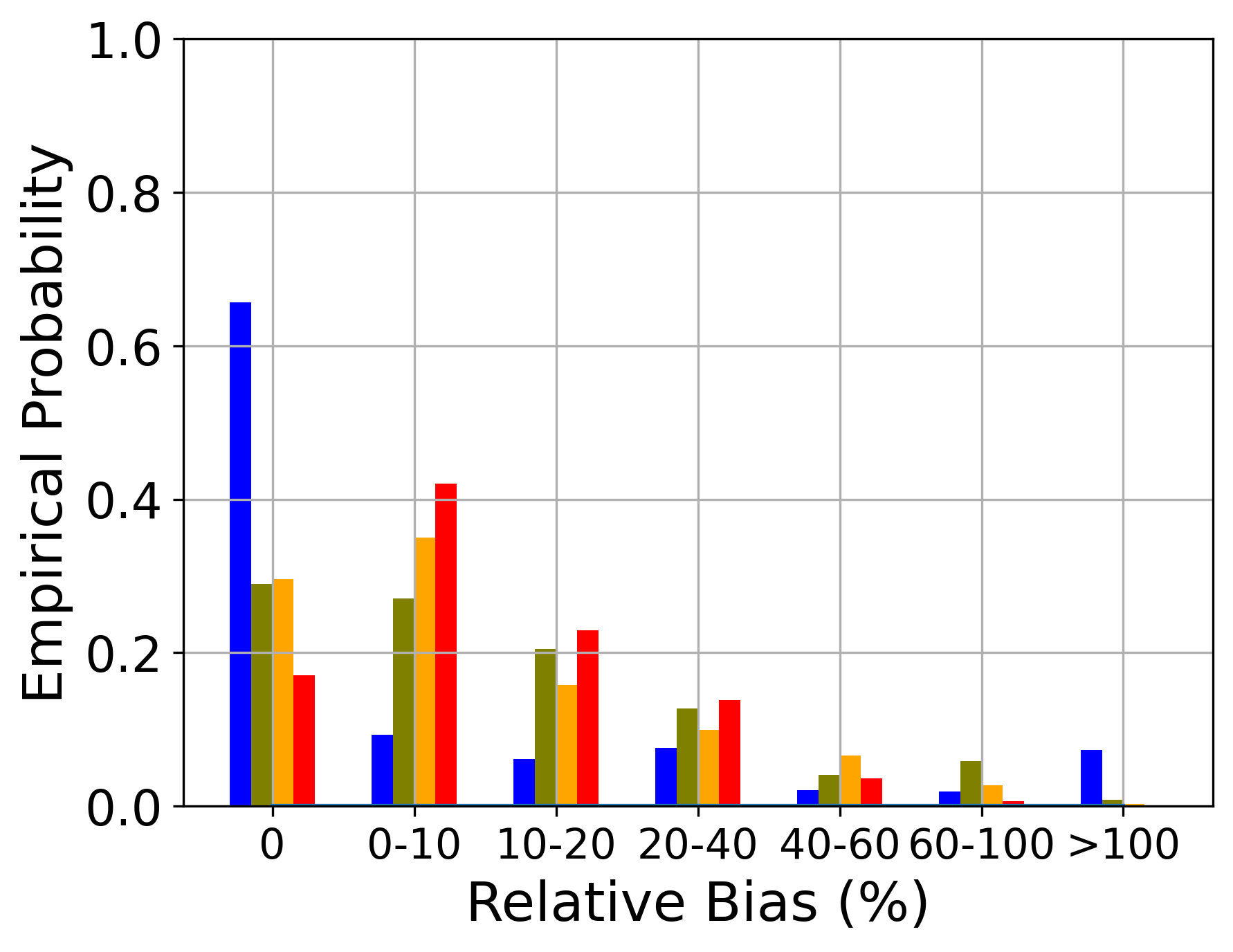}}\hfill
  \subfloat[][\textsf{Std }$20\%$, \textsf{Sp }$0.75$]{\includegraphics[width=.32\textwidth]{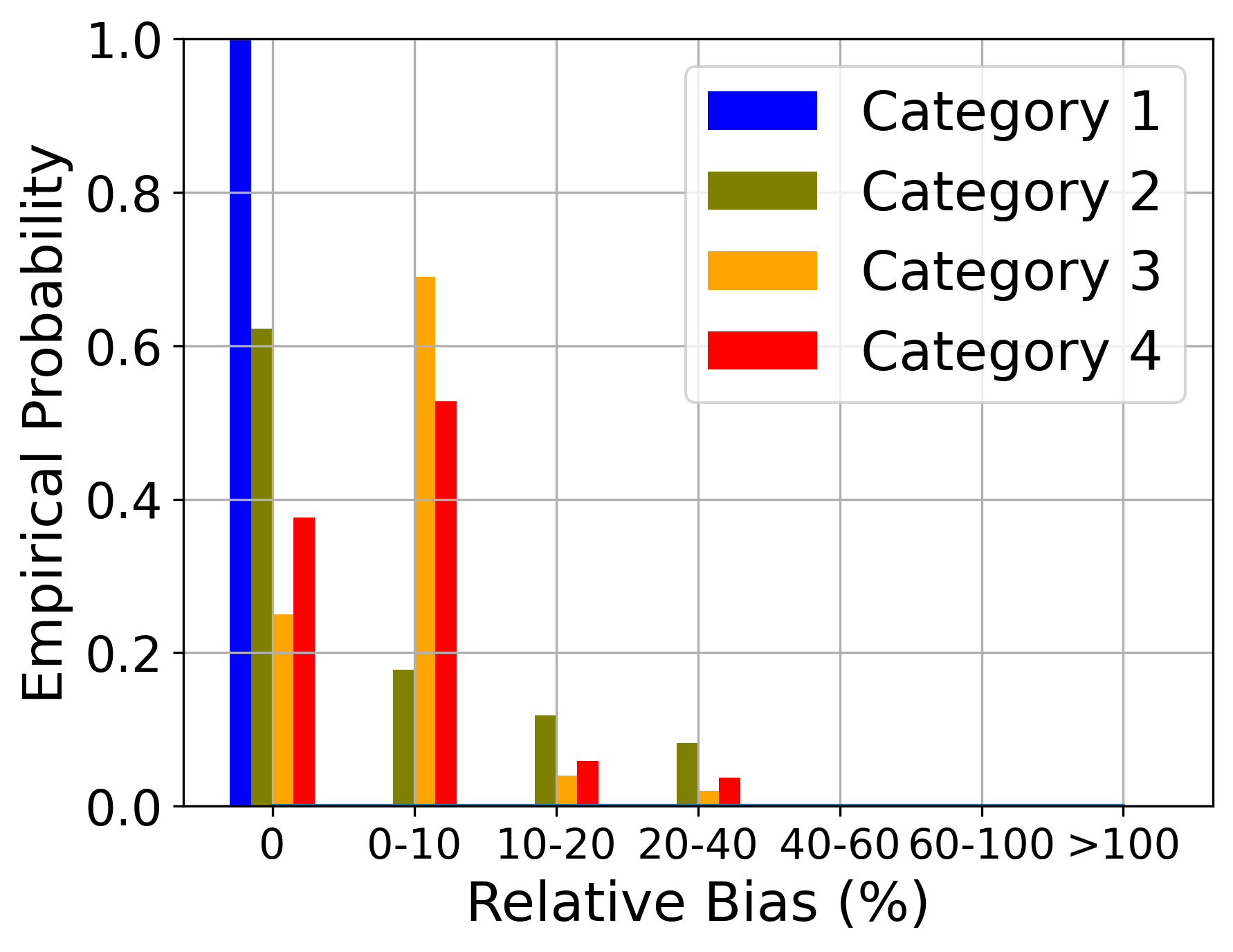}}\par
  \raisebox{35pt}{\parbox[b]{.05\textwidth}{}}
  \subfloat[][\textsf{Std }$50\%$, \textsf{Sp }$0.25$]{\includegraphics[width=.32\textwidth]{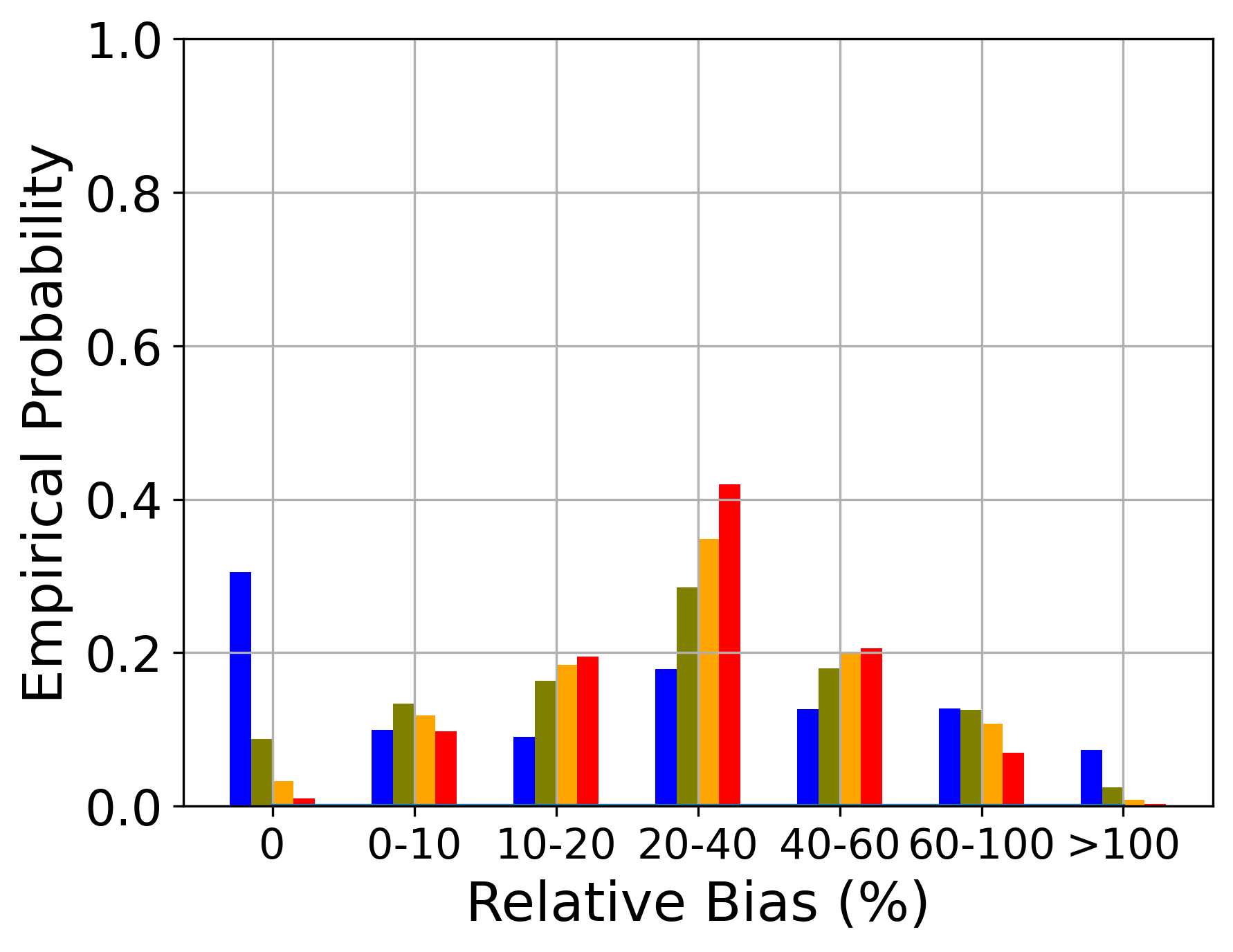}}\hfill
  \subfloat[][\textsf{Std }$50\%$, \textsf{Sp }$0.5$]{\includegraphics[width=.32\textwidth]{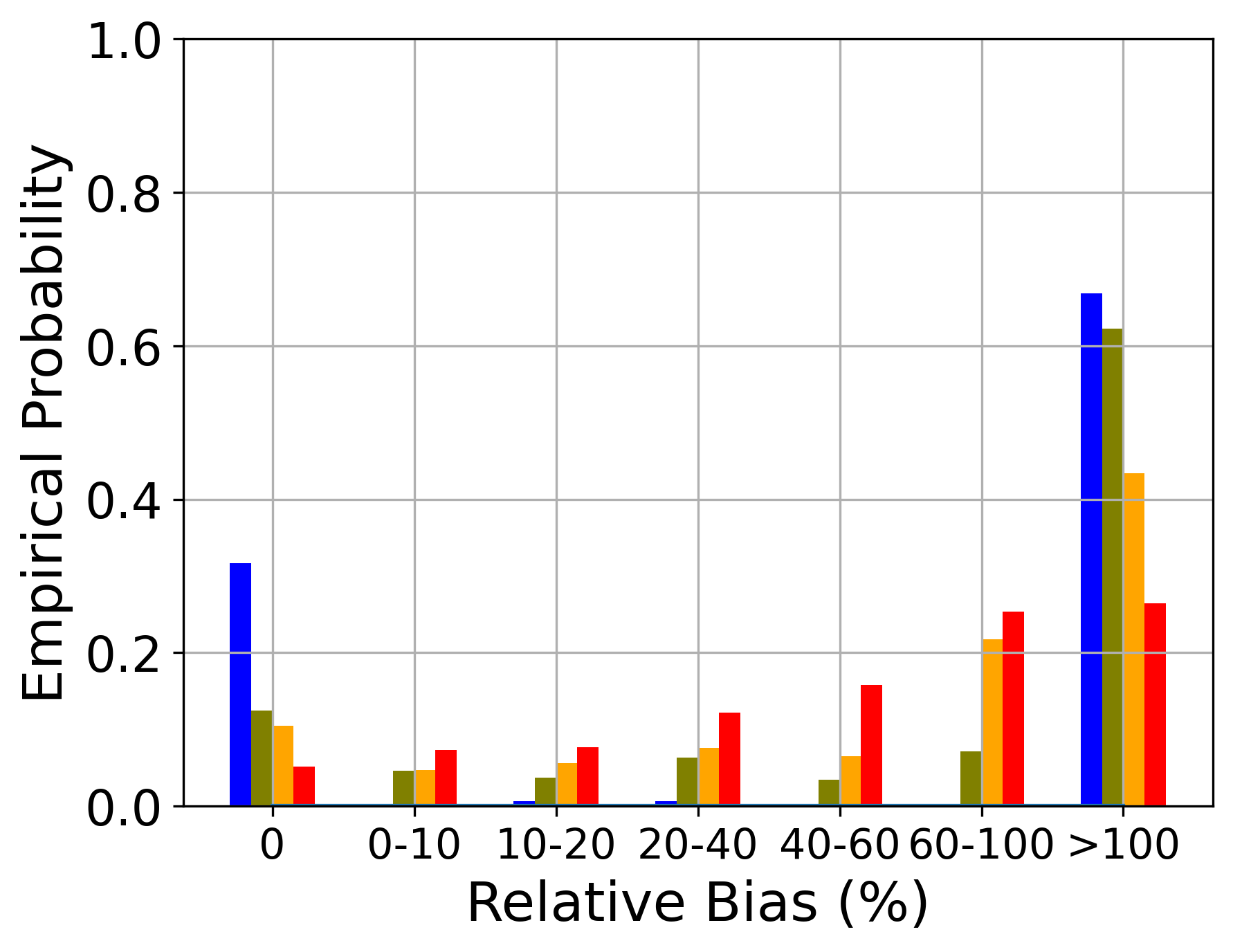}}\hfill
  \subfloat[][\textsf{Std }$50\%$, \textsf{Sp }$0.75$]{\includegraphics[width=.32\textwidth]{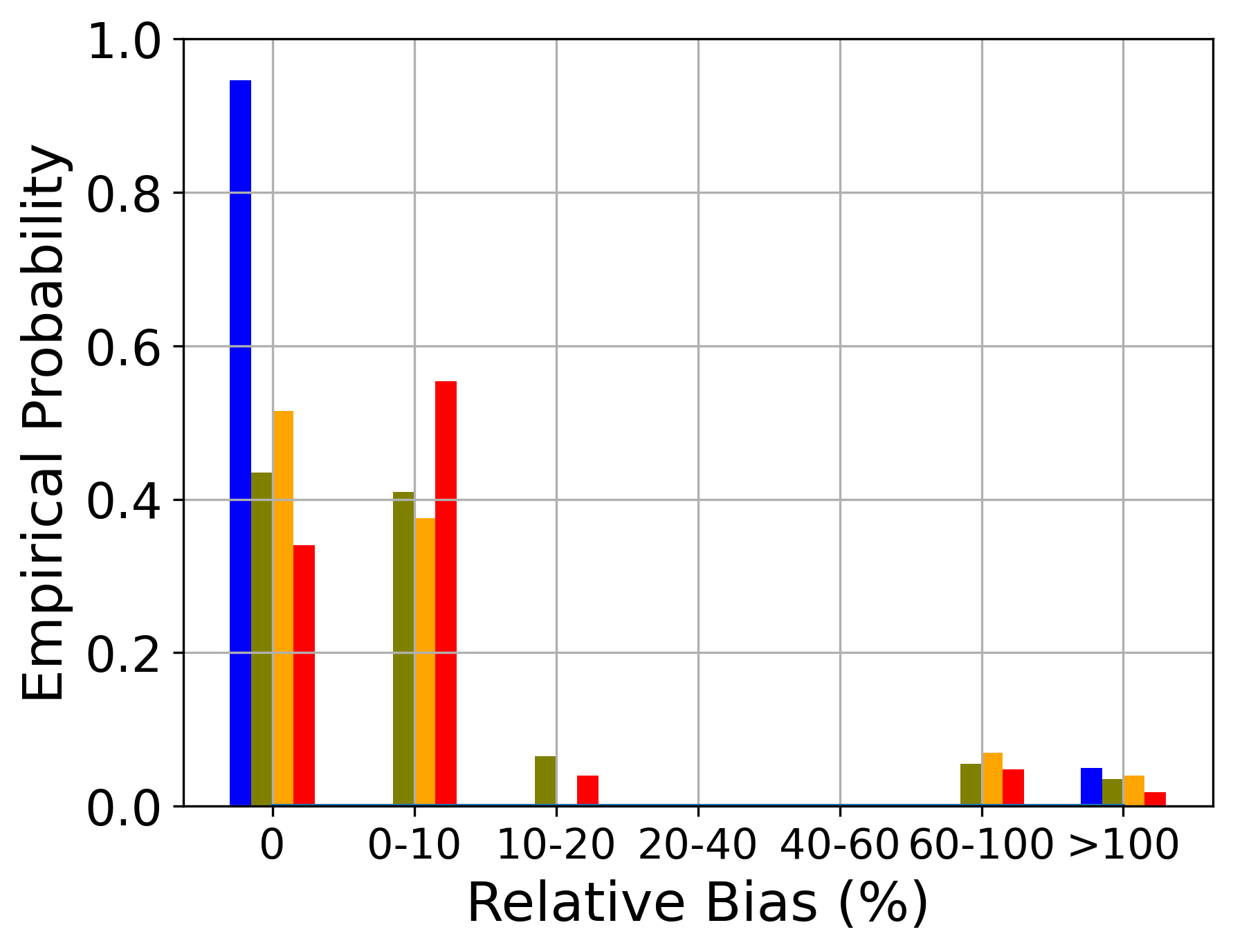}}
  \caption{2D grids graphs: Effects on privacy noise on path change statistics when graphs are sparse in a specific way: many edges on the graph have close to zero traffic and hence have $0$ ground truth weight. In each row (from left to right), we plot results for different levels of sparsity ($0\%$, $25\%$, $50\%$, $75\%$) at a fixed noise variance. In each column, we see the effect of varying noise variance at a fixed sparsity ratio.}
  \label{fig:2Dgrid_sparsity}
\end{figure}

\subsubsection{Wheel graphs.} 

Next, we examine wheel graphs. These graphs have two types of edges: {\bf i)} circumference edges and {\bf ii)} spoke edges. 
All circumference edges have their ground truth weights drawn independently from a \emph{Uniform}$[0, 1]$ distribution. Since spoke edges are expected to accommodate larger flows, their ground truth weights are drawn independently from \emph{Uniform}$[0, r]$ where $r \geq 1$. Thus, $r$ represents the ratio of mean edge weights for the two groups of edges. For numerical experiments, our parameters of interest are the following: i) size of the graph $N$ and ii) ratio $r$. However, wheel graphs have circular symmetry which means that $N$ does not affect the outcomes independently. So, we fix $N = 101$ for all experiments and only vary $r$ from the following set: $\{1, 20, 50, 100\}$. Additionally, like all previous experiments, we also consider different levels of privacy noise: $20~\%$, $50~\%$ and $100~\%$. Refer to Figure \ref{fig:wheel} for a graphical representation of all results, based on which we make the following observations:  

Similar to the observations for 2-D grid graphs, as the levels of noise increase, node pairs of all categories are more likely to be affected. Once again, \Qone{} pairs are significantly more robust against privacy noise compared to \Qfour{} pairs, for the same reasons as highlighted earlier.
    
The most striking observation is that the ratio $r$ greatly influences the degree to which bias is realized. As $r$ increases, all node pairs become more and more robust to privacy noise. This is a direct consequence of the topology of a wheel graph. Note that there are only two kinds of source-destination pairs: 
{\bf i)} between a central node and an outer node, and 
{\bf ii)} between two outer nodes. In both cases, with high $r$, there is only one candidate path that is the most viable shortest path. For case {\bf i)}, it involves identifying the spoke edge with the least weight, traversing it to reach the corresponding outer node, and then traveling along the circumference to reach the destination. For case {\bf ii)}, the only feasible least-cost path is to travel along the low-weight paths on the circumference (any trip to the center involves traversing a high-weight spoke edge and is sub-optimal). This result follows from Theorem \ref{thm:q_beta}: in this case, the large gap $\beta$ between the best path and all other paths drives the probability $q_{\beta}$ to very low levels, leading to a high degree of robustness.


\begin{figure}[!ht]
  \centering
  \raisebox{35pt}{\parbox[b]{.05\textwidth}{}}%
  \subfloat[][\textsf{Std }$20\%$, $r=1$]{\includegraphics[width=.30\textwidth]{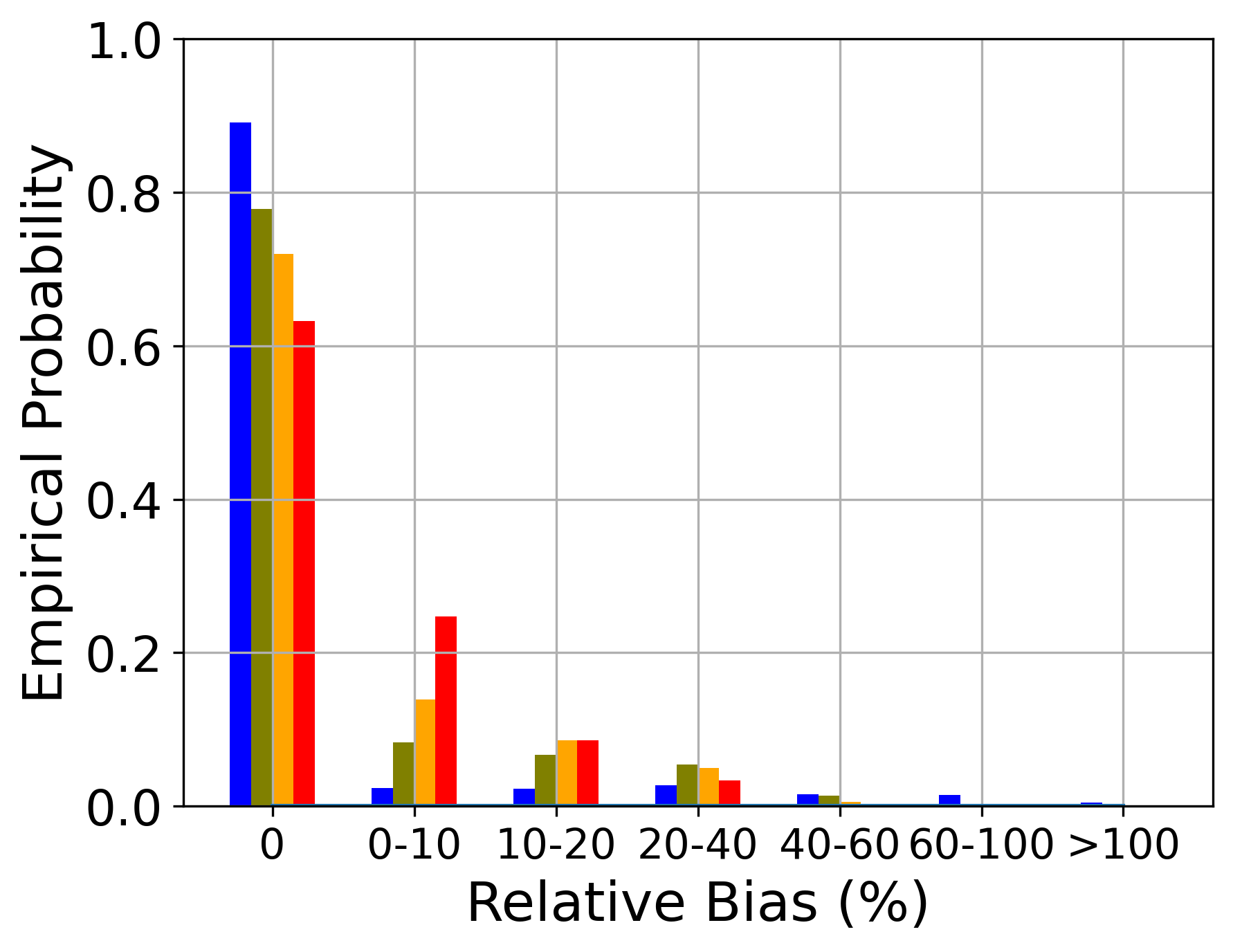}}\hfill
  \subfloat[][\textsf{Std }$50\%$, $r=1$]{\includegraphics[width=.30\textwidth]{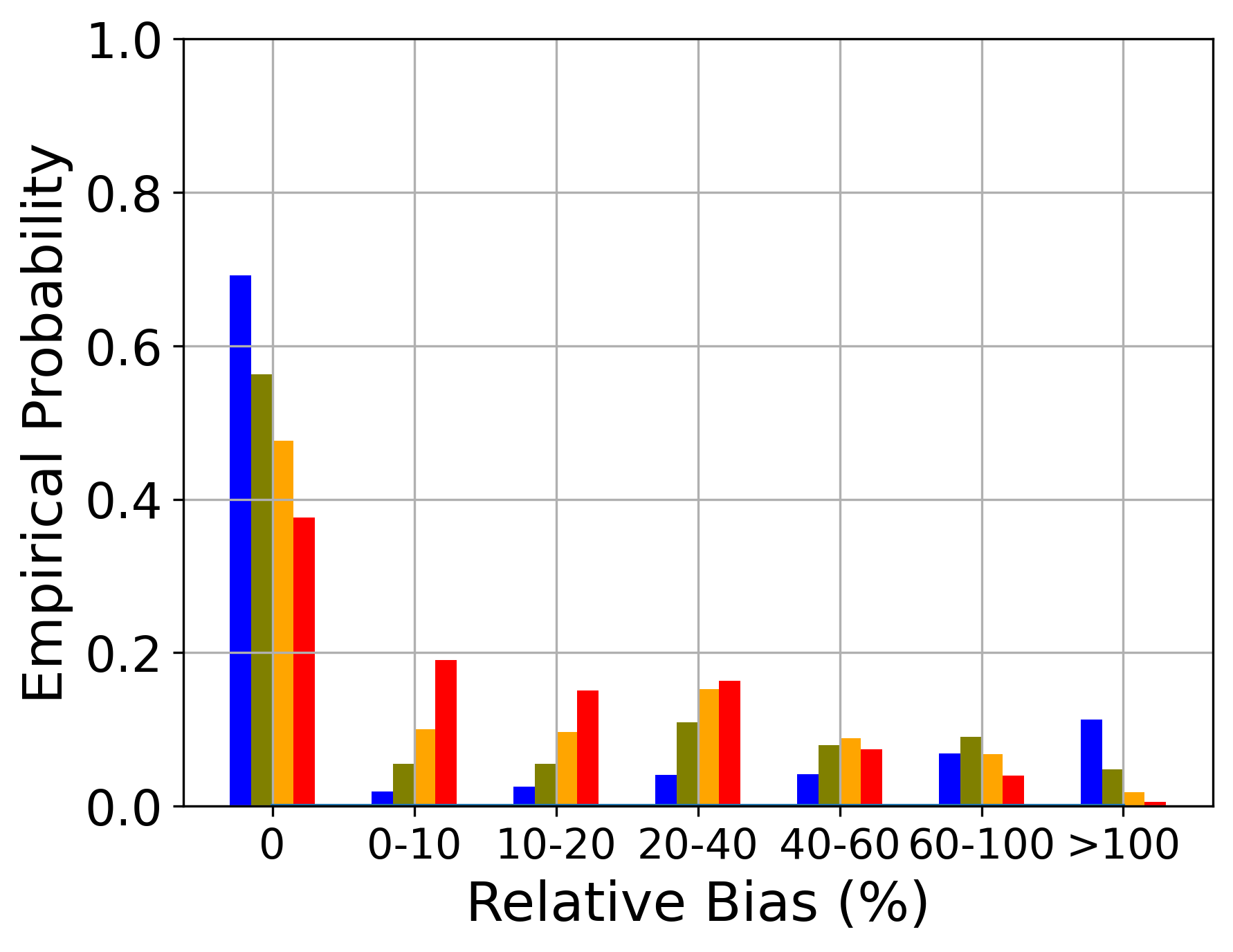}}\hfill
  \subfloat[][\textsf{Std }$100\%$, $r=1$]{\includegraphics[width=.30\textwidth]{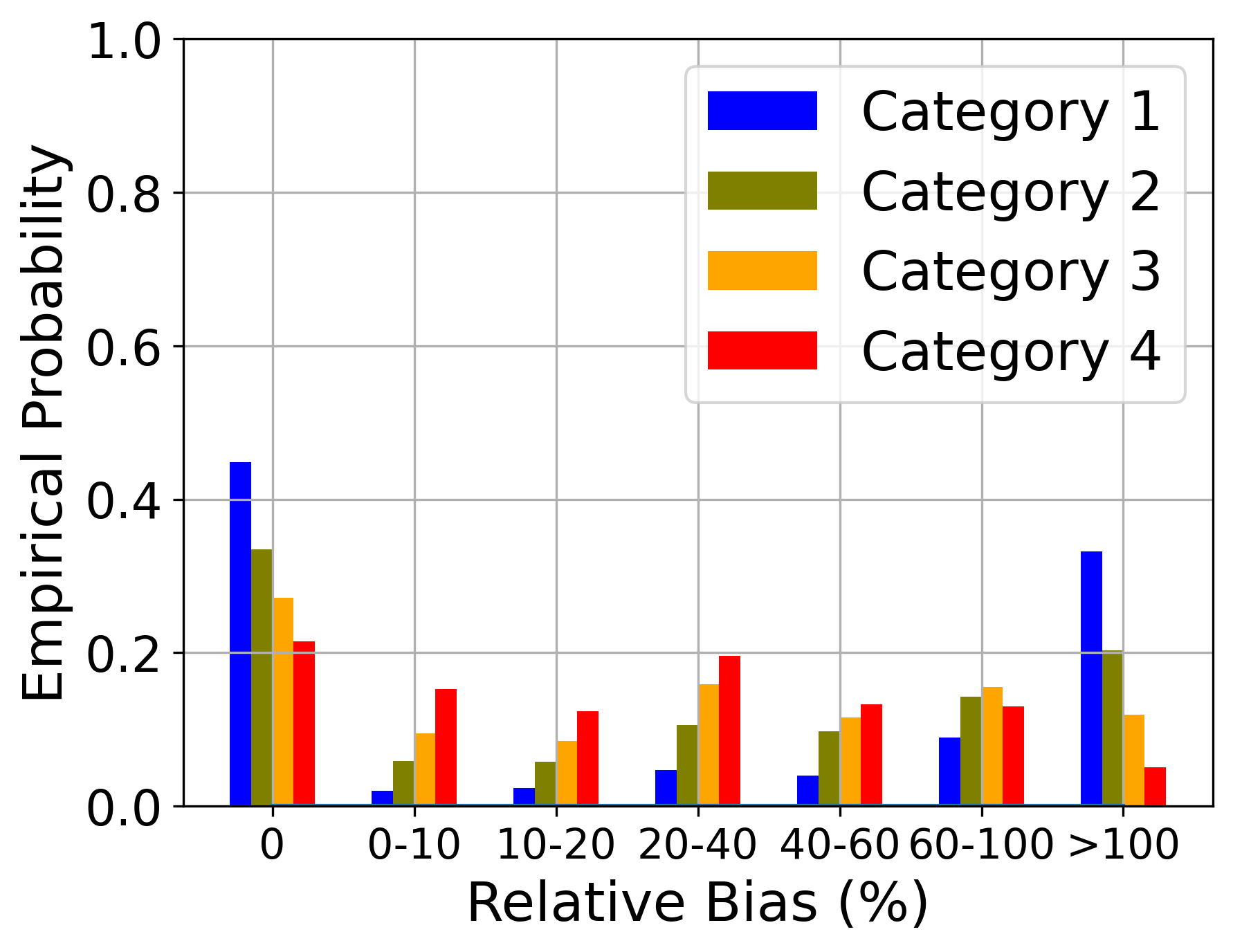}}\par
  \raisebox{35pt}{\parbox[b]{.05\textwidth}{}}
  \subfloat[][\textsf{Std }$20\%$, $r=20$]{\includegraphics[width=.30\textwidth]{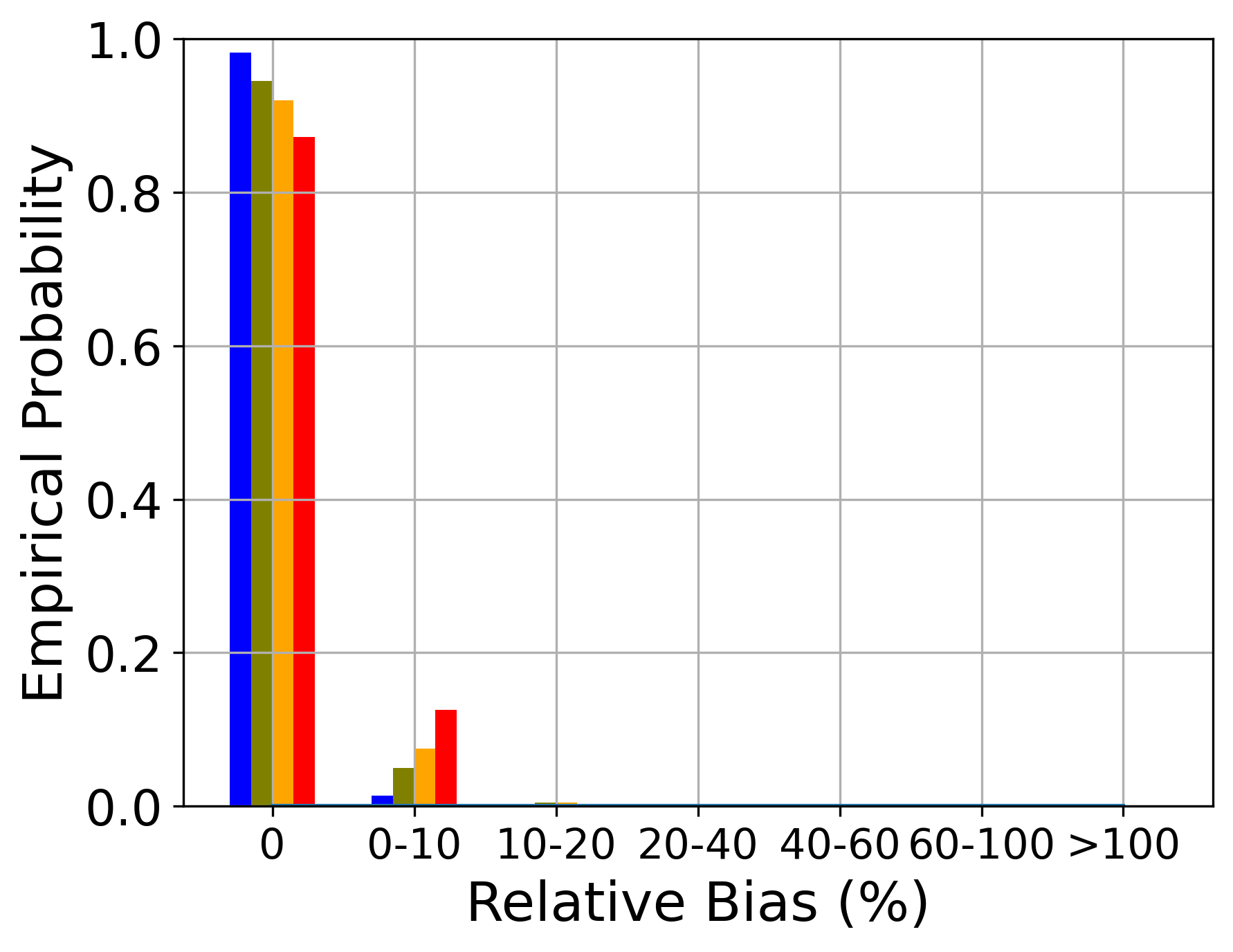}}\hfill
  \subfloat[][\textsf{Std }$50\%$, $r=20$]{\includegraphics[width=.30\textwidth]{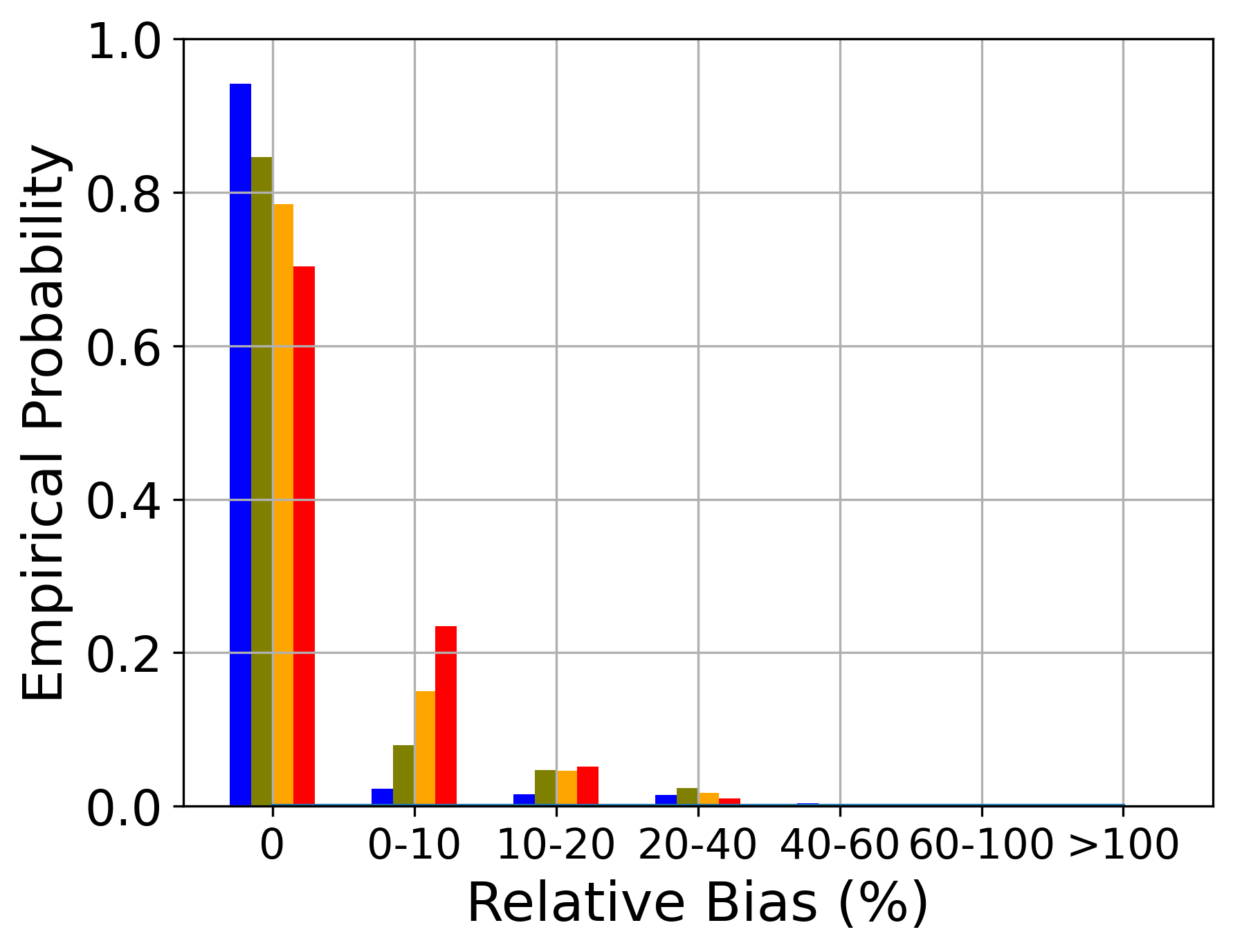}}\hfill
  \subfloat[][\textsf{Std }$100\%$, $r=20$]{\includegraphics[width=.30\textwidth]{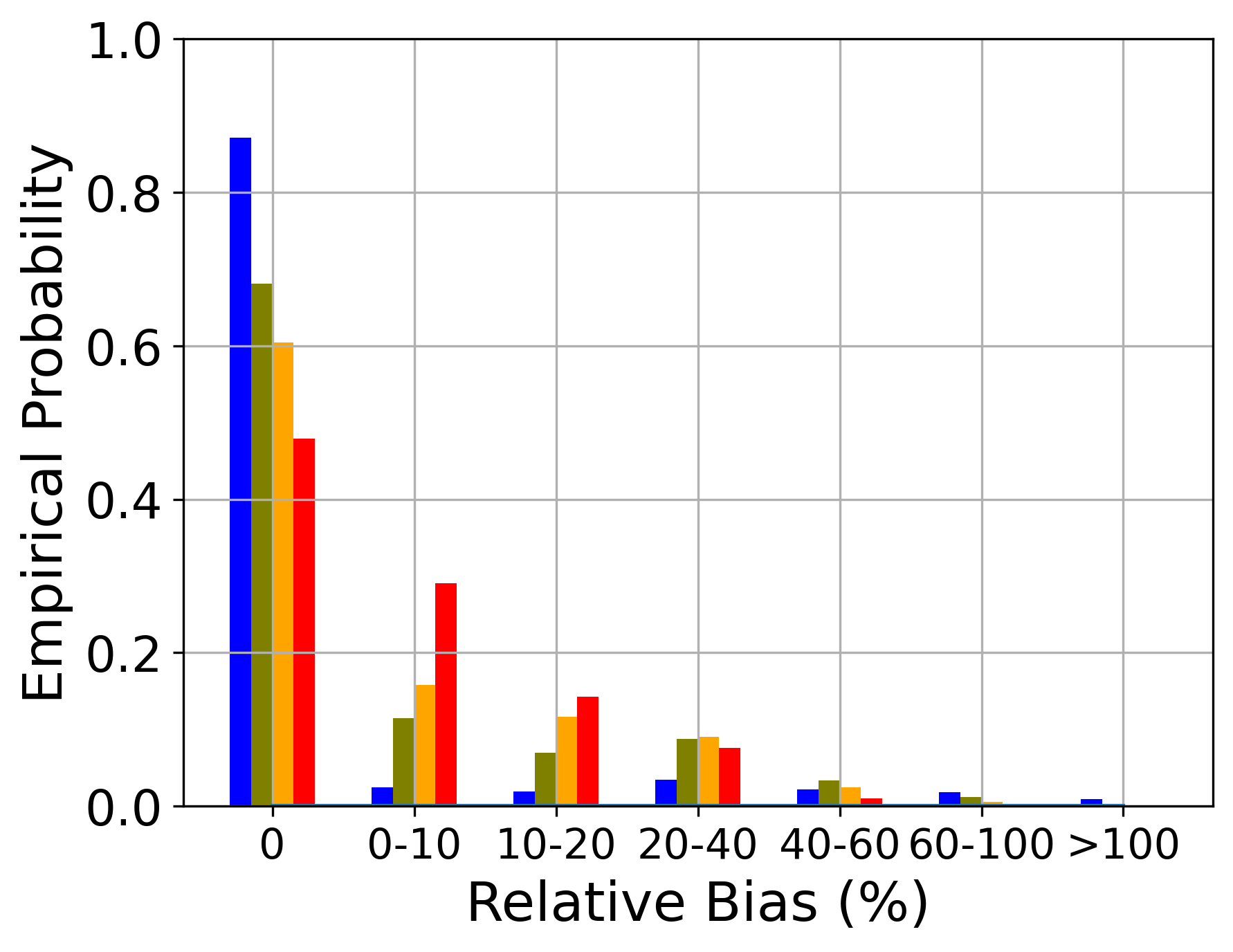}}\par
  \raisebox{35pt}{\parbox[b]{.05\textwidth}{}}
  \subfloat[][\textsf{Std }$20\%$, $r=50$]{\includegraphics[width=.30\textwidth]{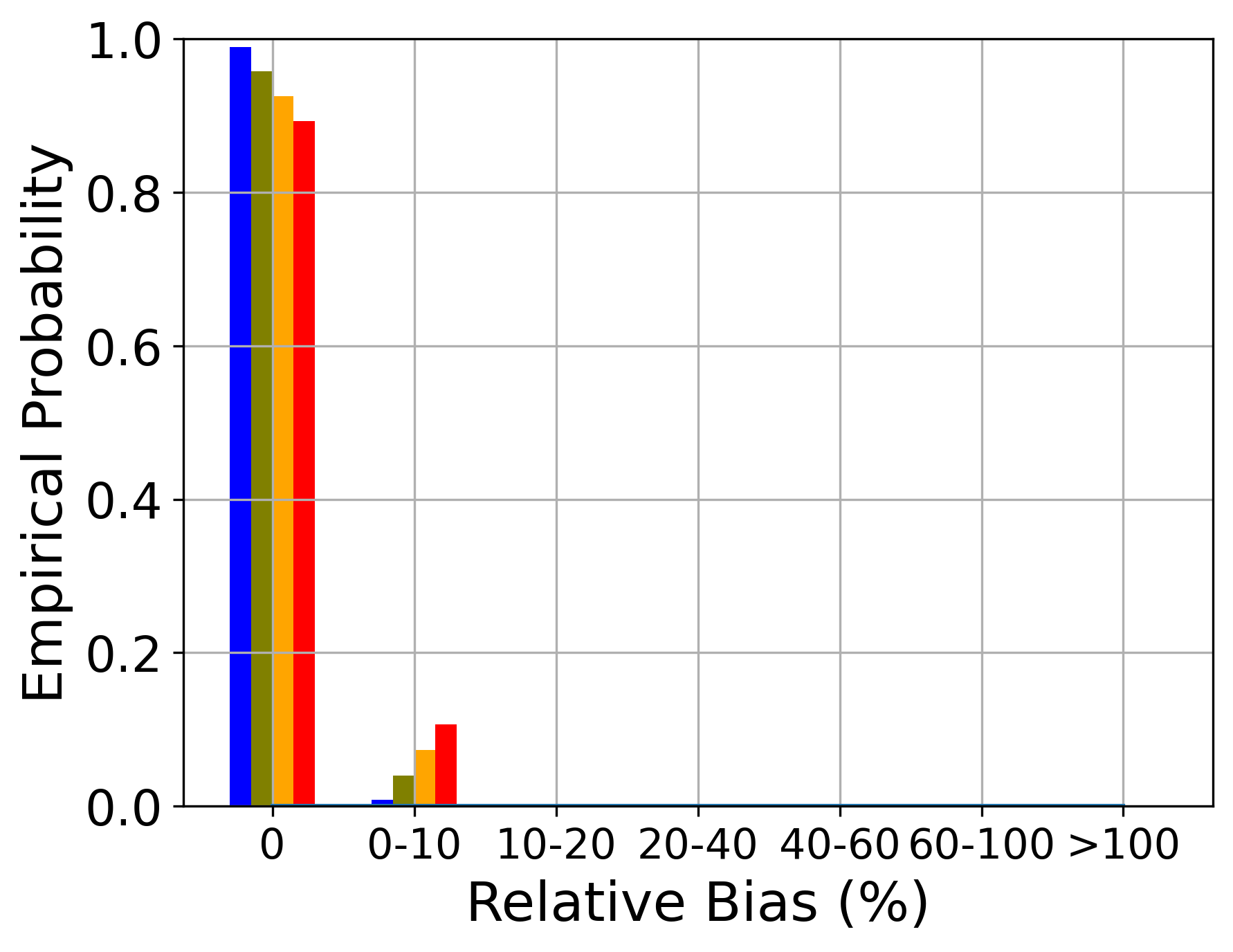}}\hfill
  \subfloat[][\textsf{Std }$50\%$, $r=50$]{\includegraphics[width=.30\textwidth]{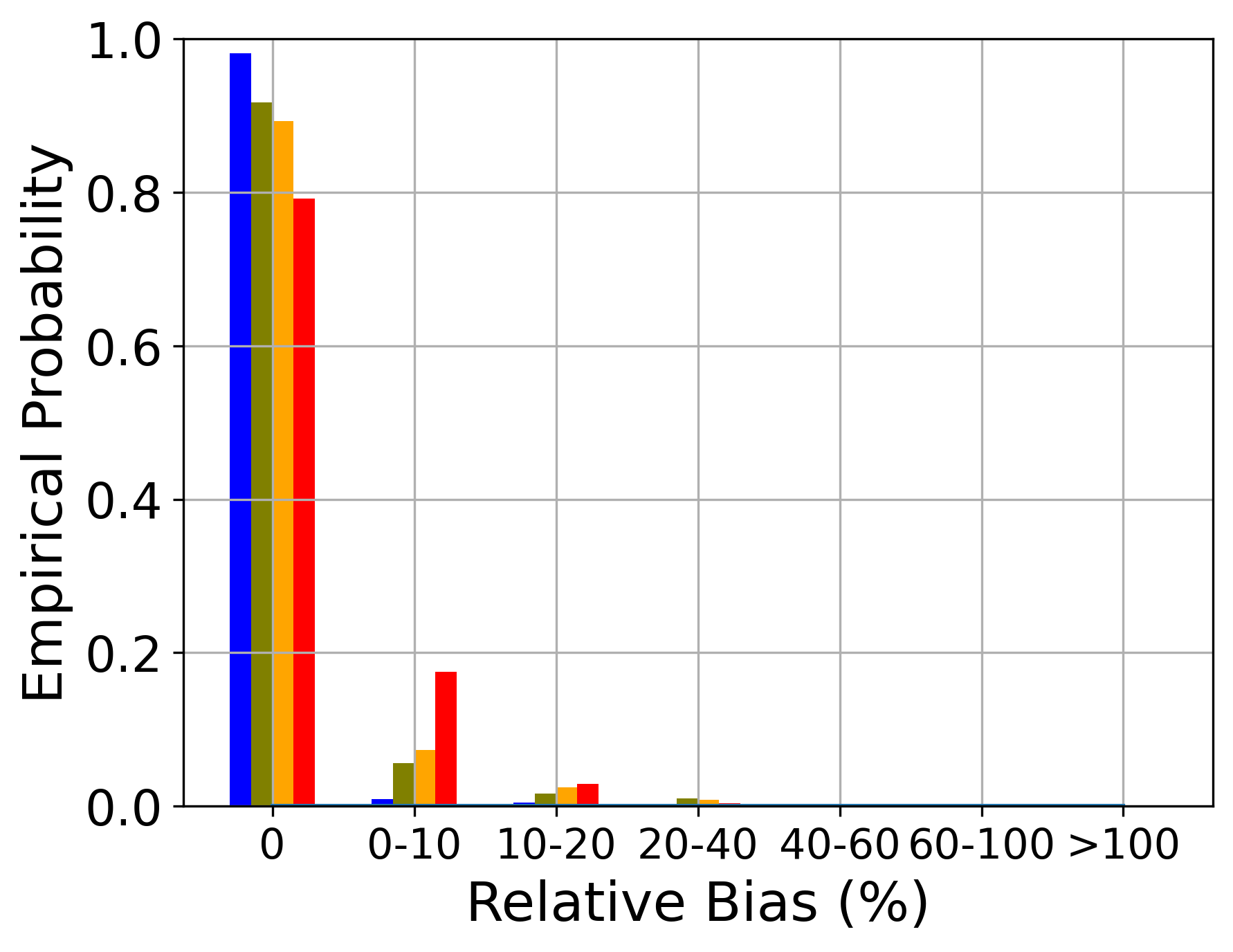}}\hfill
  \subfloat[][\textsf{Std }$100\%$, $r=50$]{\includegraphics[width=.30\textwidth]{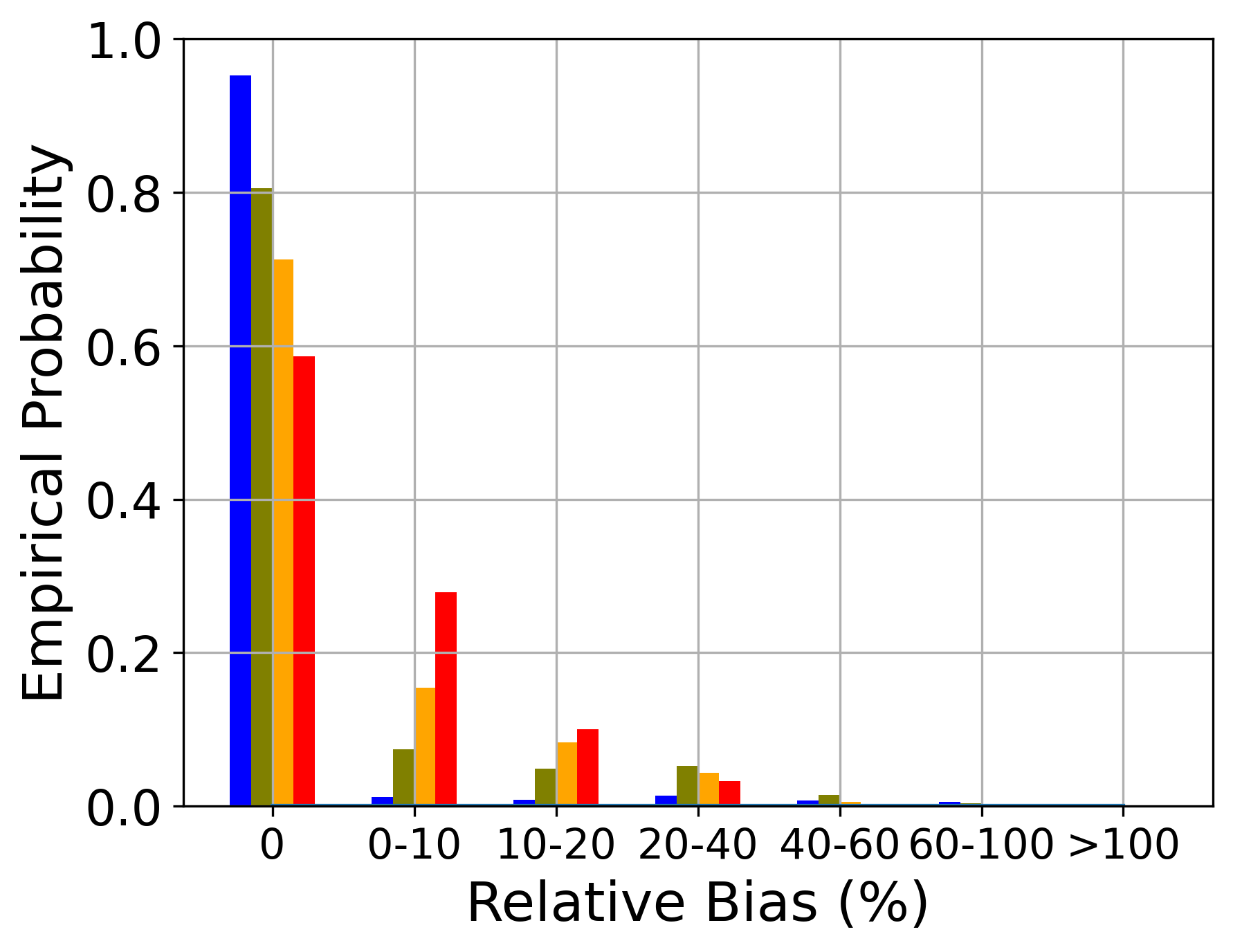}}\par
  \raisebox{35pt}{\parbox[b]{.05\textwidth}{}}
  \subfloat[][\textsf{Std }$20\%$, $r=100$]{\includegraphics[width=.30\textwidth]{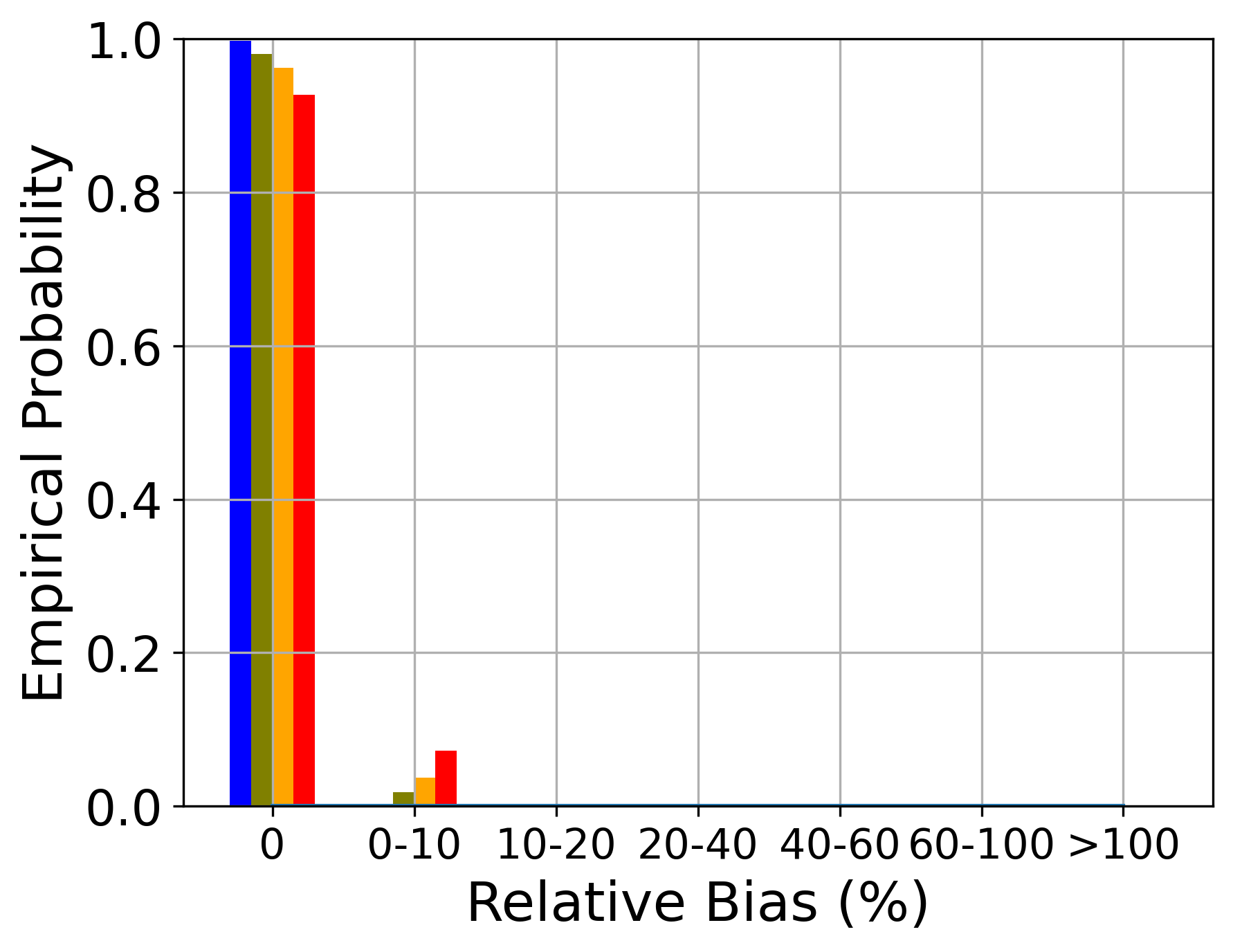}}\hfill
  \subfloat[][\textsf{Std }$50\%$, $r=100$]{\includegraphics[width=.30\textwidth]{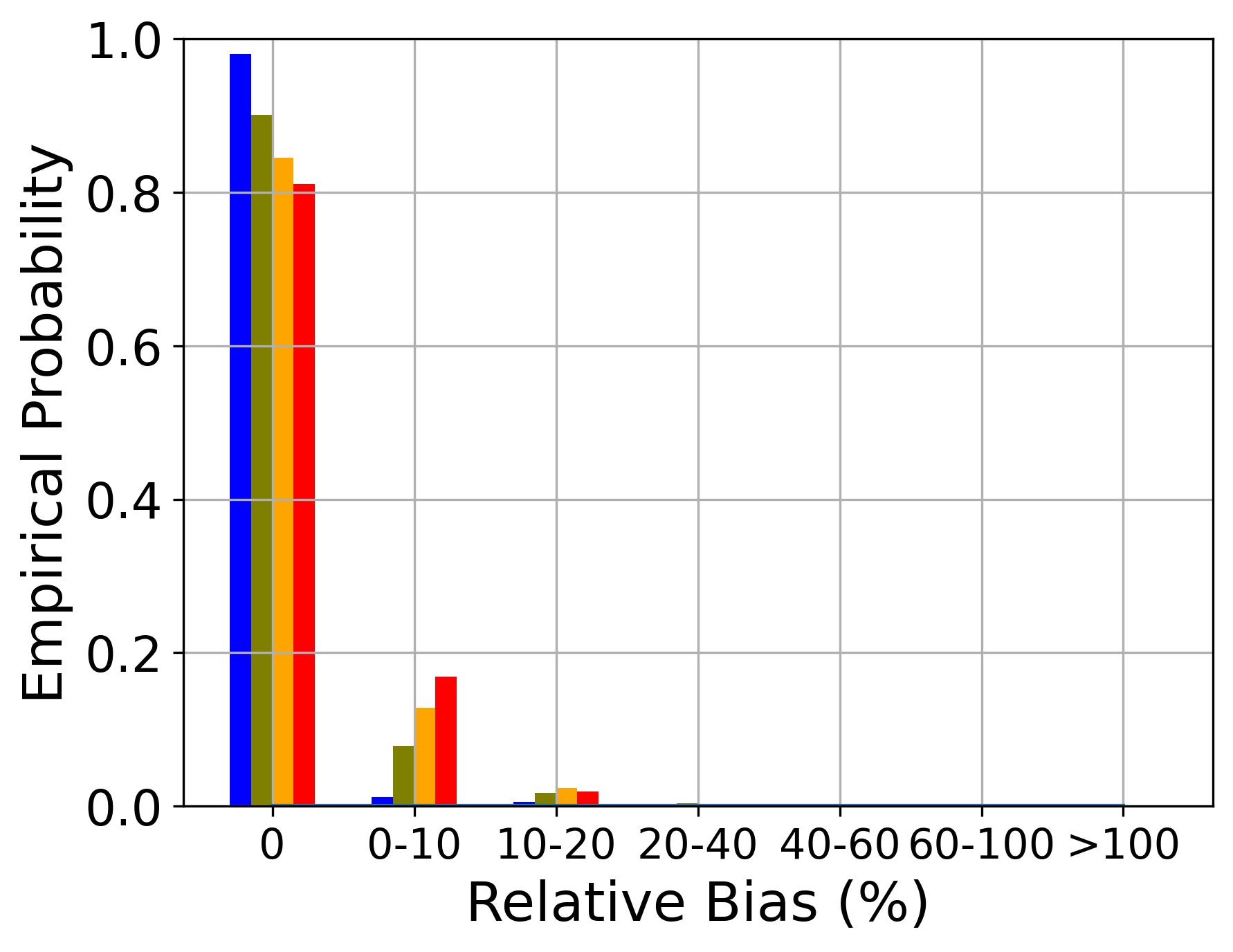}}\hfill
  \subfloat[][\textsf{Std }$100\%$, $r=100$]{\includegraphics[width=.30\textwidth]{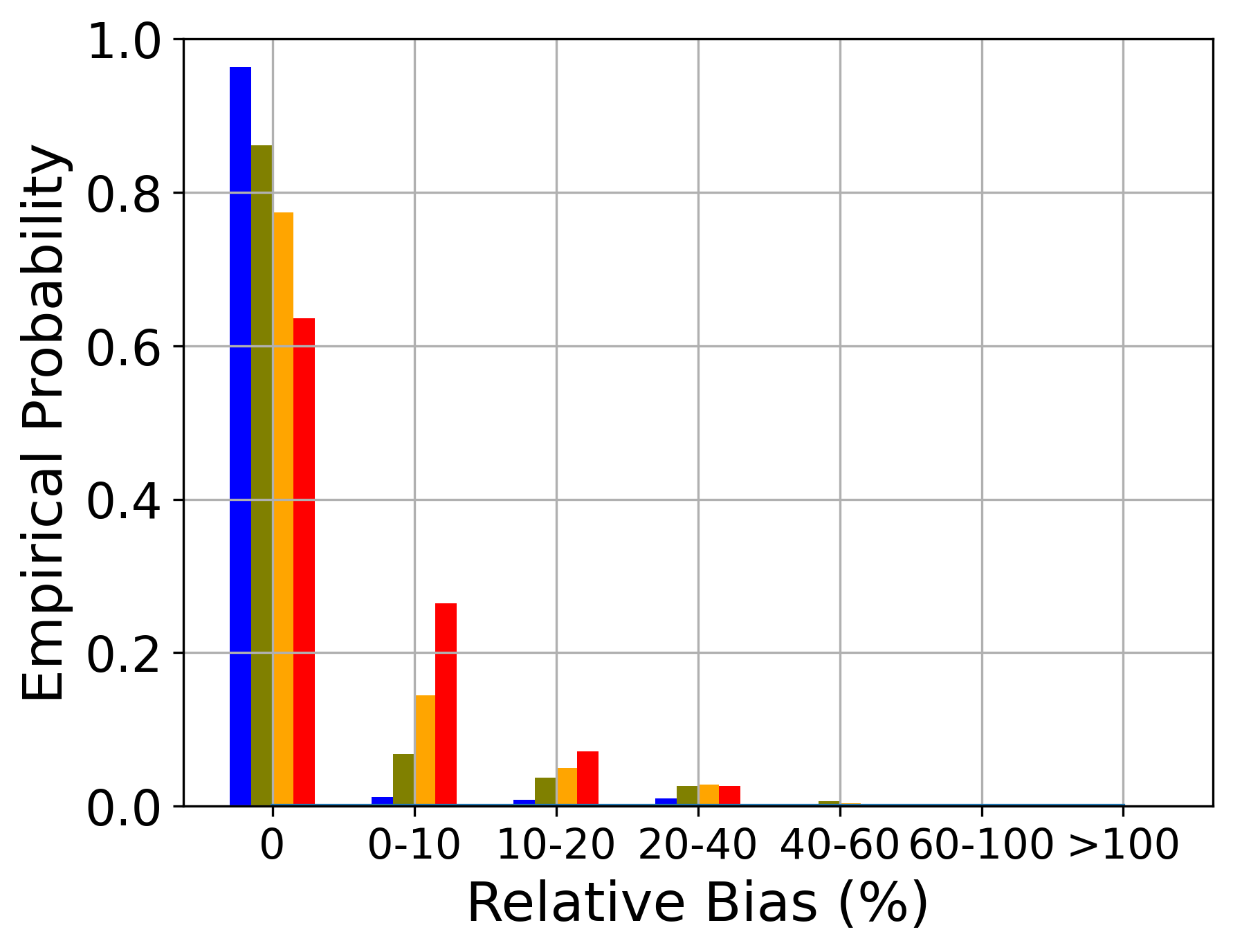}}
  \caption{Statistics for wheel graphs with $N = 101$ nodes. In each row (from left to right), we generate results for 3 different levels of noise: i) $20\%$; ii) $50\%$; and iii) $100\%$. On the other hand, in each column (from top to bottom), we plot results for different values of $r$: i) $r=1$; ii) $r = 20$; iii) $r=50$; and iv) $r=100$.}
  \label{fig:wheel}
\end{figure}

\subsubsection{Scale-free graphs.}

We conclude our experimental section with a study of scale-free graphs. The primary parameter of interest for scale-free graphs is the power $\gamma$ of their underlying degree distribution. Note that scale-free graphs can often have multiple disconnected components (including many singleton nodes of degree zero). However, for our simulation, we always pick its largest connected component. All ground truth edge weights are drawn independently from \emph{Uniform}$[0, 1]$. In Figure \ref{fig:scalefree100}, we present results for graphs generated using a starting value of $N = 100$ at different values of $\gamma \in \{1.5, 2, 2.5, 3\}$ and different levels of privacy noise. The main observations are as follows:

\begin{figure}[!h]
  \centering
  \raisebox{35pt}{\parbox[b]{.05\textwidth}{}}%
  \subfloat[][\textsf{Std }$20\%$, $\gamma=1.5$]{\includegraphics[width=.23\textwidth]{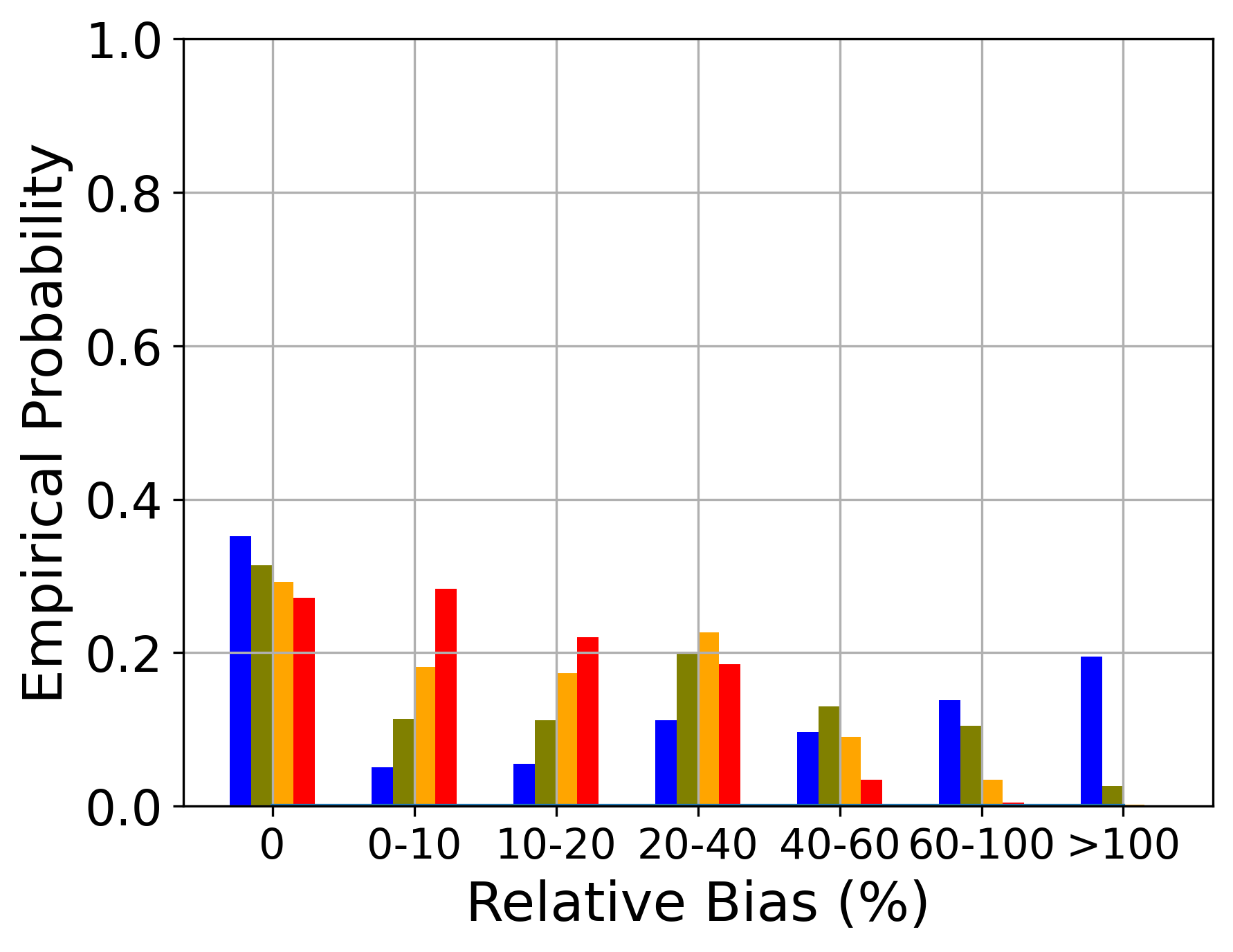}}\hfill
  \subfloat[][\textsf{Std }$50\%$, $\gamma=1.5$]{\includegraphics[width=.23\textwidth]{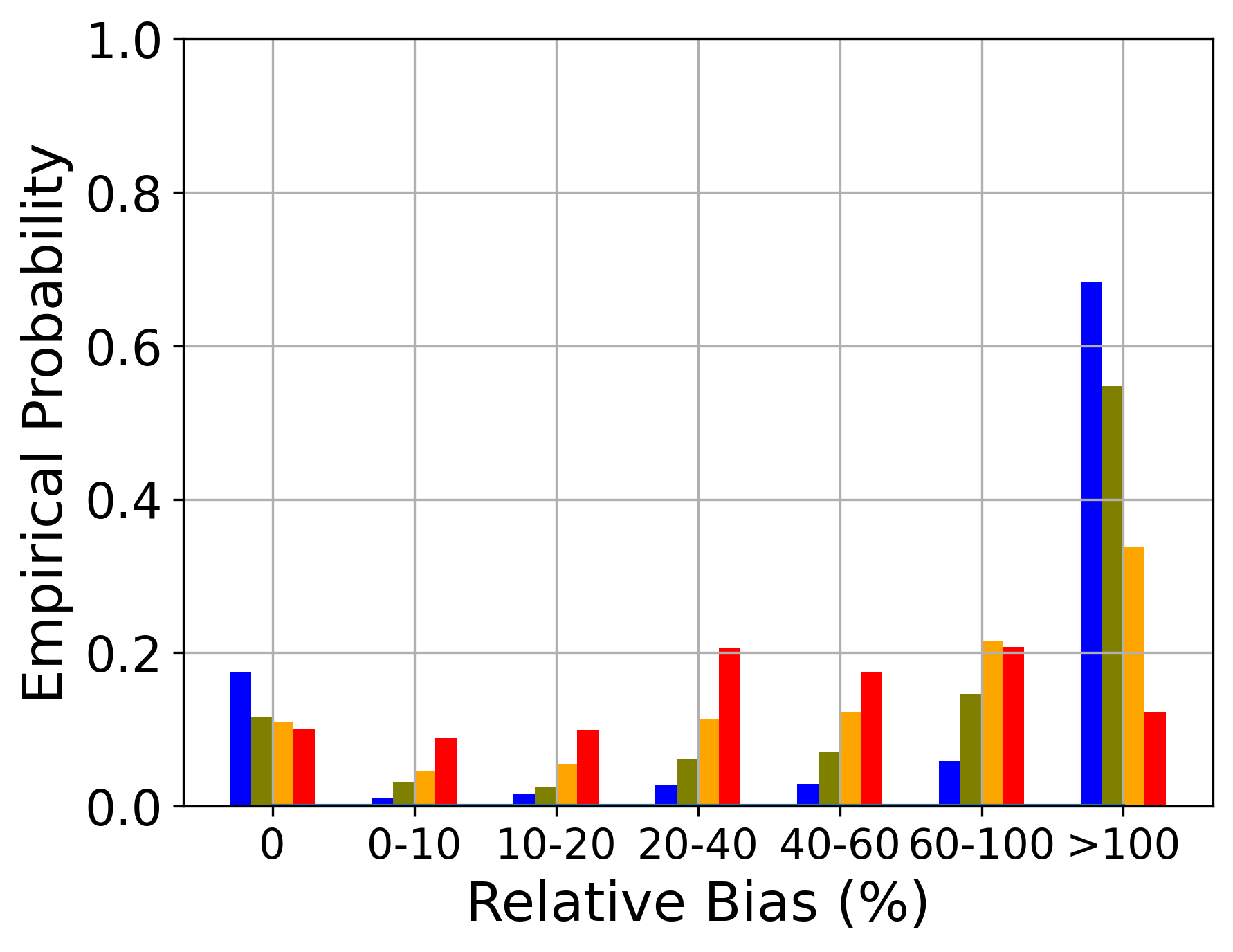}}\hfill
  \subfloat[][\textsf{Std }$100\%$, $\gamma=1.5$]{\includegraphics[width=.23\textwidth]{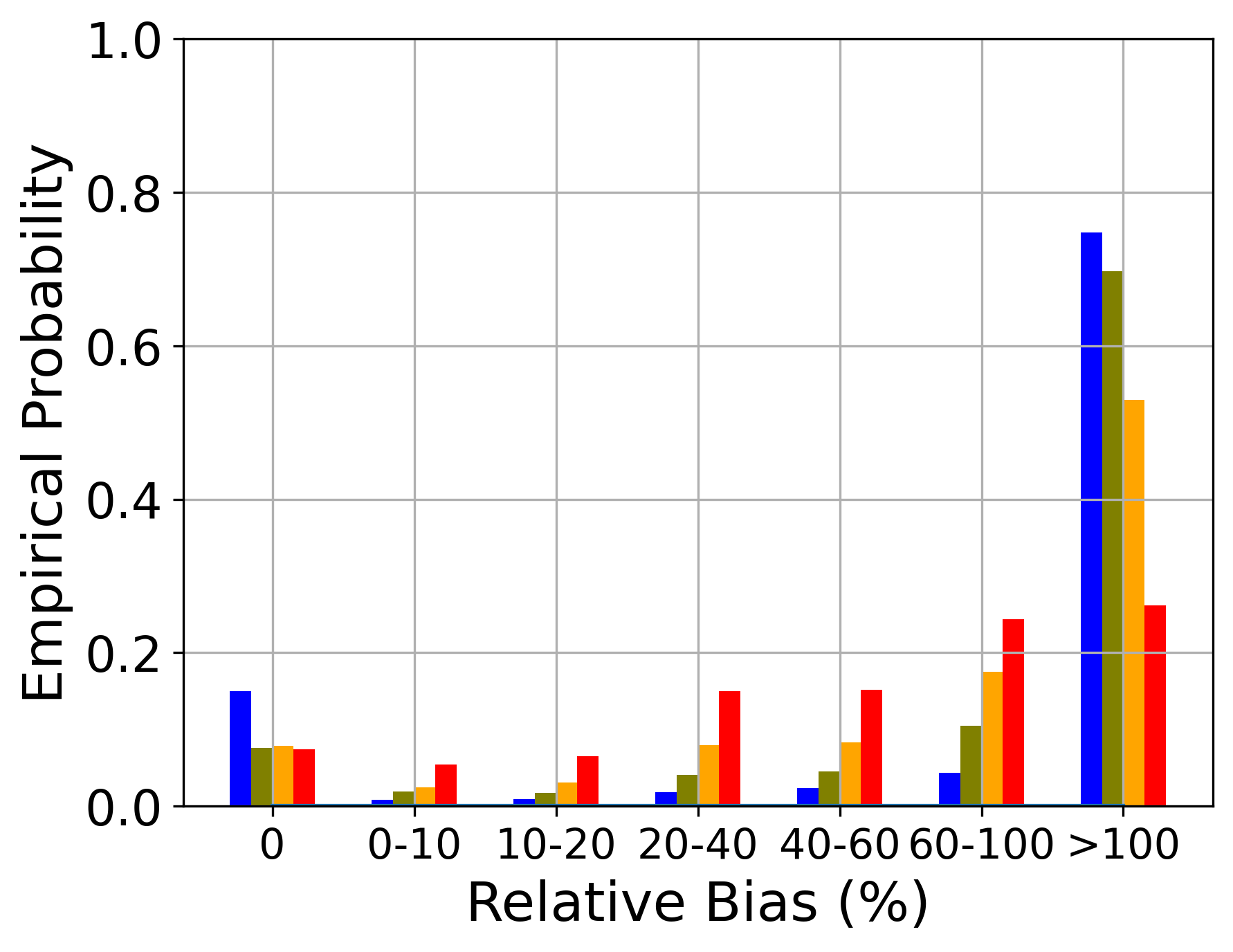}}\hfill
  \subfloat[][\textsf{Std }$200\%$, $\gamma=1.5$]{\includegraphics[width=.23\textwidth]{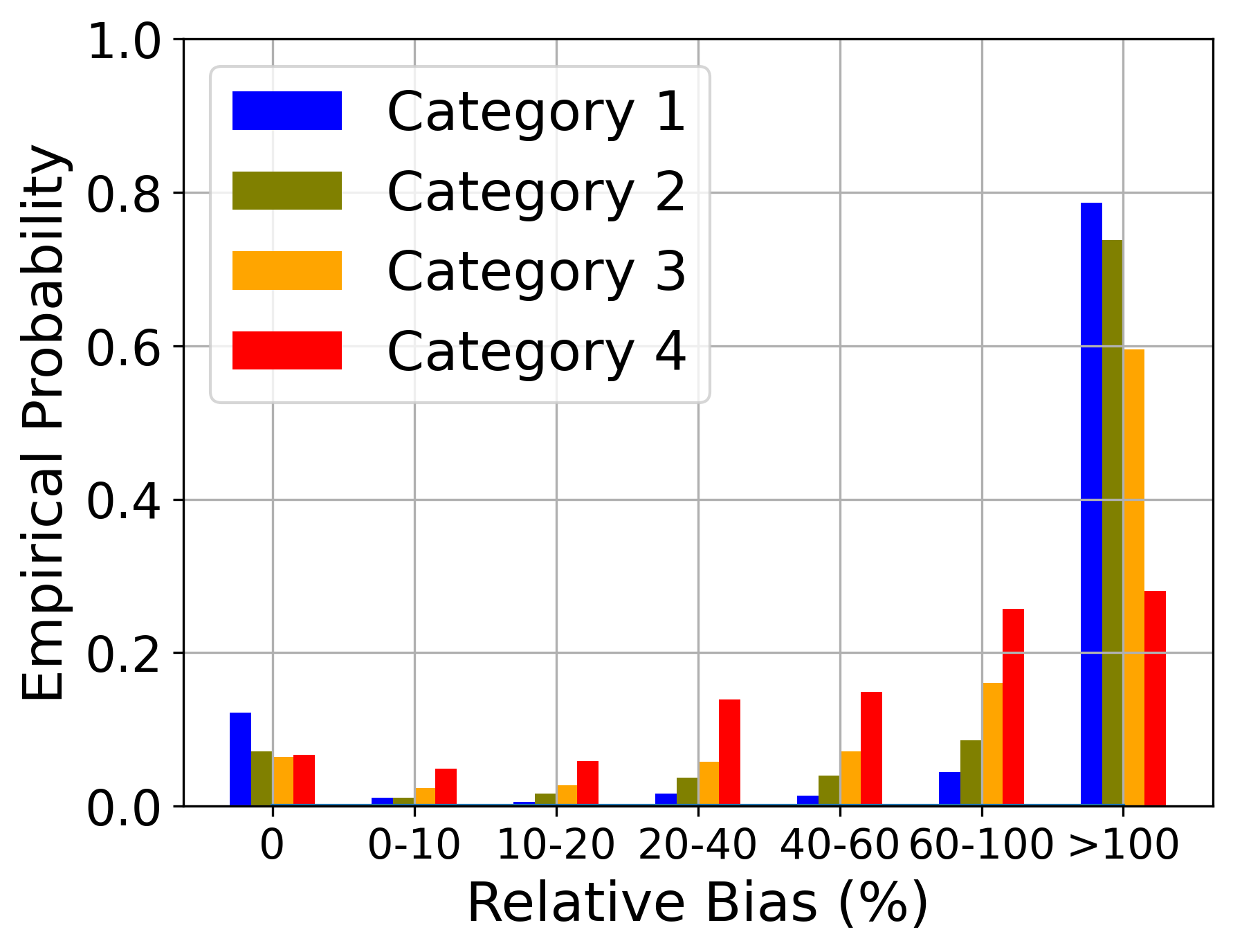}}\par
  \raisebox{35pt}{\parbox[b]{.05\textwidth}{}}
  \subfloat[][\textsf{Std }$20\%$, $\gamma=2$]{\includegraphics[width=.23\textwidth]{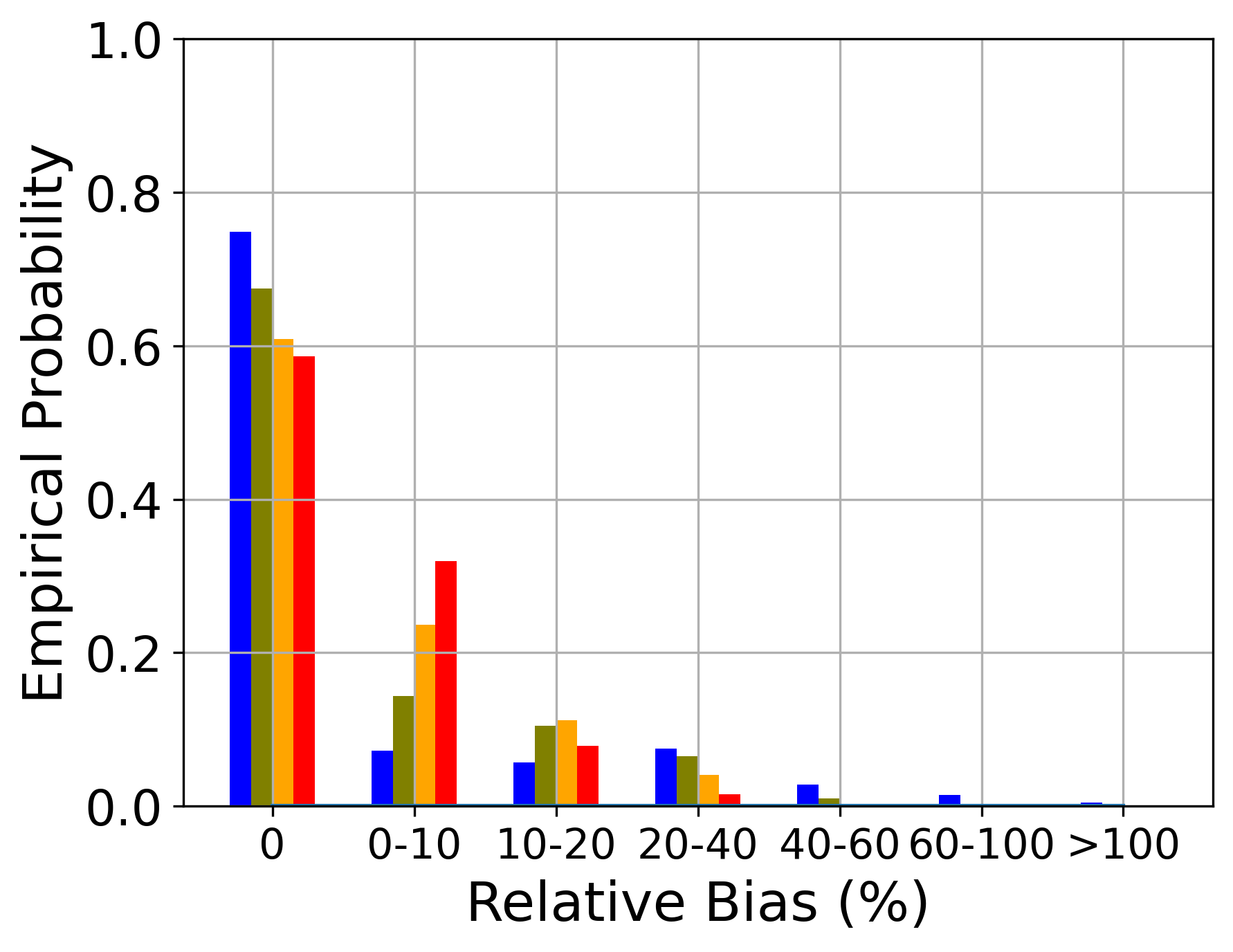}}\hfill
  \subfloat[][\textsf{Std }$50\%$, $\gamma=2$]{\includegraphics[width=.23\textwidth]{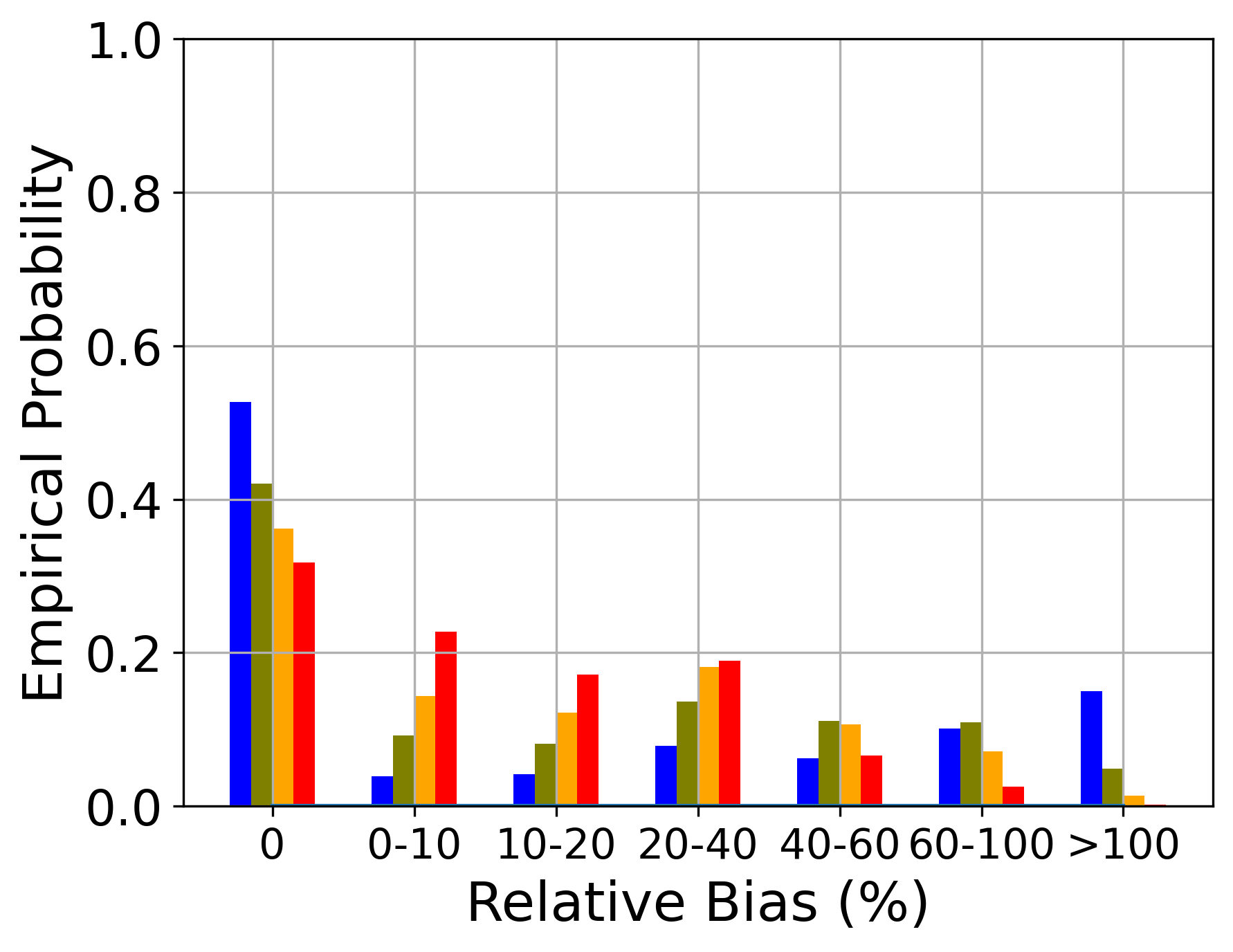}}\hfill
  \subfloat[][\textsf{Std }$100\%$, $\gamma=2$]{\includegraphics[width=.23\textwidth]{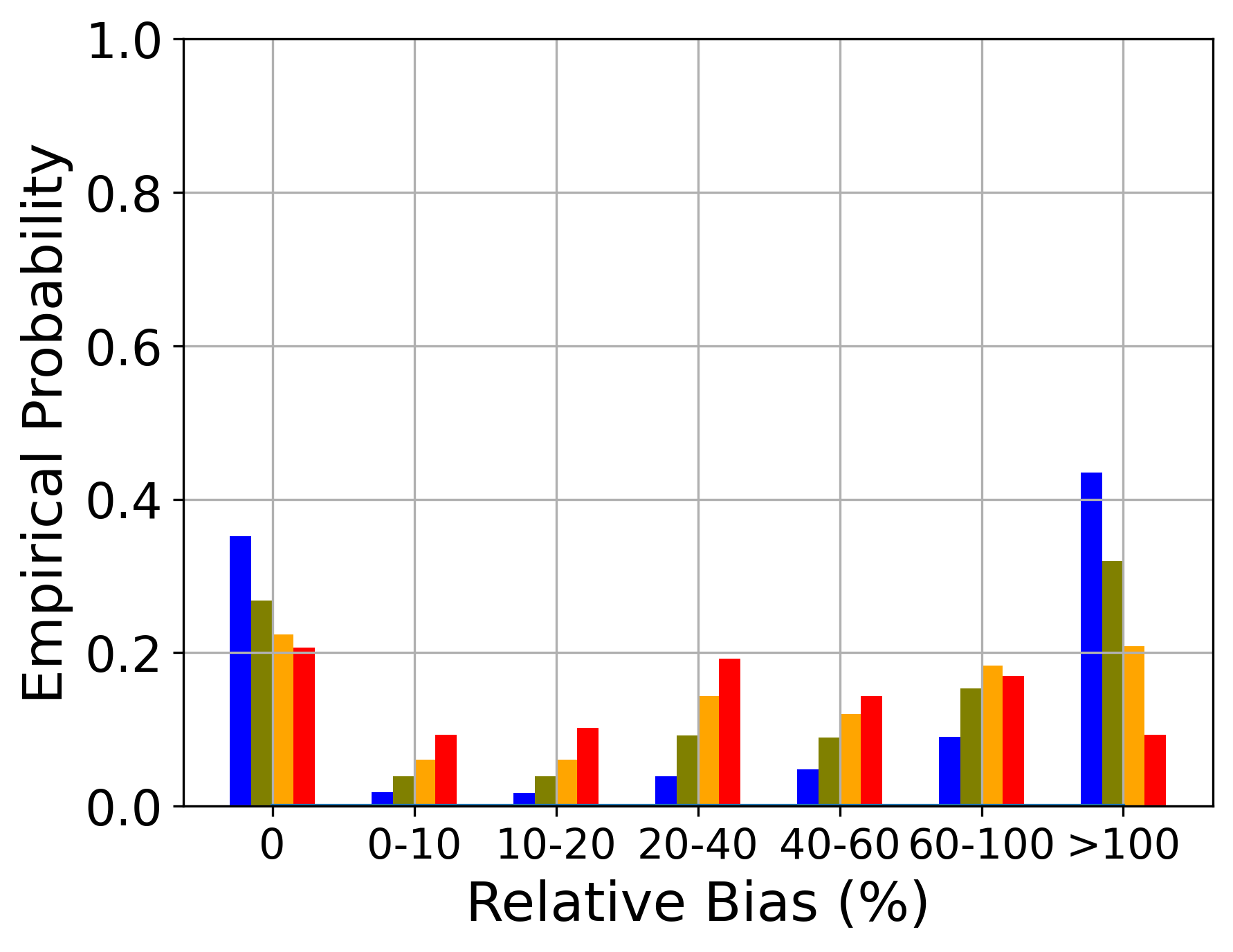}}\hfill
  \subfloat[][\textsf{Std }$200\%$, $\gamma=2$]{\includegraphics[width=.23\textwidth]{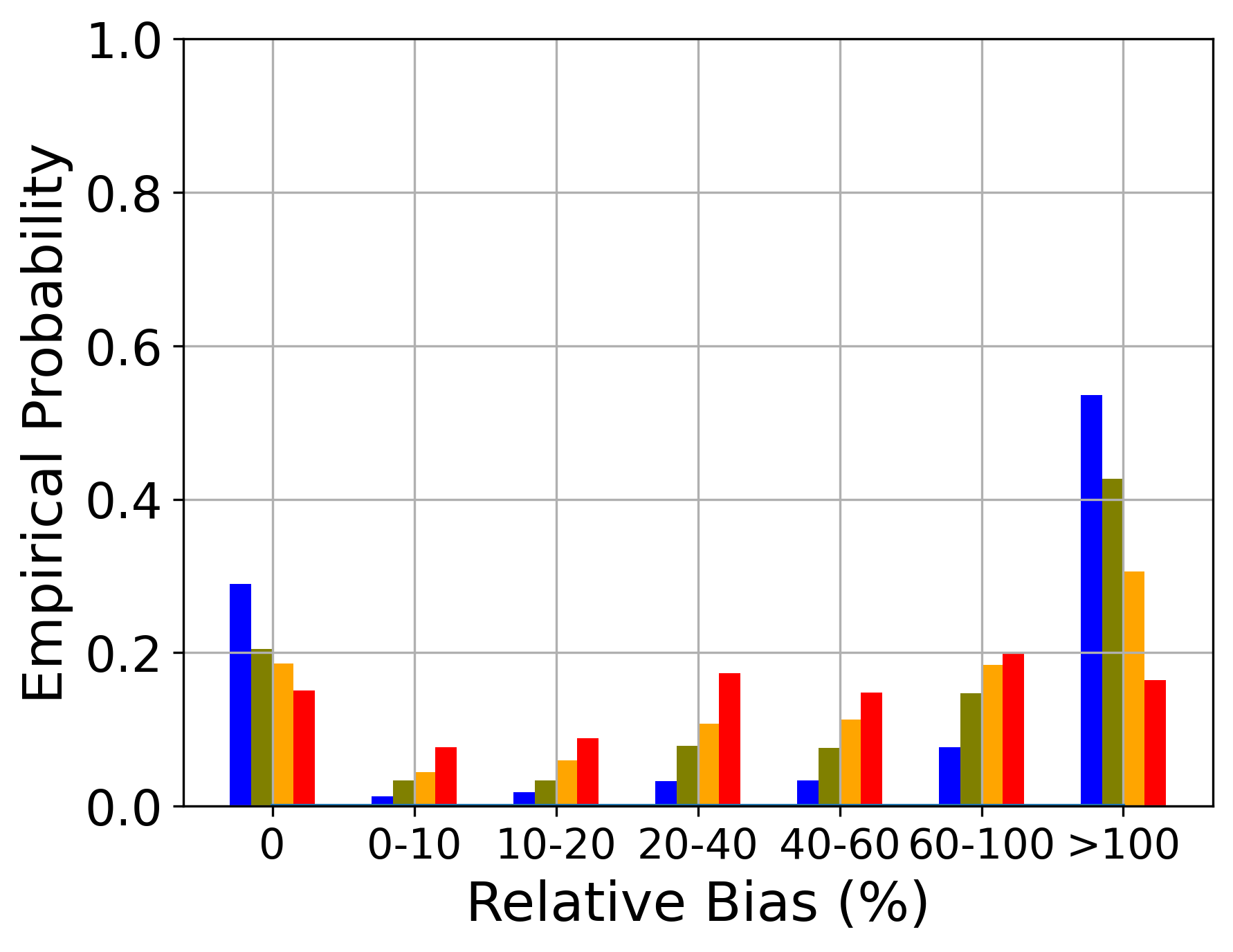}}\par
  \raisebox{35pt}{\parbox[b]{.05\textwidth}{}}
  \subfloat[][\textsf{Std }$20\%$, $\gamma=2.5$]{\includegraphics[width=.23\textwidth]{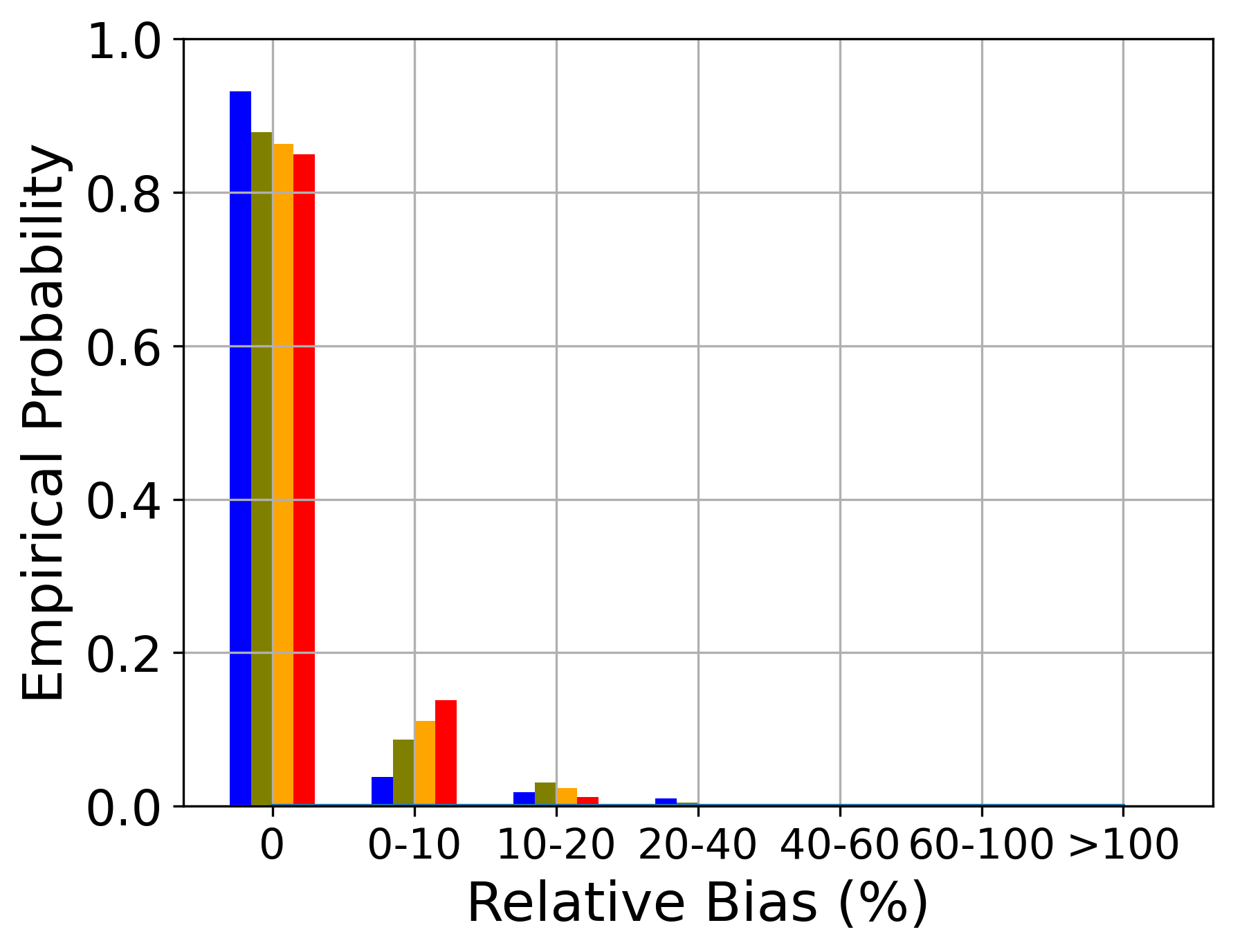}}\hfill
  \subfloat[][\textsf{Std }$50\%$, $\gamma=2.5$]{\includegraphics[width=.23\textwidth]{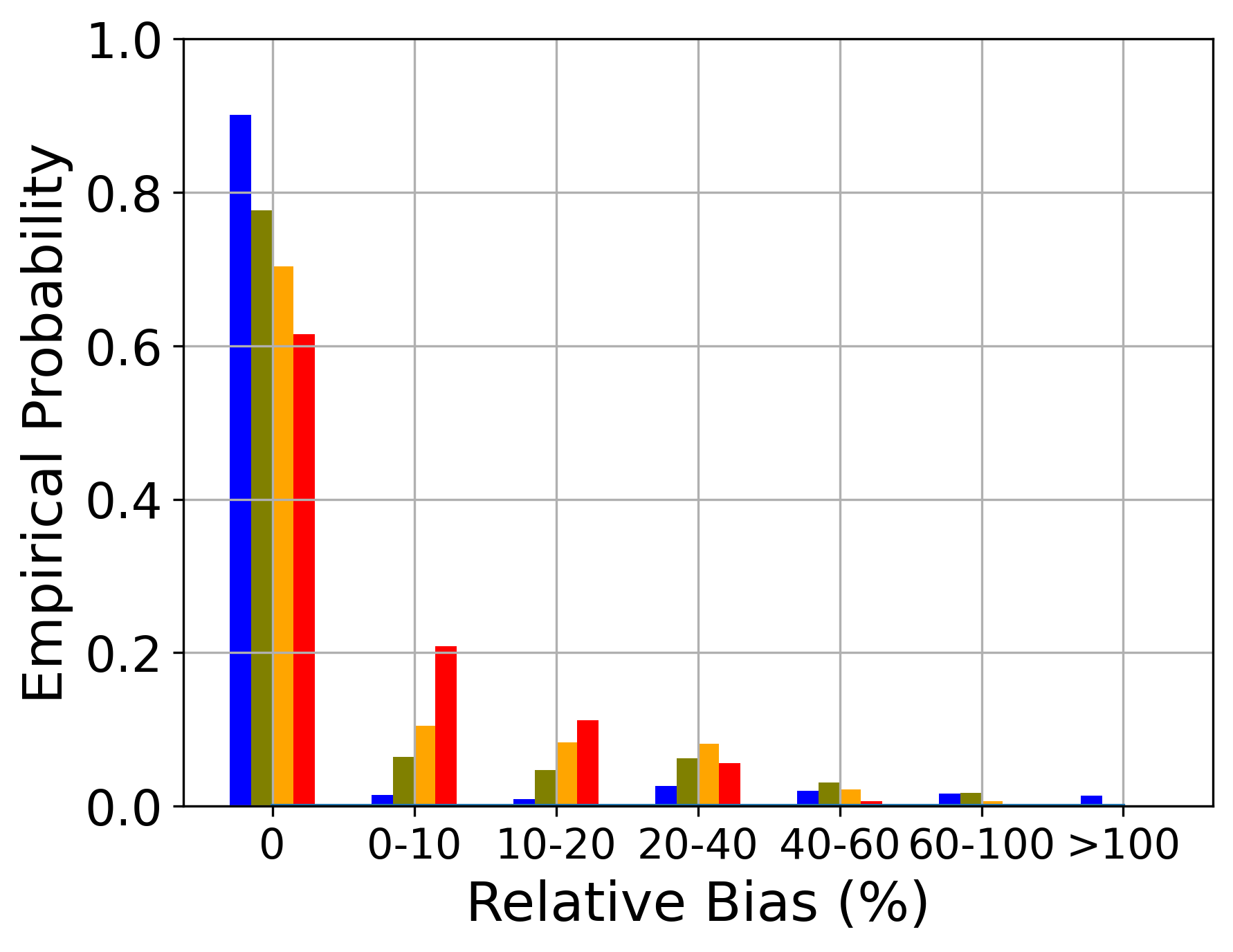}}\hfill
  \subfloat[][\textsf{Std }$100\%$, $\gamma=2.5$]{\includegraphics[width=.23\textwidth]{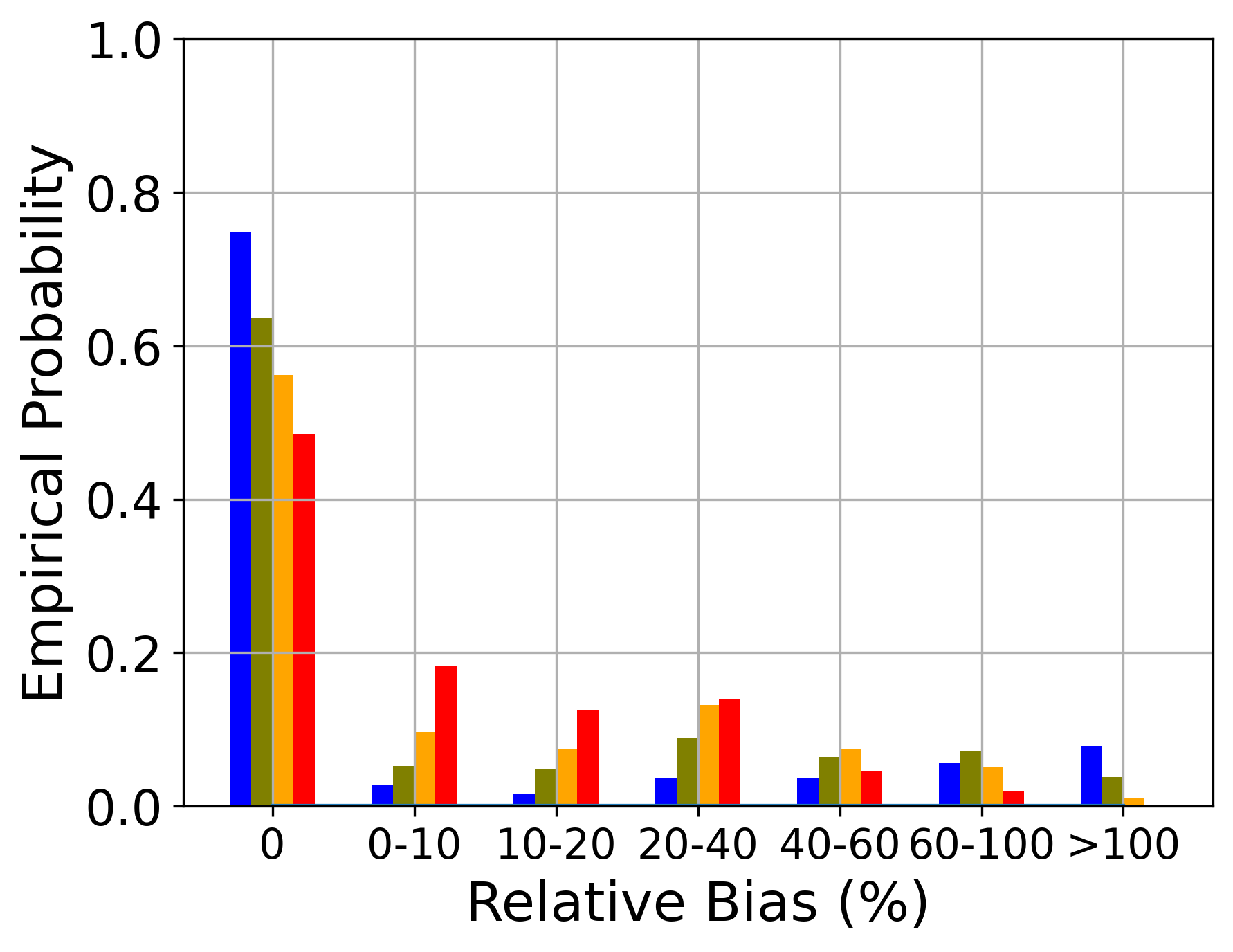}}\hfill
  \subfloat[][\textsf{Std }$200\%$, $\gamma=2.5$]{\includegraphics[width=.23\textwidth]{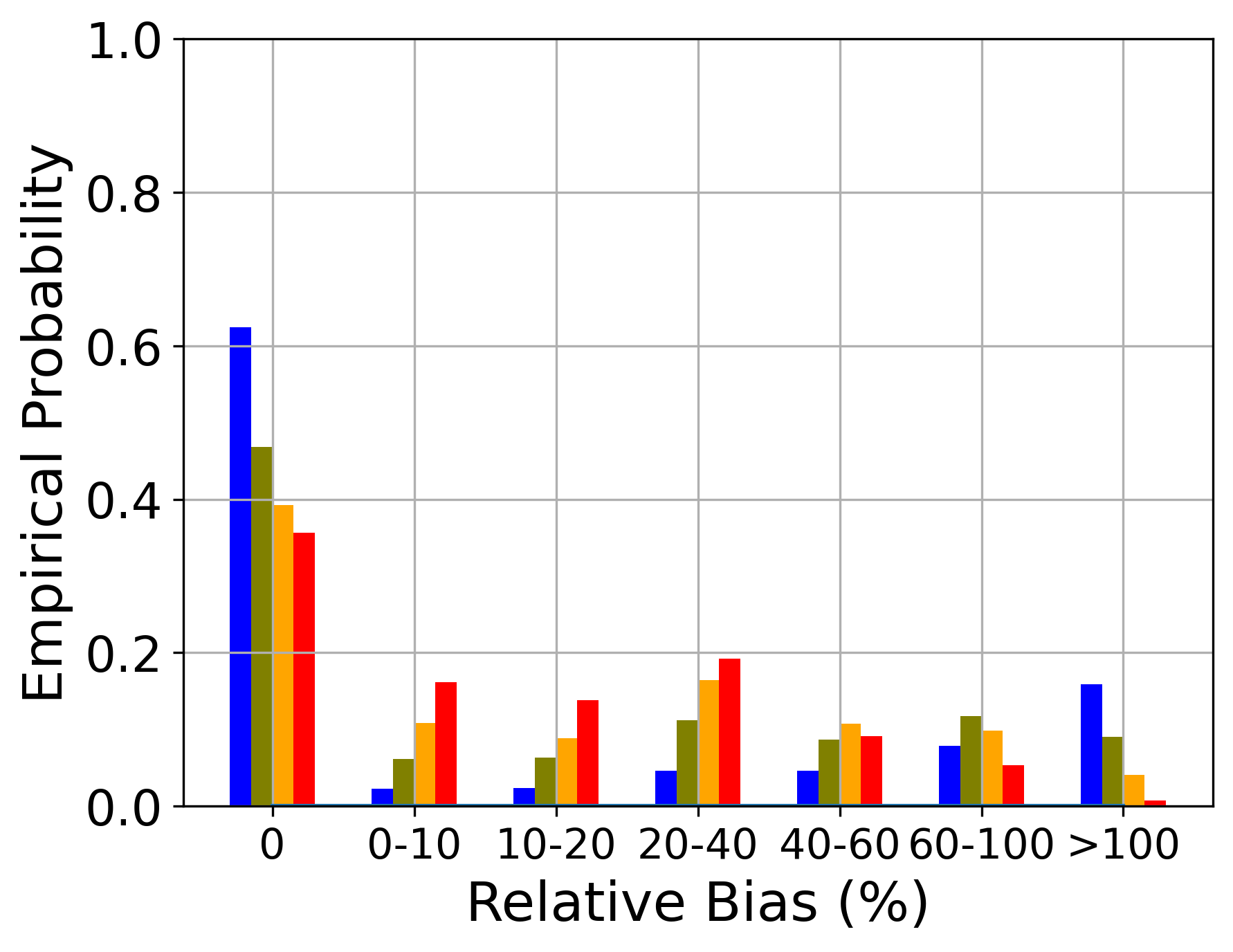}}\par
  \raisebox{35pt}{\parbox[b]{.05\textwidth}{}}
  \subfloat[][\textsf{Std }$20\%$, $\gamma=3$]{\includegraphics[width=.23\textwidth]{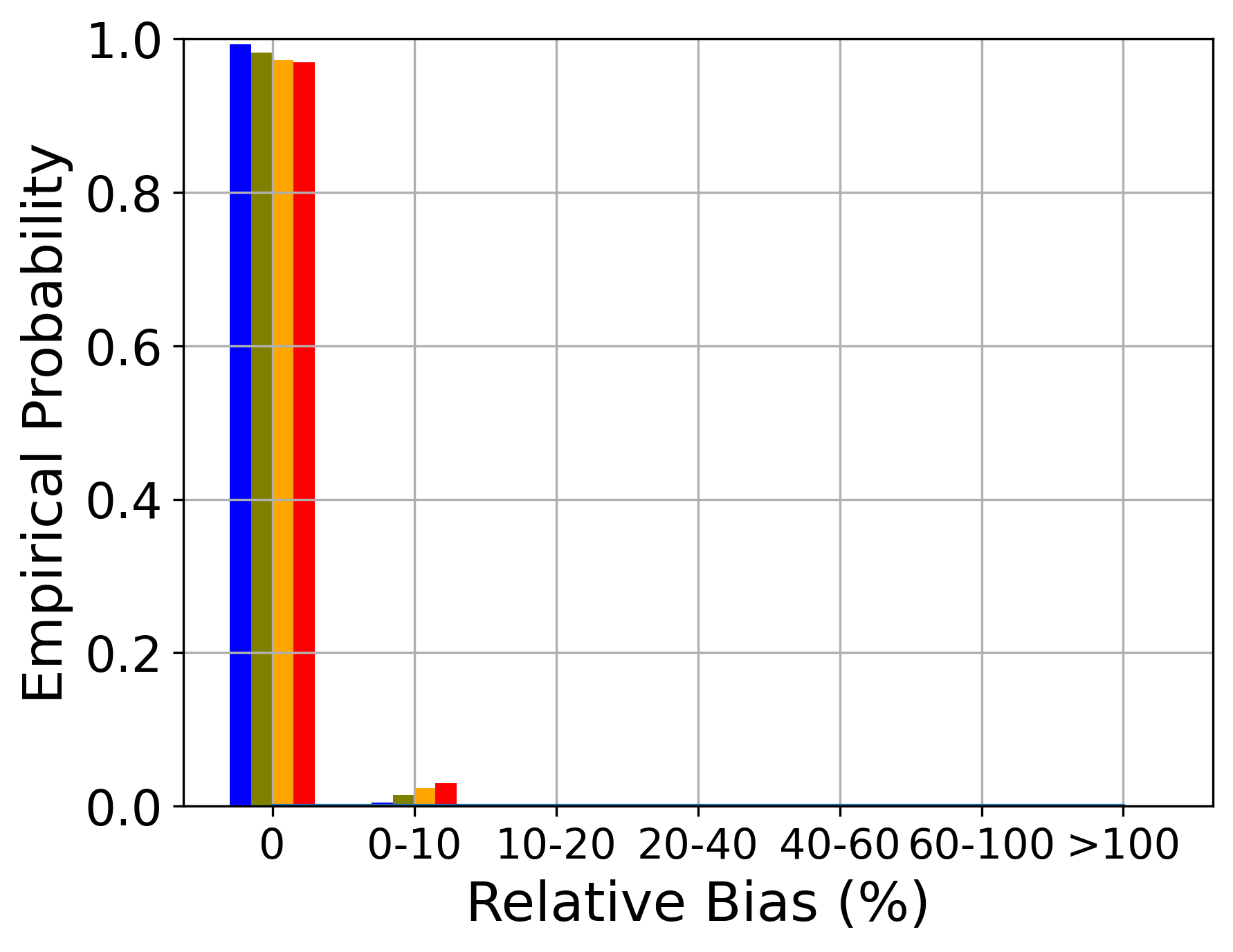}}\hfill
  \subfloat[][\textsf{Std }$50\%$, $\gamma=3$]{\includegraphics[width=.23\textwidth]{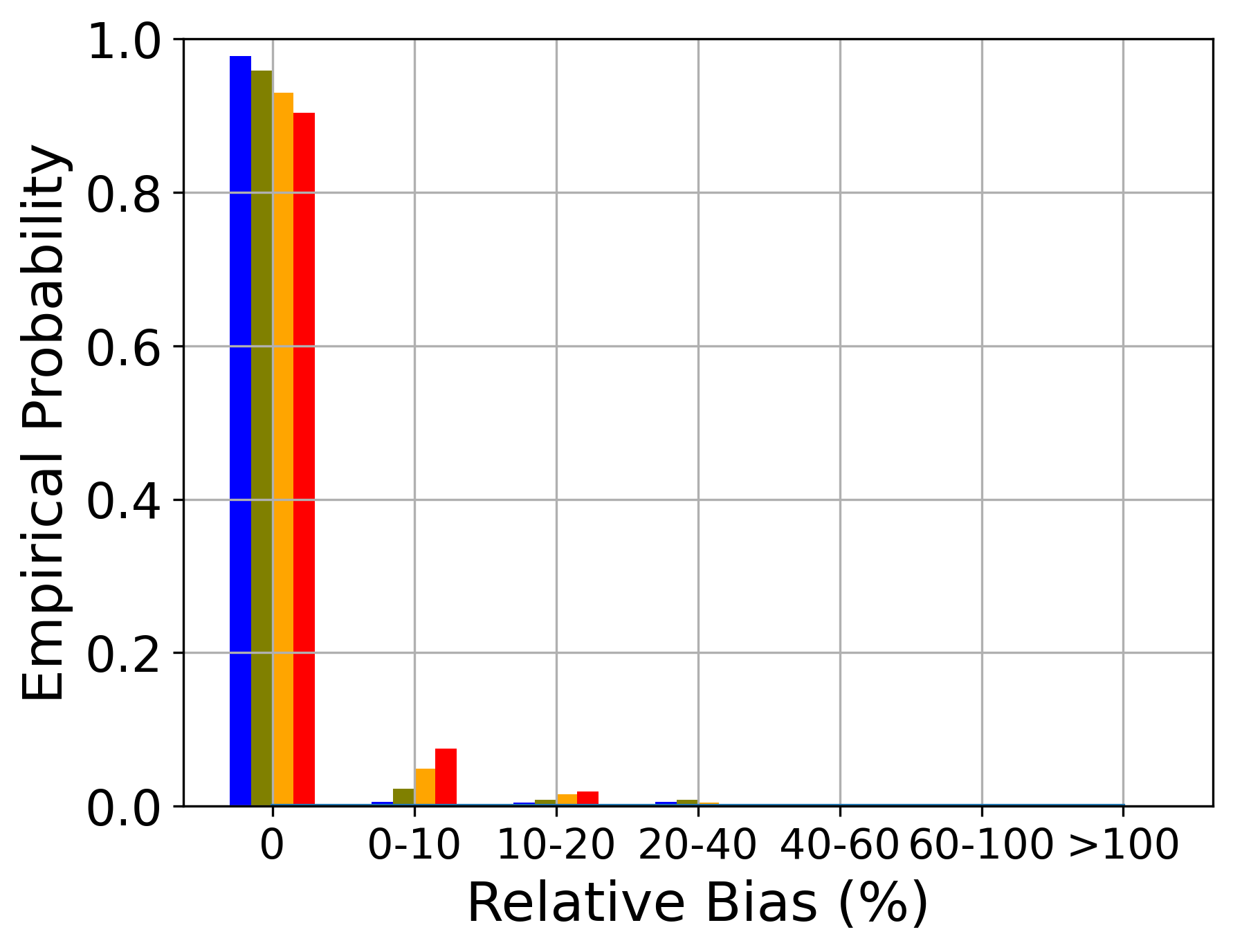}}\hfill
  \subfloat[][\textsf{Std }$100\%$, $\gamma=3$]{\includegraphics[width=.23\textwidth]{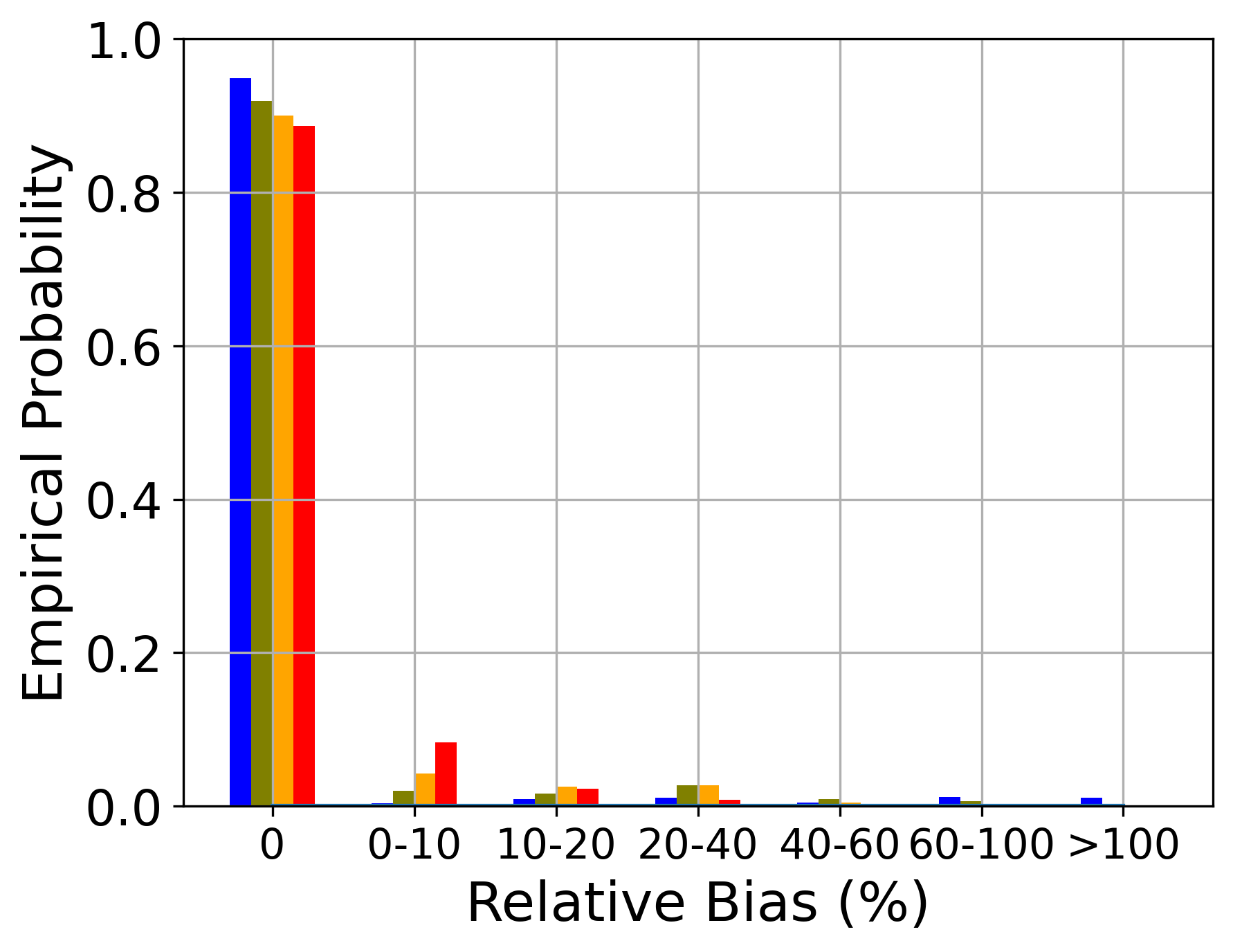}}\hfill
  \subfloat[][\textsf{Std }$200\%$, $\gamma=3$]{\includegraphics[width=.23\textwidth]{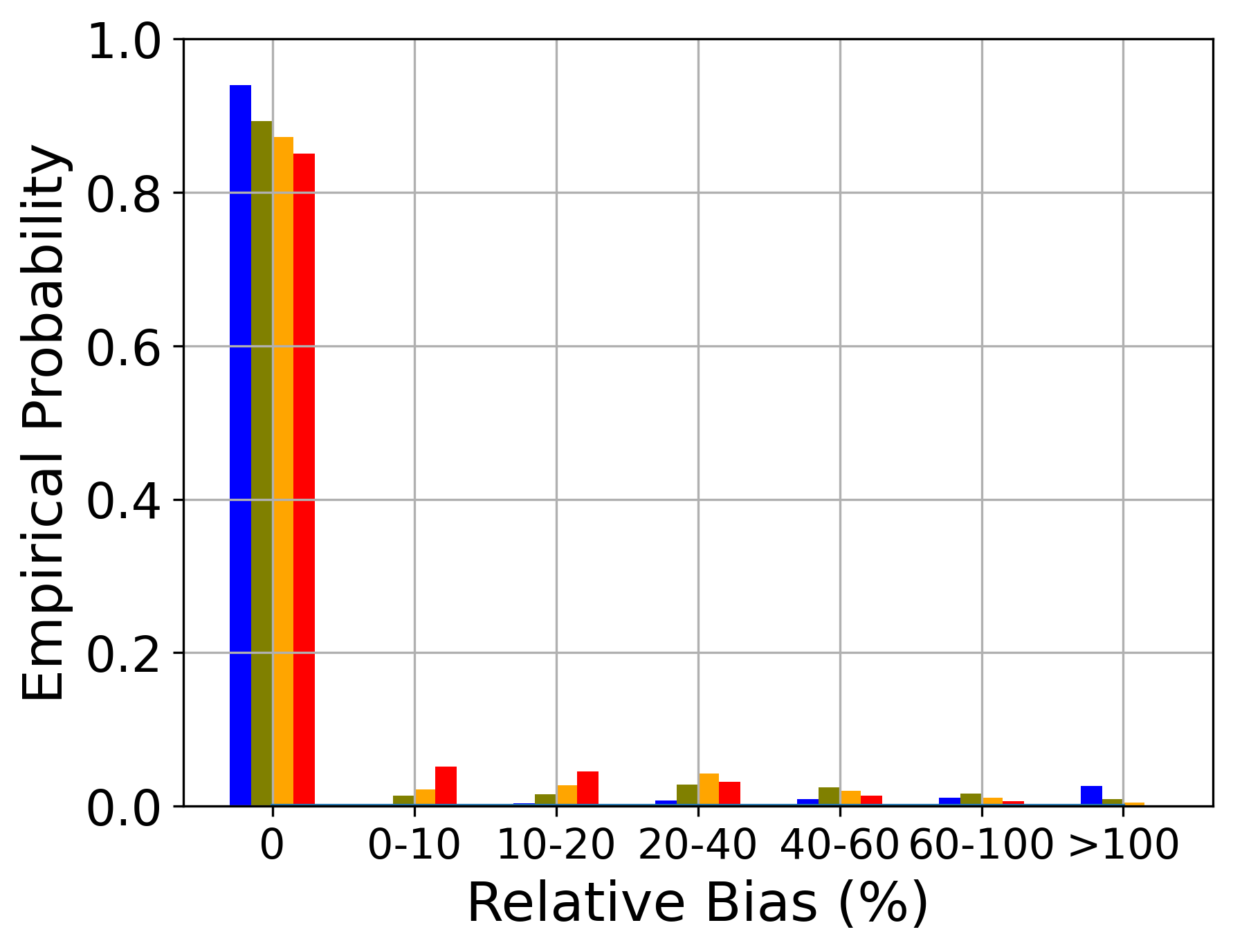}}
  \caption{Results for scale-free graphs generated using a starting $N$ value of $100$. From left to right, each row has results for a fixed power $\gamma$ and different levels of noise ($20\%$, $50\%$, $100\%$ and $200\%$). In each column, we have results at the same variance but different values of $\gamma$.}
  \label{fig:scalefree100}
\end{figure}

Similar to earlier results, higher levels of noise lead to a higher likelihood of incurring large relative bias across all categories of node pairs. 
At low levels of noise, \Qone{} node pairs still continue to be more robust to noise and more likely to remain unchanged compared to their \textsf{Category 4} counterparts (a consequence of Theorem \ref{thm:q_beta}).  

A striking observation is that in the case of scale-free graphs with lower values of $\gamma$ ($\gamma \leq 2$), \Qone{} node pairs are much more likely to incur significant amounts of relative bias ($> 100~\%$) compared to \Qfour{} pairs at moderate to high levels of noise. This is in sharp contrast with the results for our previous two graph classes where, typically, \Qfour{} pairs were \emph{worse-off} due to privacy. This is largely because of graph topology. When $\gamma \leq 2$, the graph has multiple densely connected centers that branch off into tree-like sub-graphs. A large proportion of \Qone{} pairs are located close to the centres and therefore have a large number of path alternatives. The \emph{path cardinality effect} increases their likelihood of incurring high bias. Further, \Qfour{} pairs are predominantly located on either side of connected centres---this means that they have, on average, the same number of path alternatives as their \Qone{} counterparts, but those paths have a high degree of overlap and only diverge near the centre. This causes \Qfour{} pairs to incur the same levels of absolute bias as the \Qone{} pairs, but they incur much smaller levels of relative bias because their paths are longer on average. This trend becomes less significant for $\gamma > 2$ due to change in the graph topology. As $\gamma$ increases, the number of nodes of high degree decrease significantly and the graph becomes less dense and more tree-like (as illustrated in Figure \ref{fig:scalefree100}). As a result, for most node-pairs, there exists a unique path to go from source to destination which explains the low levels of bias incurred across all node categories, i.e., increased robustness to privacy noise. In fact, as $\gamma$ approaches $3$, all node pairs have a greater than $80~\%$ chance of remaining unchanged---these likelihoods increase further and approach $100~\%$ at lower levels of noise.

\section{Discussion \& Future Work}\label{sec:disc}
In this work, we consider the problem of differentially private graph data release for downstream optimization tasks, particularly shortest path computation. We show that the noise introduced for privacy causes perceived shortest paths to shift from the true ones and thus incur positive bias. We provide analytical expressions to exactly compute or closely estimate the probability of incurring bias and infer how properties like how far two nodes are on the graph and how many alternate path choices they have, directly influence the probabilities and introduce disparities across different categories of source-destination pairs in terms of the degree of impact to privacy noise. Finally, we provide rigorous synthetic experiments on different classes of graphs to demonstrate the form and scale of disparities, in each case providing precise explanations from our theory on why such disparities occur. 

In this process, we highlight how different types of networks may face very different bias properties due to differential privacy noise and pre-processing to keep edge weights non-negative. This implies that there is a cautionary tale for planners using DP graph data, noting that design settings may not be re-usable across varying graph topologies and highlighting the importance of taking that topology into account when drawing conclusions. Our study helps with that direction, as it identifies graph properties (like sparsity and degree distribution) that affect and induce robustness to privacy noise and can inform network design for privacy-sensitive applications in the future. 

There are many interesting avenues of future work. For instance, since private graph data release affects shortest path computations as we show here, commuters on a road network may end up getting re-routed through other paths to reach their respective destinations. These effects may aggregate over the network and affect network-level traffic and congestion equilibria. It may also introduce sub-optimality in other types of layered decision problems, e.g., how to add new infrastructure to improve overall network performance. Characterizing these broader effects of privacy on networks is a key future direction.        

\bibliographystyle{plainnat}
\bibliography{mybib,fioretto}

\begin{thebibliography}{37}
\providecommand{\natexlab}[1]{#1}
\providecommand{\url}[1]{\texttt{#1}}
\expandafter\ifx\csname urlstyle\endcsname\relax
  \providecommand{\doi}[1]{doi: #1}\else
  \providecommand{\doi}{doi: \begingroup \urlstyle{rm}\Url}\fi

\bibitem[Apple(2017)]{apple_dp}
Apple.
\newblock Learning with privacy at scale.
\newblock
  \url{https://machinelearning.apple.com/research/learning-with-privacy-at-scale},
  2017.

\bibitem[Bagdasaryan et~al.(2019)Bagdasaryan, Poursaeed, and
  Shmatikov]{bagdasaryan2019differential}
Eugene Bagdasaryan, Omid Poursaeed, and Vitaly Shmatikov.
\newblock Differential privacy has disparate impact on model accuracy.
\newblock \emph{Advances in neural information processing systems}, 32, 2019.

\bibitem[Barocas et~al.(2023)Barocas, Hardt, and
  Narayanan]{barocas2023fairness}
Solon Barocas, Moritz Hardt, and Arvind Narayanan.
\newblock \emph{Fairness and machine learning: Limitations and opportunities}.
\newblock MIT press, 2023.

\bibitem[Bureau(2023)]{us_census}
Census Bureau.
\newblock Why the census bureau chose differential privacy.
\newblock
  \url{https://www.census.gov/library/publications/2023/decennial/c2020br-03.html},
  2023.

\bibitem[Chen et~al.(2023)Chen, Ghazi, Kumar, Manurangsi, Narayanan, Nelson,
  and Xu]{chen2023differentially}
Justin~Y Chen, Badih Ghazi, Ravi Kumar, Pasin Manurangsi, Shyam Narayanan,
  Jelani Nelson, and Yinzhan Xu.
\newblock Differentially private all-pairs shortest path distances: Improved
  algorithms and lower bounds.
\newblock In \emph{Proceedings of the 2023 Annual ACM-SIAM Symposium on
  Discrete Algorithms (SODA)}, pages 5040--5067. SIAM, 2023.

\bibitem[Cummings et~al.(2019)Cummings, Gupta, Kimpara, and
  Morgenstern]{cummings:19}
Rachel Cummings, Varun Gupta, Dhamma Kimpara, and Jamie Morgenstern.
\newblock On the compatibility of privacy and fairness.
\newblock In \emph{Proceedings of the Adjunct Publication of the 27th
  Conference on User Modeling, Adaptation and Personalization (UMAP)}, 2019.

\bibitem[Dwork et~al.(2006)Dwork, McSherry, Nissim, and Smith]{dwork:06}
Cynthia Dwork, Frank McSherry, Kobbi Nissim, and Adam Smith.
\newblock Calibrating noise to sensitivity in private data analysis.
\newblock In \emph{Theory of cryptography conference}, pages 265--284.
  Springer, 2006.

\bibitem[Dwork et~al.(2014)Dwork, Roth, et~al.]{dwork2014algorithmic}
Cynthia Dwork, Aaron Roth, et~al.
\newblock The algorithmic foundations of differential privacy.
\newblock \emph{Foundations and Trends{\textregistered} in Theoretical Computer
  Science}, 9\penalty0 (3--4):\penalty0 211--407, 2014.

\bibitem[Ekstrand et~al.(2018)Ekstrand, Joshaghani, and
  Mehrpouyan]{ekstrand:18}
Michael~D Ekstrand, Rezvan Joshaghani, and Hoda Mehrpouyan.
\newblock Privacy for all: Ensuring fair and equitable privacy protections.
\newblock In \emph{Conference on Fairness, Accountability and Transparency},
  pages 35--47, 2018.

\bibitem[Fioretto et~al.(2022)Fioretto, Tran, Hentenryck, and Zhu]{FHZ:ijcai22}
Ferdinando Fioretto, Cuong Tran, Pascal~Van Hentenryck, and Keyu Zhu.
\newblock Differential privacy and fairness in decisions and learning tasks:
  {A} survey.
\newblock In \emph{International Joint Conference on Artificial Intelligence},
  pages 5470--5477. ijcai.org, 2022.
\newblock \doi{10.24963/ijcai.2022/766}.
\newblock URL \url{https://doi.org/10.24963/ijcai.2022/766}.

\bibitem[Google(2009)]{google_maps}
Google.
\newblock The bright side of sitting in traffic: Crowdsourcing road congestion
  data.
\newblock
  \url{https://googleblog.blogspot.com/2009/08/bright-side-of-sitting-in-traffic.html},
  2009.

\bibitem[Google(2024)]{google_dp}
Google.
\newblock Get information about busy areas from google maps.
\newblock \url{https://support.google.com/maps/answer/11323117?hl=en}, 2024.

\bibitem[Hooker et~al.(2019)Hooker, Dauphin, Courville, and
  Frome]{hooker2019selective}
Sara Hooker, Yann Dauphin, Aaron Courville, and Andrea Frome.
\newblock Selective brain damage: Measuring the disparate impact of model
  pruning.
\newblock 2019.

\bibitem[Hooker et~al.(2020)Hooker, Moorosi, Clark, Bengio, and
  Denton]{hooker2020characterising}
Sara Hooker, Nyalleng Moorosi, Gregory Clark, Samy Bengio, and Emily~L. Denton.
\newblock Characterising bias in compressed models.
\newblock \emph{ArXiv}, abs/2010.03058, 2020.

\bibitem[Jagielski et~al.(2019)Jagielski, Kearns, Mao, Oprea, Roth,
  Sharifi-Malvajerdi, and Ullman]{jagielski:18}
Matthew Jagielski, Michael Kearns, Jieming Mao, Alina Oprea, Aaron Roth, Saeed
  Sharifi-Malvajerdi, and Jonathan Ullman.
\newblock Differentially private fair learning.
\newblock In \emph{International Conference on Machine Learning}, pages
  3000--3008. PMLR, 2019.

\bibitem[Mehrabi et~al.(2021{\natexlab{a}})Mehrabi, Morstatter, Saxena, Lerman,
  and Galstyan]{mehrabi2021survey}
Ninareh Mehrabi, Fred Morstatter, Nripsuta Saxena, Kristina Lerman, and Aram
  Galstyan.
\newblock A survey on bias and fairness in machine learning.
\newblock \emph{ACM Computing Surveys (CSUR)}, 54\penalty0 (6):\penalty0 1--35,
  2021{\natexlab{a}}.

\bibitem[Mehrabi et~al.(2021{\natexlab{b}})Mehrabi, Morstatter, Saxena, Lerman,
  and Galstyan]{survey1}
Ninareh Mehrabi, Fred Morstatter, Nripsuta Saxena, Kristina Lerman, and Aram
  Galstyan.
\newblock A survey on bias and fairness in machine learning.
\newblock \emph{ACM computing surveys (CSUR)}, 54\penalty0 (6):\penalty0 1--35,
  2021{\natexlab{b}}.

\bibitem[Mozannar et~al.(2020)Mozannar, Ohannessian, and
  Srebro]{mozannar2020fair}
Hussein Mozannar, Mesrob~I. Ohannessian, and Nathan Srebro.
\newblock Fair learning with private demographic data.
\newblock In \emph{Proceedings of the 37th International Conference on Machine
  Learning}, 2020.

\bibitem[Nanda et~al.(2021)Nanda, Dooley, Singla, Feizi, and
  Dickerson]{nanda2021fairness}
Vedant Nanda, Samuel Dooley, Sahil Singla, Soheil Feizi, and John~P Dickerson.
\newblock Fairness through robustness: Investigating robustness disparity in
  deep learning.
\newblock In \emph{Proceedings of the 2021 ACM Conference on Fairness,
  Accountability, and Transparency}, pages 466--477, 2021.

\bibitem[NYT(2018)]{nyt1}
NYT.
\newblock Your apps know where you were last night, and they’re not keeping
  it secret.
\newblock
  \url{https://www.nytimes.com/interactive/2018/12/10/business/location-data-privacy-apps.html},
  2018.

\bibitem[NYT(2019)]{nyt2}
NYT.
\newblock Twelve million phones, one dataset, zero privacy.
\newblock
  \url{https://www.nytimes.com/interactive/2019/12/19/opinion/location-tracking-cell-phone.html},
  2019.

\bibitem[Pessach and Shmueli(2022)]{survey2}
Dana Pessach and Erez Shmueli.
\newblock A review on fairness in machine learning.
\newblock \emph{ACM Computing Surveys (CSUR)}, 55\penalty0 (3):\penalty0 1--44,
  2022.

\bibitem[Pujol et~al.(2020)Pujol, McKenna, Kuppam, Hay, Machanavajjhala, and
  Miklau]{pujol:20}
David Pujol, Ryan McKenna, Satya Kuppam, Michael Hay, Ashwin Machanavajjhala,
  and Gerome Miklau.
\newblock Fair decision making using privacy-protected data.
\newblock In \emph{Proceedings of the 2020 Conference on Fairness,
  Accountability, and Transparency}, pages 189--199, 2020.

\bibitem[Sealfon(2016)]{sealfon2016shortest}
Adam Sealfon.
\newblock Shortest paths and distances with differential privacy.
\newblock In \emph{Proceedings of the 35th ACM SIGMOD-SIGACT-SIGAI Symposium on
  Principles of Database Systems}, pages 29--41, 2016.

\bibitem[Tran and Fioretto(2023)]{TF:ijcai23}
Cuong Tran and Ferdinando Fioretto.
\newblock On the fairness impacts of private ensembles models.
\newblock In \emph{International Joint Conference on Artificial Intelligence},
  pages 510--518. ijcai.org, 2023.
\newblock \doi{10.24963/ijcai.2023/57}.
\newblock URL \url{https://doi.org/10.24963/ijcai.2023/57}.

\bibitem[Tran et~al.(2021{\natexlab{a}})Tran, Dinh, and
  Fioretto]{Fioretto:NeurIPS21a}
Cuong Tran, My~Dinh, and Ferdinando Fioretto.
\newblock Differentially private empirical risk minimization under the fairness
  lens.
\newblock In \emph{Advances in Neural Information Processing Systems
  (NeurIPS)}, volume~34, pages 27555--27565. Curran Associates, Inc.,
  2021{\natexlab{a}}.

\bibitem[Tran et~al.(2021{\natexlab{b}})Tran, Fioretto, and
  Hentenryck]{tran2020differentially}
Cuong Tran, Ferdinando Fioretto, and Pascal~Van Hentenryck.
\newblock Differentially private and fair deep learning: {A} lagrangian dual
  approach.
\newblock In \emph{Thirty-Fifth {AAAI} Conference on Artificial Intelligence},
  pages 9932--9939. {AAAI} Press, 2021{\natexlab{b}}.

\bibitem[Tran et~al.(2021{\natexlab{c}})Tran, Fioretto, and {Van
  Hentenryck}]{TFvH:aaai21}
Cuong Tran, Ferdinando Fioretto, and Pascal {Van Hentenryck}.
\newblock Differentially private and fair deep learning: {A} lagrangian dual
  approach.
\newblock In \emph{AAAI Conference on Artificial Intelligence}, pages
  9932--9939. {AAAI} Press, 2021{\natexlab{c}}.
\newblock URL \url{https://ojs.aaai.org/index.php/AAAI/article/view/17193}.

\bibitem[Tran et~al.(2021{\natexlab{d}})Tran, Fioretto, {Van Hentenryck}, and
  Yao]{TFHY:ijcai21}
Cuong Tran, Ferdinando Fioretto, Pascal {Van Hentenryck}, and Zhiyan Yao.
\newblock Decision making with differential privacy under a fairness lens.
\newblock In \emph{International Joint Conference on Artificial Intelligence},
  pages 560--566. ijcai.org, 2021{\natexlab{d}}.
\newblock \doi{10.24963/ijcai.2021/78}.
\newblock URL \url{https://doi.org/10.24963/ijcai.2021/78}.

\bibitem[Tran et~al.(2022{\natexlab{a}})Tran, Fioretto, Kim, and
  Naidu]{TFKN:neurips22}
Cuong Tran, Ferdinando Fioretto, Jung-Eun Kim, and Rakshit Naidu.
\newblock Pruning has a disparate impact on model accuracy.
\newblock In \emph{Advances in Neural Information Processing Systems},
  volume~35. Curran Associates, Inc., 2022{\natexlab{a}}.
\newblock URL \url{https://openreview.net/forum?id=11nMVZK0WYM}.

\bibitem[Tran et~al.(2022{\natexlab{b}})Tran, Zhu, Fioretto, and
  Hentenryck]{Fioretto:ArXiv22d}
Cuong Tran, Keyu Zhu, Ferdinando Fioretto, and Pascal~Van Hentenryck.
\newblock Sf-pate: Scalable, fair, and private aggregation of teacher
  ensembles, 2022{\natexlab{b}}.

\bibitem[Tran et~al.(2023)Tran, Zhu, Fioretto, and Hentenryck]{TZFvH:ijcai23}
Cuong Tran, Keyu Zhu, Ferdinando Fioretto, and Pascal~Van Hentenryck.
\newblock {SF-PATE:} scalable, fair, and private aggregation of teacher
  ensembles.
\newblock In \emph{International Joint Conference on Artificial Intelligence},
  pages 501--509. ijcai.org, 2023.
\newblock \doi{10.24963/ijcai.2023/56}.
\newblock URL \url{https://doi.org/10.24963/ijcai.2023/56}.

\bibitem[Tran et~al.(2024)Tran, Zhu, Hentenryck, and Fioretto]{TZFvH:ijcai24}
Cuong Tran, Keyu Zhu, Pascal~Van Hentenryck, and Ferdinando Fioretto.
\newblock Fairness increases adversarial vulnerability.
\newblock In \emph{International Joint Conference on Artificial Intelligence},
  page TBA. ijcai.org, 2024.
\newblock \doi{10.48550/arXiv.2211.11835}.
\newblock URL \url{https://doi.org/10.48550/arXiv.2211.11835}.

\bibitem[Vice(2020)]{google_privacy}
Vice.
\newblock Six reasons why google maps is the creepiest app on your phone.
\newblock
  \url{https://www.vice.com/en/article/3an84b/six-reasons-why-google-maps-is-the-creepiest-app-on-your-phone},
  2020.

\bibitem[Xu et~al.(2021)Xu, Liu, Li, Jain, and Tang]{xu2021robust}
Han Xu, Xiaorui Liu, Yaxin Li, Anil~K. Jain, and Jiliang Tang.
\newblock To be robust or to be fair: Towards fairness in adversarial training,
  2021.

\bibitem[Zhu et~al.(2021)Zhu, Hentenryck, and Fioretto]{ZHF:aaai21}
Keyu Zhu, Pascal~Van Hentenryck, and Ferdinando Fioretto.
\newblock Bias and variance of post-processing in differential privacy.
\newblock In \emph{AAAI Conference on Artificial Intelligence}, pages
  11177--11184. {AAAI} Press, 2021.
\newblock URL \url{https://ojs.aaai.org/index.php/AAAI/article/view/17333}.

\bibitem[Zhu et~al.(2022)Zhu, Fioretto, and Hentenryck]{ZFH:ijcai22}
Keyu Zhu, Ferdinando Fioretto, and Pascal~Van Hentenryck.
\newblock Post-processing of differentially private data: {A} fairness
  perspective.
\newblock In \emph{International Joint Conference on Artificial Intelligence},
  pages 4029--4035. ijcai.org, 2022.
\newblock \doi{10.24963/ijcai.2022/559}.
\newblock URL \url{https://doi.org/10.24963/ijcai.2022/559}.

\end{thebibliography}

\newpage 

\appendix
\section{Missing Proofs}\label{sec:proofs}
\subsection{Proof of Lemma \ref{lem:prob_2path}}
Note that the wrong path $P'$ can be chosen if and only if $w_{\widetilde G}(P') < w_{\widetilde G}(P^*)$. Therefore, 
\begin{align*}
    q &= \pr \left[ w_{\widetilde G}(P') < w_{\widetilde G}(P^*) \right] \\
    &= \pr \left[ w_{G}(P') + \sum_{e \in P'}Z(e) < w_{G}(P^*) + \sum_{e \in P^*}Z(e)   \right] \\
    &= \pr \left[ \sum_{e \in P'\setminus P^*}Z(e) - \sum_{e \in P^*\setminus P'}Z(e) < w_G(P^*) - w_G(P') \right] \\
    &= \pr \left[ \sum_{e \in P'\setminus P^*}Z(e) - \sum_{e \in P^*\setminus P'}Z(e) < -\alpha_{P',P^*} \right]\\
    &= \pr \left[ \sum_{e \in P'\setminus P^*}Z(e) + \sum_{e \in P^*\setminus P'}Y(e) < -\alpha_{P',P^*} \right].
\end{align*}
In the last step above, we substitute $Y(e) = - Z(e)$ for all $e \in P^*\setminus P'$. Note that $Y(e)$ and $Z(e)$ are identically distributed (because mean-zero Gaussian random variables are symmetric). Since each $Z(e), Y(e) \sim N(0, \sigma^2)$ and they are independent of each other, $\sum_{e \in P'\setminus P^*}Z(e) + \sum_{e \in P^*\setminus P'}Y(e) \sim N(0, \left| S_{P',P^*} \right|\sigma^2)$.
This implies:
\begin{align*}
    q &= \pr\left[ \frac{ \sum_{e \in P'\setminus P^*}Z(e) + \sum_{e \in P^*\setminus P'}Y(e) }{\sigma \sqrt{|S_{P',P^*}|}} < \frac{-\alpha_{P',P^*}}{\sigma \sqrt{|S_{P',P^*}|}} \right] \\
    &= \Phi\left( \frac{-\alpha_{P',P^*}}{\sigma \sqrt{|S_{P',P^*}|}} \right) = \Phi^{c}\left( \frac{\alpha_{P',P^*}}{\sigma \sqrt{|S_{P',P^*}|}} \right).
\end{align*}
The last step invokes the symmetry of a standard normal variable which allows, for any $a > 0$, $\Phi(-a) = \Phi^c(a)$. This concludes the proof of the lemma. 

\subsection{Proof of Theorem \ref{thm:q_beta}}
We can express $q_{\beta}$ as the following probability: 
\begin{align*}
    q_{\beta} &= \pr\left[ \text{shortest path on $\widetilde G$ is $\beta$-worse} \right] \\
    &= \pr\left[ \exists ~P \in \cP_{ij}^{\geq \beta}: w_{\widetilde G}(P) < w_{\widetilde G}(R) ~\forall~ R \in \cP_{ij}\setminus P \right] \\
    &\stackrel{(i)}{=} \sum_{P \in \cP_{ij}^{\geq \beta}} \pr\left[ w_{\widetilde G}(P) < w_{\widetilde G}(R) ~\forall~ R \in \cP_{ij}\setminus P  \right] \\
    &= \sum_{P \in \cP_{ij}^{\geq \beta}} \pr\left[ \bigcap_{R \in \cP_{ij}\setminus P } \{w_{\widetilde G}(P) < w_{\widetilde G}(R) \}   \right].
\end{align*}
The equality in step $(i)$ above follows from the fact that events of the type $\{ w_{\widetilde G}(P) < w_{\widetilde G}(R) ~\forall~ R \in \cP_{ij}\setminus P  \}$ are disjoint since two different paths cannot the best simultaneously (the event that two continuous random variables are equal, occurs with probability $0$). Now, for each $P \in \cP_{ij}^{\geq \beta}$, note that $P^* \in \cP_{ij} \setminus P$. Therefore, we have:
\[
     \pr\left[ \bigcap_{R \in \cP_{ij}\setminus P } \{ w_{\widetilde G}(P) < w_{\widetilde G}(R) \}  \right] \leq \pr\left[ w_{\widetilde G}(P) < w_{\widetilde G}(P^*)  \right] = \Phi^c \left(\frac{ \alpha_{P,P^*}}{\sigma \sqrt{|S_{P,P^*}|}}  \right),
\]
where the last equality follows from Lemma \ref{lem:prob_2path}. It is important to note that we cannot compute the probability of the intersection event in closed form because the individual events are not mutually independent (two paths may have overlapping edges). Summing over all $P \in \cP_{ij}^{\geq \beta}$, we derive the following upper bound: 
\[
       q_{\beta} \leq \sum_{P \in \cP_{ij}^{\geq \beta}}  \Phi^c \left(\frac{\alpha_{P,P^*}}{\sigma \sqrt{|S_{P,P^*}|}}  \right).
\]
Finally, noting that $\alpha_{P,P^*} \geq \beta$ for all $P \in \cP_{ij}^{\geq \beta}$ and from the definition of $S_{max}$, we have:
\[
        \Phi^c \left( \frac{\alpha_{P,P^*}}{\sigma \sqrt{|S_{P,P^*}|}} \right) \leq \Phi^c \left( \frac{\beta}{\sigma \sqrt{S_{max}}} \right) \quad \forall~P \in \cP_{ij}^{\geq \beta}. 
\]
This helps us simplify the upper bound even further and obtain the final result: 
\[
    q_{\beta} \leq \sum_{P \in \cP_{ij}^{\geq \beta}}  \Phi^c \left(\frac{\alpha_{P,P^*}}{\sigma \sqrt{|S_{P,P^*}|}}  \right) \leq \left| \cP_{ij}^{\geq \beta} \right| \cdot \Phi^c\left( \frac{\beta}{\sigma \sqrt{S_{max}}} \right).
\]

\subsection{Proof of Corollary \ref{corr:bound_beta}}
Note that showing $\pr\left[B_{ij} < \sqrt{2} \left(\sigma z^* \sqrt{S}\right) \right] \geq 1-\gamma$ is equivalent to showing that: 
\[
       \pr\left[B_{ij} \geq \sqrt{2} \left(\sigma z^* \sqrt{S}\right) \right] \leq \gamma,
\]
which again, is equivalent to showing $q_{\beta} \leq \gamma$ where $\beta = \sqrt{2} \left(\sigma z^* \sqrt{S}\right)$. Now, recall that we have already shown in Theorem \ref{thm:q_beta} that for any $\beta > 0$, we have:
\[
       q_{\beta} \leq \left| \cP_{ij}^{\geq \beta} \right| \cdot \Phi^c \left( \frac{\beta}{\sigma \sqrt{S_{max}}} \right).  
\]
We can construct a slightly more conservative upper bound on $q_{\beta}$ by noting that $|\cP_{ij}^{\geq \beta}| \leq |\cP_{ij}|$ and $S_{max} \leq 2S$ (in the worst case, all paths in $\cP_{ij}$ have $S$ edges and have no overlapping edges which leads to $S_{max} = 2S$). Therefore, 
\begin{align}\label{eq:rev_bound}
      q_{\beta} \leq \left| \cP_{ij} \right| \cdot \Phi^c \left( \frac{\beta}{\sigma \sqrt{2S}} \right). 
\end{align}
Hence, it is sufficient to show that when $\beta = \sqrt{2} \left(\sigma z^* \sqrt{S}\right)$, the revised upper bound in Equation~\ref{eq:rev_bound} is $\leq \gamma$. This can be verified easily by plugging in the value of $\beta$, as follows:
\begin{align*}
    |\cP_{ij}| \cdot \Phi^c \left( \frac{\beta}{\sigma \sqrt{2S}} \right) &= |\cP_{ij}| \cdot \Phi^c \left( \frac{ \sigma z^* \sqrt{2S}}{ \sigma \sqrt{2S}} \right) \\
    &= |\cP_{ij}| \cdot \Phi^c \left( z^* \right) \\
    &= |\cP_{ij}| \cdot \left(1 - \Phi(z^*) \right) \\
    &= |\cP_{ij}| \cdot \frac{\gamma}{|\cP_{ij}|}\\
    &= \gamma.
\end{align*}
This concludes the proof of the corollary. 

\section{A Special Case: Non-Overlapping Paths}\label{sec:nonoverlap}
\subsection{Exact characterization of $q_{\beta}$} 
We consider a special case where none of the paths in $\cP_{ij}$ have overlapping edges. In this case, we will show that it is possible to derive an exact expression for $q_{\beta}$. 

\begin{corr}\label{corr:non_overlap}
When paths in $\cP_{ij}$ have no overlapping edges, the probability $q_{\beta}$ can be computed exactly and is given by: 
\[
        q_{\beta} = \sum_{P \in \cP_{ij}^{\geq \beta}} \int_{-\infty}^{\infty} \prod_{R \in \cP_{ij}\setminus P } \Phi^{c}\left( \frac{t-w_G(R)}{\sigma \sqrt{n_R}} \right) \phi\left( \frac{t-w_G(P)}{\sigma \sqrt{n_P}} \right) dt.
\]
\end{corr}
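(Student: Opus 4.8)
The plan is to reuse the disjoint decomposition of $q_\beta$ already established in the proof of Theorem~\ref{thm:q_beta} and then exploit the extra structure that non-overlapping paths provide, namely \emph{exact independence} of the perturbed path lengths. Writing $q_\beta$ as the probability that the shortest path on $\widetilde G$ lies in $\cP_{ij}^{\geq\beta}$, and using (as in Theorem~\ref{thm:q_beta}) that at most one path attains the minimum length almost surely, I would start from
\[
q_\beta = \sum_{P \in \cP_{ij}^{\geq \beta}} \pr\left[ w_{\widetilde G}(P) < w_{\widetilde G}(R)~\forall~R \in \cP_{ij}\setminus P \right].
\]
The difference with Theorem~\ref{thm:q_beta} is that here each summand will be computed \emph{exactly} rather than upper bounded.

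The key observation I would establish first is that, when no two paths in $\cP_{ij}$ share an edge, the perturbed lengths $\{w_{\widetilde G}(P)\}_{P\in\cP_{ij}}$ are mutually independent. Indeed, $w_{\widetilde G}(P) = w_G(P) + \sum_{e\in P} Z(e)$ is a deterministic shift of a sum of noise variables over the edges of $P$; since the edge sets are pairwise disjoint and the $Z(e)$ are i.i.d.\ $N(0,\sigma^2)$, distinct paths depend on disjoint, hence independent, collections of noise. In particular each $w_{\widetilde G}(P) \sim N\!\big(w_G(P),\, n_P\sigma^2\big)$. This independence is exactly what the non-overlap hypothesis buys and what fails in the general case, where shared edges correlate the path lengths and force the inequality of Theorem~\ref{thm:q_beta}.

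Next I would evaluate each summand by conditioning on the realized length of $P$. Applying the law of total probability over the value $w_{\widetilde G}(P)=t$ and using independence to factor the joint event $\{w_{\widetilde G}(R) > t~\forall R\neq P\}$ across $R$, each factor becomes $\pr[w_{\widetilde G}(R) > t] = \Phi^c\!\big((t-w_G(R))/(\sigma\sqrt{n_R})\big)$. Weighting this product by the density of $w_{\widetilde G}(P)$ at $t$ and integrating over $t\in\mathbb{R}$ yields
\[
\pr\left[ w_{\widetilde G}(P) < w_{\widetilde G}(R)~\forall R \right] = \int_{-\infty}^{\infty} \prod_{R \in \cP_{ij}\setminus P} \Phi^{c}\!\left( \frac{t-w_G(R)}{\sigma \sqrt{n_R}} \right)\, \phi\!\left( \frac{t-w_G(P)}{\sigma \sqrt{n_P}} \right) dt,
\]
and summing over $P \in \cP_{ij}^{\geq\beta}$ gives the claimed identity.

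The main obstacle is not the calculus but the careful justification of the factorization: one must verify that conditioning on $w_{\widetilde G}(P)=t$ leaves the remaining lengths jointly independent of one another, which is precisely the content of the non-overlap assumption. Integrability and the implicit interchange in the conditioning step are routine, since every integrand is bounded by an integrable Gaussian density. As a consistency check, specializing to $|\cP_{ij}|=2$ should collapse the integral to the path deviation probability of Lemma~\ref{lem:prob_2path}, confirming the result is compatible with the two-path case.
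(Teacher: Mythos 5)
Your proposal is correct and follows essentially the same route as the paper: the same disjoint decomposition of $q_\beta$ over $P \in \cP_{ij}^{\geq \beta}$, followed by conditioning on $w_{\widetilde G}(P)=t$ and factoring the joint event over $R \in \cP_{ij}\setminus P$ using the independence granted by edge-disjointness, with $w_{\widetilde G}(P) \sim N(w_G(P), n_P\sigma^2)$. Your added emphasis on why mutual independence of the perturbed path lengths is exactly what the non-overlap hypothesis provides (and what fails in general) is a sound and slightly more explicit justification of the step the paper states more briefly.
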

\begin{proof}
The proof is similar to Theorem \ref{thm:q_beta}, the only difference being that we can compute the probability of the intersection event in closed form this time. We have already shown that: 
\[
   q_{\beta} = \sum_{P \in \cP_{ij}^{\geq \beta}} \pr\left[ \bigcap_{R \in \cP_{ij}\setminus P } \{w_{\widetilde G}(P) < w_{\widetilde G}(R) \}   \right].
\]
Using a conditioning argument, we can rewrite as follows: 
\begin{align*}
    q_{\beta} &= \sum_{P \in \cP_{ij}^{\geq \beta}} \int_{-\infty}^{\infty} \pr\left[ w_{\widetilde G}(R) > w_{\widetilde G}(P) \quad \forall~R \in \cP_{ij}\setminus P ~|~ w_{\widetilde G}(P) = t  \right]\cdot f_{w_{\widetilde G}(P)}(t)dt \\
    &= \sum_{P \in \cP_{ij}^{\geq \beta}} \int_{-\infty}^{\infty} \pr\left[ w_{\widetilde G}(R) > t \quad \forall~R \in \cP_{ij}\setminus P  \right]\cdot f_{w_{\widetilde G}(P)}(t)dt 
\end{align*}
Note that $w_{\widetilde G}(P) \sim N(w_{G}(P), n_P \sigma^2)$. Also, since no paths in $\cP_{ij}$ overlap, we have an intersection of independent events and therefore, 
\[
     \pr\left[ w_{\widetilde G}(R) > t \quad \forall~R \in \cP_{ij}\setminus P  \right] = \prod_{R \in \cP_{ij}\setminus P} \pr\left[ w_{\widetilde G}(R) > t \right] = \prod_{R \in \cP_{ij}\setminus P} \Phi^{c}\left( \frac{t-w_G(R)}{\sigma \sqrt{n_R}} \right).
\]
Plugging everything back in and substituting the probability density function for $w_{\widetilde G}(P)$, we obtain the final result. 
\end{proof}

\subsection{Comparison of $q_{\beta}$ with bounds from Theorem \ref{thm:q_beta}}
\begin{figure}[!ht]
  \centering
  \raisebox{20pt}{\parbox[b]{.11\textwidth}{}}%
  \subfloat[][]{\includegraphics[width=.40\textwidth]{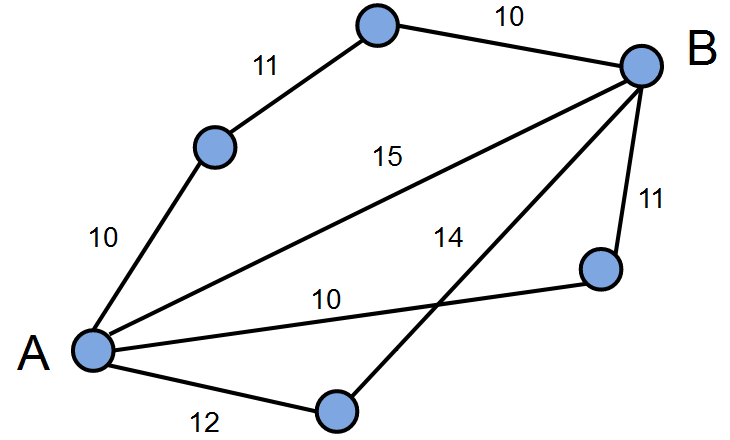}}\hfill
  \subfloat[][]{\includegraphics[width=.55\textwidth]{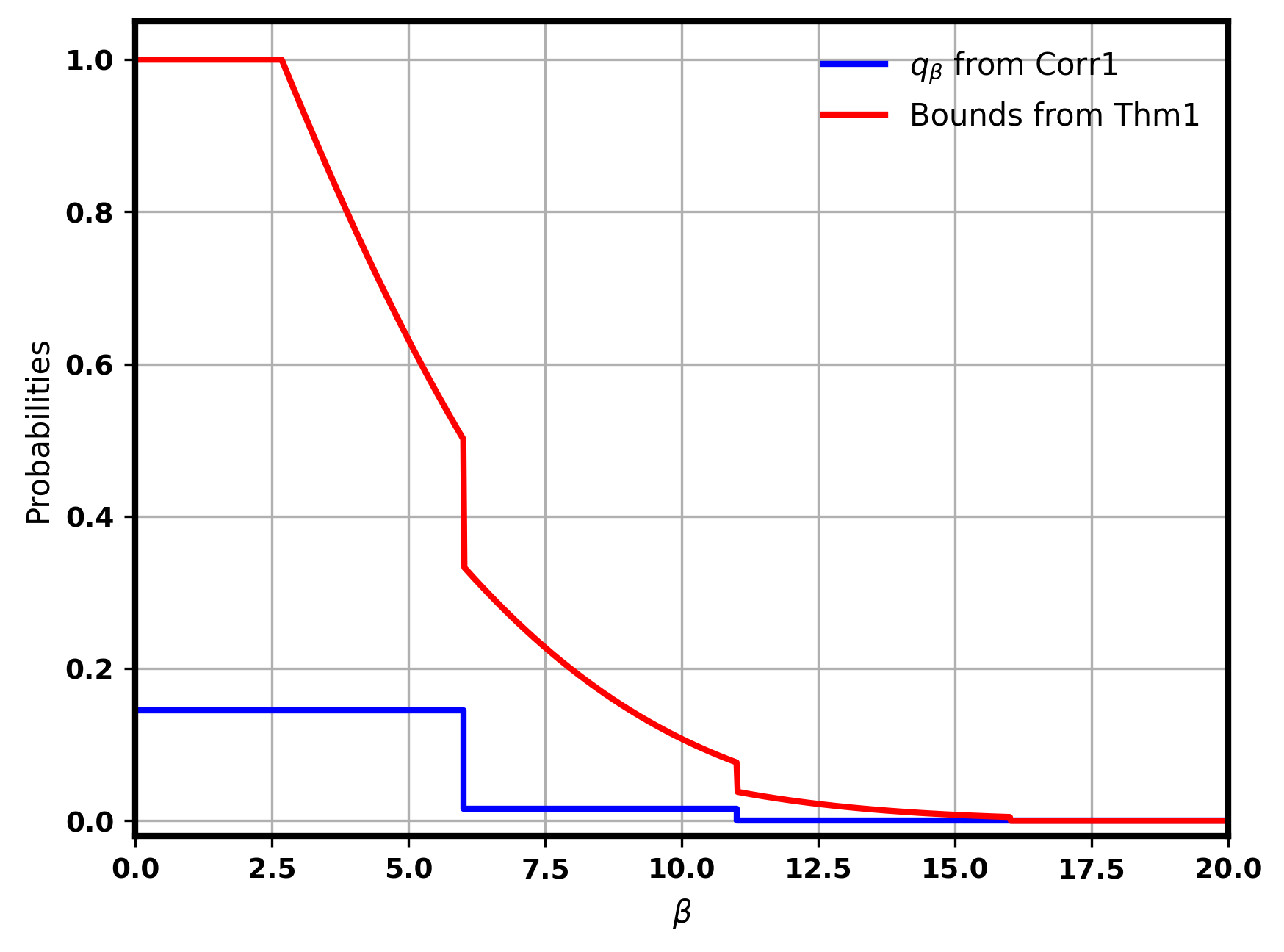}}\par
  \caption{(a) represents a toy graph with $4$ non-overlapping paths between nodes $A$ and $B$. The ground truth edge weights are indicated. For this graph, we compare the exact values of $q_{\beta}$ (given by Corollary \ref{corr:non_overlap}) and the general upper bounds (given by Theorem \ref{thm:q_beta}) in (b). As expected, the upper bounds are conservative. }
  \label{fig:comp_nonoverlap}
\end{figure}
We construct a toy graph with $6$ nodes and $4$ non-overlapping paths between source and destination nodes $A$ and $B$. The ground truth weights for all edges on the graph are indicated in sub-figure (a) above. We set $(\varepsilon, \delta) = (1, 0.01)$ which fixes the standard deviation of the noise $\sigma$. Now in (b), we plot the exact values of $q_{\beta}$ obtained from Corollary \ref{corr:non_overlap} along side the upper bound on $q_{\beta}$ provided by Theorem \ref{thm:q_beta} as a function of the bias $\beta$. At very low values of $\beta$, the upper bound is vacuous, however by the time $\beta \approx 0.5 \times \text{Mean Edge weight}$ (this graph has a mean edge weight of $11.625$), the upper bound begins to approximate the true probability $q_{\beta}$ quite well. This toy example demonstrates that for large graphs and for reasonable values of $\beta$, the upper bound provided in Theorem \ref{thm:q_beta} (which can be computed cheaply) can be used as an estimate for $q_{\beta}$.

\end{document}